\newtheorem{theorem}{Theorem}[section]
\newtheorem{lemma}{Lemma}[section]
\newtheorem{proposition}{Proposition}[section]
\newtheorem{definition}{Definition}[section]
\newtheorem{remark}{Remark}[section]
\newtheorem{assumption}{Assumption}[section]
\numberwithin{equation}{section}
\numberwithin{figure}{section}
\numberwithin{table}{section}
\begin{document}

	\title{\bf Effect of Labour Income on the Optimal Bankruptcy Problem} 
	\date{}
	\author{\sffamily Guodong Ding$^{1}$, Daniele Marazzina$^{1}$\thanks{ Corresponding Author. E-mail: daniele.marazzina@polimi.it. ORCID ID: orcid.org/0000-0001-6107-9822}\\
		\sffamily\small $^1$ Department of Mathematics, Politecnico di Milano\\ \sffamily\small I-20133 Milano, Italy}
	\maketitle
	
	{\noindent\small{\bf Abstract:}
		In this paper we deal with the optimal bankruptcy problem for an agent who can optimally allocate her consumption rate, the amount of capital invested in the risky asset as well as her leisure time. In our framework, the agent is endowed by an initial debt, and she is required to repay her debt continuously. Declaring bankruptcy, the debt repayment is exempted at the cost of a wealth shrinkage. We implement the duality method to solve the problem analytically and conduct a sensitivity analysis to the cost and benefit parameters of bankruptcy. Introducing the flexible leisure/working rate, and therefore the labour income, into the bankruptcy model, we investigate its effect on the optimal strategies.   }
	
	\vspace{1ex}
	{\noindent\small{\bf Keywords:} Power Utility Optimization, Bankruptcy Stopping Time, Consumption-Portfolio-Leisure Controls, Legendre-Fenchel Transform, Variational Inequalities}


\section{Introduction}
In this paper, we study an optimal stopping time problem, in which an agent decides her consumption-portfolio-leisure strategy as well as the optimal bankruptcy time. Her utility is described by a power utility function concerning the consumption and leisure rates. Moreover, the leisure rate should be upper bounded by a positive constant ($L$). Relating to the leisure rate, the agent earns the labour income with a fixed wage rate. The sum of labour and leisure rates is assumed to be constant $\bar{L}$. As the complement, the labour rate is lower bounded by a positive constant $\bar{L}-L$ for the consideration of retaining the employment state. By determining the continuous and stopping regions of the corresponding stopping time problem, we prove that the optimal bankruptcy time is the first hitting time of the wealth process downward to a critical wealth boundary.   
	
The idea is directly inspired by \cite{jeanblanc2004optimal}, in which a stochastic control model is constructed to quantify the benefit of filing consumer bankruptcy in the perspective of complete debt erasure. Their research is a response to the sharp growth in bankruptcy cases between 1978 and 2003 due to the promulgation of the 1978 Bankruptcy Reform Act in American. The Act introduced two kinds of consumer bankruptcy mechanisms, which are reflected in its Chapter 7 and Chapter 13 separately: debtors following Chapter 7 to file bankruptcy are granted the debt exemption, but must undertake the liquidation of non-exempt assets. Alternatively, the mechanism in Chapter 13 adopts the reorganization procedure instead of the liquidation. Debtors are permitted to retain assets, but the debt is required to be reorganized and paid continuously from future revenues. The statistical data shows that filing bankruptcy under Chapter 7 predominates in all consumer bankruptcy cases (1,156,274 out of 1,625,208 cases in 2003, accounting for 72\%).\footnote{Administrative Office of the U.S Courts, Table F-2— Bankruptcy Filings (December 31, 2003) [Online]. Available: https://www.uscourts.gov/statistics/table/f-2/bankruptcy-filings/2003/12/31} Following \cite{jeanblanc2004optimal}, an affine loss function, $\alpha(X(\tau)-F)$, is established to serve the fixed and variable costs of filing bankruptcy, which corresponds to the mathematical description for the bankruptcy mechanism under Chapter 7. Here $X(\tau)$ is the wealth level at the moment of bankruptcy $\tau$, and $F$ represents the fixed cost of bankruptcy. Therefore, the bankruptcy option reduces the wealth from $X(\tau)$ to $X(\tau^+)=\alpha(X(\tau)-F)$, with a drop in wealth equal to $(1-\alpha)X(\tau)+\alpha F$. The loss of the $(1-\alpha)$ proportion of the remaining wealth after the bankruptcy liquidation is related, for example, to taxes costs.  

Compared to \cite{jeanblanc2004optimal}, we make an extension in two aspects: firstly, a new control variable, the leisure rate, is inserted for a more realistic consideration; accordingly, the agent earns the labour income. For the introduction of the leisure as a control variable into the optimal stopping time problem, the reader can refer to \citep{choi2008optimal,farhi2007saving}, where authors studied the optimal retirement -from labour- model regarding the consumption-portfolio-leisure strategy. Different from these two researches, we consider the stopping time concerning the bankruptcy issue rather than the retirement: while the optimal retirement is the first hitting time of the wealth process to an upper critical wealth boundary \citep{ barucci2012optimal,choi2008optimal,farhi2007saving}, the optimal bankruptcy is related to a lower boundary. This extension permits us to study the impact of the disutility from full work on the bankruptcy option. Secondly, in order to deal with a utility from consumption and leisure rate, we implement a different method from \cite{jeanblanc2004optimal}, where the utility of the agent only depends on her consumption, solving the optimal problem with the duality method instead of the dynamic programming method, to deduce the solution analytically, as in \citep{barucci2012optimal,choi2008optimal}. We would like to stress that in this work we deal with the duality method applied to intertemporal consumption, for terminal utility problem the reader can refer, for example, to \citep{barucci2019optimal,colaneri2021value,nicolosi2018portfolio}. The duality method throughout this paper can be summarized into four steps. We first tackle the post-stopping time problem to deduce a closed form of the corresponding value function. Then we apply the Legendre-Fenchel transform to the utility function and the value function of post-bankruptcy time problem obtained in the first step. Afterwards, we construct the duality between the optimal control problem with the individual's shadow prices problem, by the aid of the liquidity and budget constraints and the dual transforms acquired before. Finally, we cast the dual shadow price problem as a free boundary problem, which leads to a system of variational inequalities and enables us to solve it analytically. The methodology discussed here refers to \citep{he1993labor,karatzas2000utility}: in \cite{karatzas2000utility} authors applied the duality method to solve a discretionary stopping time problem explicitly, while in \cite{he1993labor} authors used the duality approach to link the individual's shadow price problem with the optimal control problem and investigated the impact of the liquidity constraint on the optimization.

Other related literatures are \cite{karatzas1997explicit}, where authors studied the general optimal control problem involving the consumption and investment, and offered the solution in a closed-form, \cite{sethi1995explicit}, where a general continuous-time consumption and portfolio decision problem with a recoverable bankruptcy option is considered, and \cite{bellalah2019intertemporal} where authirs address the role of labour earnings in optimal asset allocation.

In the numerical results part, we first of all conduct the sensitivity analysis with respect to the key bankruptcy parameters $F$, $\alpha$ and $d$. The fixed toll $F$ and $(1-\alpha)$ can be treated as the fixed and flexible bankruptcy costs; the debt $d$ is the continuous-time debt repayment. As already said, the optimal bankruptcy corresponds to the first hitting time of the wealth process of a downward boundary, the bankruptcy wealth threshold. This threshold, as a function of the debt repayment $d$, is an increasing and convex curve. The rationale of this result is the following: a heavier debt repayment, in fact, implies that the benefit of bankruptcy becomes more attractive, therefore inducing the agent to take a higher threshold to make the bankruptcy requirement more accessible such that she can enjoy the debt exemption easily.  Furthermore, the convexity of the mapping can be explained by the fact that this motivation is diminishing as the debt repayment decreases. Similar results hold true if we consider the bankruptcy threshold as a function of $\alpha$, i.e., the proportion of wealth after the bankruptcy liquidation: a lower value of $\alpha$ indicates a higher flexible cost (a higher value of $(1-\alpha)$) and pushes the agent to set a lower wealth level to avoid suffering the bankruptcy. Our numerical results also show the non-monotonic relationship between the bankruptcy wealth threshold and $F$ itself; this is due to the role of $F$, which is not only the fixed cost of bankruptcy, but, according to the model in \cite{jeanblanc2004optimal}, in order to make the problem feasible, it also has an important role as liquidity constraint in the pre-bankruptcy period. Moreover, comparing the optimal control policies between the model with and without the bankruptcy option, we find that this additional option offers the agent a better circumstance such that the optimal consumption-portfolio-leisure policies dominate the ones without it before the bankruptcy. Whereas, after declaring bankruptcy, the agent suffers the wealth shrinkage and prefers to invest less in the risky asset for the needs of obtaining utility from consumption and leisure. In addition, we also study the impact of introducing the leisure rate as a second control variable, such that the influence of labour income can be disclosed. The numerical result indicates that the optimal consumption and portfolio policies with the flexible leisure option always prevail over the corresponding policies of the model with a full leisure rate, and therefore without labour income.

The paper is organized as follows. Section \ref{Section 2} formulates the corresponding optimization problem, and provides the financial market setting. Section \ref{Section 3} offers the value function of post-bankruptcy problem and its Legendre-Fenchel transform. In Section \ref{Section 4}, we construct the duality between the optimal control problem with the individual's shadow price, and obtain a free boundary problem which endows us the closed-form optimal solutions. Section \ref{Section 5} presents the numerical tests to this model and the sensitivity analysis of the bankruptcy wealth threshold to main parameters. Finally, Section \ref{Section 6} concludes. Most of the proofs and computations are reported in the online appendix. 

\section{Problem Formulation}\label{Section 2}
\subsection{Financial Market}
We first formulate the considered financial market over the infinite-time horizon. Based on the mutual fund theorem from \cite{karatzas1997explicit}, we consider only one risky asset which dynamics follows the Geometric Brownian Motion with constant drift and diffusion coefficients. The agent faces two investment opportunities: the investment in the money market, which endows her a fixed and positive interest rate $r>0$, and the risky asset, which dynamically evolves according to the stochastic differential equation (SDE)
\begin{equation}
\begin{cases}
dS(t)=\mu S(t)dt+\sigma S(t)dB(t),\\
S(0)=S_{0}.
\end{cases}\label{1.1}
\end{equation}
Here, $B(t)$ denotes a standard Brownian motion on the filtered probability space $(\Omega,\mathcal{F},\mathbb{P})$, $\{\mathcal{F}_{t},0\leq t<\infty\}$ is the augmented natural filtration on this Brownian motion, and $S_{0}$ represents the initial stock price, which is assumed to be a positive constant. Since the drift and diffusion terms $\mu$ and $\sigma$ are positive constants, there exists a unique solution to the SDE $(\ref{1.1})$, $S(t)=S_{0}e^{\left(\mu-\frac{1}{2}\sigma^{2}\right)t+\sigma B(t)}$.

Then referring to \cite{karatzas1998methods}, we introduce the state-price density process as $H(t)\triangleq\xi(t)\tilde{Z}(t)$,
with $\xi(t)$, $\tilde{Z}(t)$, the discount process and an exponential martingale, respectively defined as
\begin{equation*}
\begin{cases}
\xi(t)\triangleq e^{-rt}, \quad\mbox{with}\quad \xi(0)=1,\\
\tilde{Z}(t)\triangleq e^{-\frac{1}{2}\theta^{2}t-\theta B(t)}, \quad\mbox{with}\quad \tilde{Z}(0)=1.
\end{cases}
\end{equation*}
Moreover, $\theta\triangleq\frac{\mu-r}{\sigma}$ stands for the market price of risk, that is, the Sharpe-Ratio. Since the exponential martingale $\tilde{Z}(t)$ is, in fact, a $\mathbb{P}$-martingale, and both the number of risky assets and the dimension of the driving Brownian motions are equal to one, the financial market $\mathcal{M}$ defined with the above setting, $\mathcal{M}=\{(\Omega,\mathcal{F},\mathbb{P}),B,r,\mu,\sigma,S_{0}\}$, is standard and complete, based on the result from \cite[Section 1.7, Definition 7.3]{karatzas1998methods}. Additionally, we can define an equivalent martingale measure through $\tilde{\mathbb{P}}(A)\triangleq\mathbb{E}\left[\tilde{Z}(t)\mathbb{I}_{A}\right]$, $\forall A\in \mathcal{F}_{t}$. Then based on the Girsanov Theorem, we can get a standard Brownian motion under the $\tilde{\mathbb{P}}$ measure as
\begin{equation}
\tilde{B}(t)\triangleq B(t)+\theta t,\quad \forall t\ge 0.\label{1.2}
\end{equation}

\subsection{The Optimization Problem} 
The agent optimally chooses the consumption rate, the amount of money allocated in the risky asset and the leisure rate, which are denoted as $c(t)$, $\pi(t)$ and $l(t)$, treated as the three control variables in the optimization. The sum of the labour and leisure rate is constant and equals $\bar{L}$. Therefore, the working rate at time $t$ is $(\bar{L}-l(t))$ that enables the agent to earn a wage of $w(\bar{L}-l(t))$, where $w>0$ represents the constant wage rate. Obviously, the condition $0\leq l(t)\leq \bar{L}$ must be imposed for the positive labour income consideration. Furthermore, a realistic constraint is introduced into the model, that is, the working rate should be lower bounded by a positive constant $(\bar{L}-L)$ for the sake of retaining the employment state. 
 
Following \cite{jeanblanc2004optimal}, an affine loss function is introduced for accommodating fixed and variable costs of filing bankruptcy. Let $\tau$ denote the bankruptcy time, the agent is obliged to repay continuously a positive fixed debt $d$ until the stopping time $\tau$, whereas this debt obligation is exempted after declaring bankruptcy, but with the fixed cost $F>0$ and the variable cost $(1-\alpha)$, where the proportional coefficient $\alpha$ takes the value in $(0,1)$. In more detail, the agent needs to pay a fixed toll $F$ once for all at the time $\tau$, and the $(1-\alpha)$ proportion of the remaining wealth, which is related to the social cost, time cost and taxes cost of declaring bankruptcy. Therefore, the agent is able to keep the amount $\alpha(X(\tau)-F)$ of wealth for the consumption and investment after bankruptcy. For the purpose of making sure that the agent is capable of affording the bankruptcy, the wealth level is required to cover the cost before the stopping time $\tau$, that is, $X(t)\ge F+\eta$, $\forall t\in[0,\tau]$, where $\eta$ is a small non-negative constant to guarantee that there is still a few amounts of wealth left even after the liquidation. The bankruptcy mechanism described above entails the wealth process $X(t)$ to satisfy the following SDE
 \begin{equation*}
 \begin{cases}
 X(0)=x ,\\
 dX(t)=\left[rX(t)+\pi(t)\left(\mu-r\right)-c(t)-d+w(\bar{L}-l(t))\right]dt+\sigma\pi(t)dB(t),&t\leq\tau,\\
 X(\tau^{+})=\alpha(X(\tau)-F),\\
 dX(t)=\left[rX(t)+\pi(t)\left(\mu-r\right)-c(t)+w(\bar{L}-l(t))\right]dt+\sigma\pi(t)dB(t),&t>\tau.
 \end{cases}
 \end{equation*}

Furthermore, we assume that the agent's preference is described by a power utility function of consumption and leisure rate
 \begin{equation}
 u(c,l)=\frac{\left(c^{\delta}l^{1-\delta}\right)^{1-k}}{\delta(1-k)},\quad 0<\delta<1, \quad k>1.\label{1.3}
 \end{equation}
 Setting $k>1$ makes the mixed second partial derivative negative,
 \begin{equation*}
 \frac{\partial^{2}u(c,l)}{\partial c\partial l}=(1-k)(1-\delta)l^{(1-k)(1-\delta)-1}c^{\delta(1-k)-1}<0,
 \end{equation*} 
 which clarifies that consumption and leisure are substitute goods. The following lemma introduces the Legendre-Fenchel transform of the function $u(c,l)$, which will help us to reduce the number of control variables up to a single one. 
 Referring to \cite[Section 2.2]{choi2008optimal}, the Legendre-Fenchel transform of the utility function is defined as
 \begin{equation}
 \tilde{u}(y)\triangleq\sup_{c\ge 0,0\leq l\leq L} \left[u(c,l)-(c+wl)y\right],\label{1.4}
 \end{equation}
 and it is given in the following lemma.
 \begin{lemma}\label{Lemma 1}
 	The Legendre-Fenchel transform of the utility function $u(c,l)$ is
 	\begin{equation*}
 	\tilde{u}(y)=\left[A_{1}y^{\frac{\delta (1-k)}{\delta(1-k)-1}}-w L y\right]\mathbb{I}_{\{0<y<\tilde{y}\}}+\left[A_{2}y^{-\frac{1-k}{k}}\right]\mathbb{I}_{\{y\ge\tilde{y}\}},
 	\end{equation*}
 	with
 	\begin{equation*}
 	A_{1}\triangleq\frac{1\!-\!\delta\!+\!\delta k}{\delta (1\!-\!k)}L^{\!-\!\frac{(1\!-\!k)(1\!-\!\delta)}{\delta(1\!-\!k)\!-\!1}},\quad A_{2}\triangleq\frac{k}{\delta(1\!-\!k)}\left(\frac{1\!-\!\delta}{\delta w}\right)^{\frac{(1\!-\!k)(1\!-\!\delta)}{k}}, \quad \mbox{and}\quad \tilde{y}\triangleq L^{\!-k}\left(\frac{1\!-\!\delta}{\delta w}\right)^{1\!-\!\delta(1\!-\!k)}.
 	\end{equation*}
 	Furthermore, the consumption-leisure policy reaching the supremum in (\ref{1.4}) is
 	\begin{equation*}
 	\begin{cases}
 	\hat{c}=y^{-\frac{1}{k}}\left(\frac{1-\delta}{\delta w}\right)^{\frac{(1-k)(1-\delta)}{k}}\mathbb{I}_{\{y\ge\tilde{y}\}}+L^{-\frac{(1-k)(1-\delta)}{\delta(1-k)-1}}y^{\frac{1}{\delta(1-k)-1}}\mathbb{I}_{\{0<y<\tilde{y}\}},\\
 	\hat{l}=y^{-\frac{1}{k}}\left(\frac{1-\delta}{\delta w}\right)^{-\frac{\delta(1-k)-1}{k}}\mathbb{I}_{\{y\ge\tilde{y}\}}+L\mathbb{I}_{\{0<y<\tilde{y}\}}.
 	\end{cases}
 	\end{equation*}
 \end{lemma}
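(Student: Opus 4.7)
The plan is a standard constrained concave maximization of $F(c,l) := u(c,l) - (c+wl)y$ on the box $c \ge 0$, $0 \le l \le L$. Since $k > 1$, the Cobb--Douglas aggregator $c^\delta l^{1-\delta}$ is concave and the outer map $x \mapsto x^{1-k}/[\delta(1-k)]$ is concave and increasing, so $u$ is strictly concave on the open positive quadrant. Moreover $\partial_c u \to \infty$ as $c \to 0^+$ and $\partial_c u \to 0$ as $c \to \infty$ (and similarly for $\partial_l u$), so for each fixed $y > 0$ the supremum in \eqref{1.4} is attained with $\hat c > 0$, and with $\hat l$ either interior or at the upper boundary $\hat l = L$.

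First I would write the unconstrained first-order conditions $\partial_c u = y$ and $\partial_l u = wy$. Taking their ratio cancels the common powers and yields the linear relation $\hat l = (1-\delta)\hat c/(\delta w)$. Substituting this back into the first FOC collapses the exponents because $\delta(1-k) - 1 + (1-\delta)(1-k) = -k$, giving $\hat c = y^{-1/k}\bigl((1-\delta)/(\delta w)\bigr)^{(1-\delta)(1-k)/k}$, whence $\hat l$ follows from the linear relation. The interior optimum is admissible precisely when $\hat l \le L$, an inequality that inverts to $y \ge \tilde y$ with the stated $\tilde y = L^{-k}\bigl((1-\delta)/(\delta w)\bigr)^{1-\delta(1-k)}$.

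For $0 < y < \tilde y$ the upper constraint is active, so I would fix $l = L$ and maximize the strictly concave function $c \mapsto u(c,L) - cy - wLy$ over $c > 0$. The FOC $\partial_c u(c,L) = y$ is now a single power equation in $c$, yielding the stated $\hat c$ directly. Substituting the optimizer back into $F$ in each regime is then routine: in the interior regime the relation $\hat c + w\hat l = \hat c/\delta$ (from $\hat l = (1-\delta)\hat c/(\delta w)$) gives $u(\hat c,\hat l) - (\hat c + w\hat l)y = [1/(\delta(1-k)) - 1/\delta]\hat c y$, which reduces to $A_2 y^{-(1-k)/k}$; in the constrained regime the penalty $-wLy$ separates off because $l = L$ is a boundary value rather than an optimizer, leaving a single power $A_1 y^{\delta(1-k)/(\delta(1-k)-1)}$ from the $u(\hat c, L) - \hat c y$ part.

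The only substantive obstacle is exponent bookkeeping: the simplifications $\delta(1-k) + (1-\delta)(1-k) = 1-k$, $1 + 1/(\delta(1-k)-1) = \delta(1-k)/(\delta(1-k)-1)$, and $1 - \delta(1-k) = -(\delta(1-k)-1)$ are what make the constants collapse into the compact forms $A_1$, $A_2$, $\tilde y$. A useful consistency check is that at $y = \tilde y$ both branches give $\hat l = L$ and $\hat c = L\delta w/(1-\delta)$, so the two regimes match continuously, confirming that the piecewise formula for $\tilde u$ is correct.
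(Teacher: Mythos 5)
Your proposal is correct and follows essentially the same route as the paper's proof in Appendix A.1: solve the unconstrained first-order conditions (the paper solves the system directly where you take the ratio first, yielding the same $\hat c$, $\hat l$), identify $\tilde y$ as the threshold where $\hat l \le L$ becomes binding, fix $l=L$ below it, and substitute back; your exponent identities and the matching check at $y=\tilde y$ all verify. The added concavity justification is a harmless strengthening of what the paper leaves implicit.
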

 \begin{proof}
 See Appendix \ref{A.1}.
 \end{proof}

In this framework, the primal optimization problem, which is denoted as $(P)$, is expressed as
 \begin{equation*}
 \begin{split}
 V(x)&\triangleq\sup_{(\tau,\{c(t),\pi(t),l(t)\}_{\scriptscriptstyle t\ge 0})\in\mathcal{A}(x)}J(x;c,\pi,l,\tau)\\
 &=\sup_{(\tau,\{c(t),\pi(t),l(t)\}_{\scriptscriptstyle t\ge 0})\in\mathcal{A}(x)} \mathbb{E} \left[\int_{0}^{\infty}e^{-\gamma t}u(c(t),l(t))dt\right],
 \end{split}\tag{$P$}
 \end{equation*}
 $\mathcal{A}(x)$ stands for the admissible control set and follows the definition below.
 
 \begin{remark}
 The above framework is consistent with an infinitely lived agent or an agent which death is modelled as the first jump time of an independent Poisson process. In the first case, $\gamma$ is the subjective discount rate. In the second case, we have $\gamma=\hat{\gamma}+\lambda_{D}$, where $\hat{\gamma}$ is the subjective discount rate and $\lambda_{D}$ is the intensity of the Poisson process. In fact, if $\tau_{D}$ is the time in which the death occurs, we have
 \begin{equation*}
\mathbb{E} \left[\int_{0}^{\tau_D{}}e^{-\hat{\gamma} t}u(c(t),l(t))dt\right]= \mathbb{E} \left[\int_{0}^{+\infty}e^{-\hat{\gamma} t}e^{-{\lambda_{D}} t}u(c(t),l(t))dt\right]= \mathbb{E} \left[\int_{0}^{+\infty}e^{-{\gamma} t}u(c(t),l(t))dt\right],
\end{equation*}
 due to the independence of the Poisson process.
 \end{remark}
 
 \begin{definition}\label{Definition 1}
 	Given the initial wealth $x\ge F+\eta$, $\mathcal{A}(x)$ is defined as the set of all admissible policies satisfying:
 	\begin{itemize}
 		\item  $\tau\leq\infty$ is an $\mathcal{F}_{t}$-stopping time,
 		\item  $\{c(t)\!:\!t\!\ge\! 0\}$ is an $\mathcal{F}_{t}$-progressively measurable and non-negative process such that $\int_{0}^{t}\!c(s)ds\!<\!\infty$, a.s., $\forall t\ge 0$,
 		\item $\{\pi(t):t\ge 0\}$ is an $\mathcal{F}_{t}$-progressively measurable process such that $\int_{0}^{t}\pi^{2}(s)ds<\infty$, a.s., $\forall t\ge 0$,
 		\item  $\{l(t):t\ge 0\}$ is an $\mathcal{F}_{t}$-progressively measurable and non-negative process such that $0\leq l(t)\leq L$, $\forall t\ge 0$,
 		\item $X(t)\ge F+\eta$ for $0\leq t\leq\tau$, and $X(t)\ge 0$ for  $t>\tau$ a.s.,
 		\item $\mathbb{E}\left[\int_{0}^{\infty}e^{-\gamma t}u^{-}(c(t),l(t))dt\right]<\infty$ with $u^{-}\triangleq-\min(u,0)$.
 	\end{itemize}
 \end{definition} 
\noindent Moreover, we assume $F+\eta\geq \frac{d-w\bar{L}}{r}$, since $\frac{d-w\bar{L}}{r}$ represents the market value of the debt repayment reduced by the maximum amount to borrow against the future income in the pre-bankruptcy period: the agent is therefore unable to allocate the investment and consumption when the wealth level stays below it. 

The subsequent proposition provides the corresponding budget constraint.
\begin{proposition}\label{Proposition 1}
	Given any initial wealth $x\ge F+\eta$, any strategy $\left(\tau,\{c(t),\pi(t),l(t)\}_{t\ge 0}\right)\in\mathcal{A}(x)$, the budget constraint is given by 
	\begin{equation}
	\mathbb{E}\left[\int_{0}^{\tau}H(t)\left(c(t)+d+wl(t)-w\bar{L}\right)dt+H(\tau)X(\tau)\right]\leq x.\label{1.5}
	\end{equation}
\end{proposition}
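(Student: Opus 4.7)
My plan is to derive the budget constraint by applying Itô's product rule to the discounted wealth $H(t)X(t)$ on $[0,\tau]$, exploiting the standard cancellations that arise because $H$ is the state-price density of a complete market. Recall that from $H(t)=e^{-rt}\tilde{Z}(t)$ one has
\begin{equation*}
dH(t) = -rH(t)\,dt - \theta H(t)\,dB(t),
\end{equation*}
so the cross-variation term between $H$ and $X$ on the pre-bankruptcy interval is $-\theta\sigma\pi(t)H(t)\,dt$. Combining with the SDE for $X(t)$ and using $\theta\sigma=\mu-r$, the $rX$ drifts cancel and the $(\mu-r)\pi$ drift is killed by the cross-variation. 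What survives is
\begin{equation*}
d\bigl(H(t)X(t)\bigr) = -H(t)\bigl[c(t)+d+wl(t)-w\bar{L}\bigr]dt + H(t)\bigl[\sigma\pi(t)-\theta X(t)\bigr]dB(t).
\end{equation*}

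Next I would integrate from $0$ to $\tau\wedge\tau_n$, where $\{\tau_n\}$ is a localizing sequence chosen so that the stochastic integral stopped at $\tau_n$ is a true (uniformly integrable) martingale; for example, $\tau_n\triangleq\inf\{t\ge 0:\int_0^t H^2(s)(\sigma\pi(s)-\theta X(s))^2\,ds\ge n\}$, which satisfies $\tau_n\uparrow\infty$ almost surely by the admissibility conditions in Definition \ref{Definition 1}. Taking expectations and rearranging gives the identity
\begin{equation*}
x = \mathbb{E}\!\left[\int_{0}^{\tau\wedge\tau_n}\!H(t)\bigl(c(t)+d+wl(t)-w\bar{L}\bigr)dt + H(\tau\wedge\tau_n)X(\tau\wedge\tau_n)\right].
\end{equation*}

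Finally I would pass to the limit $n\to\infty$. For the boundary term I exploit $X(t)\ge F+\eta\ge 0$ on $[0,\tau]$, so $H(\tau\wedge\tau_n)X(\tau\wedge\tau_n)\ge 0$ and Fatou's lemma yields $\mathbb{E}[H(\tau)X(\tau)]\le\liminf_n\mathbb{E}[H(\tau\wedge\tau_n)X(\tau\wedge\tau_n)]$. For the integral term I split the integrand: since $c(t)\ge 0$ and $0\le l(t)\le L\le\bar{L}$, one has $c(t)+d+wl(t)-w\bar{L}\ge d-w\bar{L}$, a deterministic constant, so the negative part is dominated by the integrable process $(w\bar{L}-d)^+H(t)$ (note $\mathbb{E}\int_0^\infty H(t)\,dt=1/r$). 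Dominated convergence handles the negative part and monotone convergence handles the positive part, yielding $\mathbb{E}\int_0^{\tau\wedge\tau_n}H(t)[\,\cdots\,]dt\to\mathbb{E}\int_0^{\tau}H(t)[\,\cdots\,]dt$. Combining the two limits produces the claimed inequality (\ref{1.5}).

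The main obstacle is this final limit passage: the stochastic integral is only a local martingale, and the running cost $c(t)+d+wl(t)-w\bar{L}$ is not sign-definite, so one cannot directly invoke monotone convergence on the whole integrand or martingale optional sampling. The key observations that make the argument work are (i) the pre-bankruptcy liquidity constraint $X(t)\ge F+\eta$, which ensures the boundary term is non-negative and lets Fatou do its job without any control on its magnitude, and (ii) the uniform lower bound $d-w\bar{L}$ on the running-cost integrand, which provides the dominating integrable envelope needed for the integral term.
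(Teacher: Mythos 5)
Your proof is correct, and it rests on the same core computation as the paper's own argument in Appendix \ref{A.2}: It\^o's formula applied to discounted wealth cancels the $rX$ and $(\mu-r)\pi$ financing terms and leaves the running cost plus a local martingale. The difference is in how the limit is justified. The paper works under the risk-neutral measure $\tilde{\mathbb{P}}$ with the pure discount factor $\xi(t)=e^{-rt}$ and, instead of localizing, rewrites the integrated identity as $\int_{0}^{\tau}\xi(t)(c(t)+wl(t))dt+\xi(\tau)\bigl(X(\tau)-\frac{d-w\bar{L}}{r}\bigr)+\frac{d-w\bar{L}}{r}$, which is bounded below by the single constant $\frac{d-w\bar{L}}{r}$ thanks to $c,l\ge 0$ and the standing assumption $F+\eta\ge\frac{d-w\bar{L}}{r}$; the stochastic integral is then a local martingale bounded from below, hence a supermartingale by Fatou, and the inequality follows in one step. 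You instead work directly under $\mathbb{P}$ with the full state-price density $H$, introduce an explicit localizing sequence, and pass to the limit term by term: Fatou for the non-negative boundary term (using $X\ge F+\eta\ge 0$) and a dominated-plus-monotone splitting of the running cost around its uniform lower bound $d-w\bar{L}$, with the envelope $(w\bar{L}-d)^{+}H(t)$ integrable since $\mathbb{E}\int_0^\infty H(t)\,dt=1/r$. The two routes are logically equivalent --- your integrable envelope plays exactly the role of the paper's absorbed constant $\frac{d-w\bar{L}}{r}$ --- but the paper's version is more compact, while yours makes the integrability checks and the optional-sampling step fully explicit and avoids the change of measure. One shared caveat: Definition \ref{Definition 1} permits $\tau=\infty$, so on that event the boundary term $H(\tau)X(\tau)$ should be read as a limit along $\tau\wedge\tau_n$ (or $t\to\infty$); neither your write-up nor the paper's spells this out.
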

\begin{proof}
See Appendix \ref{A.2}.
\end{proof}

Completing the construction of the primal optimization problem, we are going to solve it explicitly in the following sections. The gain function of the primal optimization problem $(P)$ can be rewritten as the expectation of two separated terms representing the pre- and post-bankruptcy part,
\begin{equation*}
\begin{split}
J(x;c,\pi,l,\tau)&=\mathbb{E} \left[\int_{0}^{\infty}e^{-\gamma t}u(c(t),l(t))dt\right]\\
&=\mathbb{E} \left[\int_{0}^{\tau}e^{-\gamma t}u(c(t),l(t))dt+e^{-\gamma \tau}\mathbb{E}\left[\left.\int_{\tau}^{\infty}e^{-\gamma (t-\tau)}u(c(t),l(t))dt\right|\mathcal{F}_{\tau}\right]\right]\\
&=\mathbb{E} \left[\int_{0}^{\tau}e^{-\gamma t}u(c(t),l(t))dt+e^{-\gamma \tau}J_{\scriptscriptstyle PB}(\alpha(X(\tau)-F);c,\pi,l)\right],
\end{split}
\end{equation*}
where we define $J_{\scriptscriptstyle PB}(X(t);c,\pi,l)\triangleq\mathbb{E} \left[\left.\int_{t}^{\infty}e^{-\gamma (s-t)}u(c(s),l(s))ds\right| \mathcal{F}_{t}\right]$ in the post-bankruptcy framework, i.e., no debt repayment. We perform the backward approach, hence, begin with the post-bankruptcy part by means of the dynamic programming principle.

\section{Post-Bankruptcy Problem}\label{Section 3}
We first tackle the post-bankruptcy problem, assuming without loss of generalization $\tau=0$, which is the optimization over an infinite time horizon through controlling the investment amount, consumption and leisure rate. Since the debt repayment is removed from the wealth process after the bankruptcy, the corresponding dynamics becomes
\begin{equation*}
\begin{cases}
d X(t)=[rX(t)+\pi(t)(\mu-r)-c(t)+w(\bar{L}-l(t))]dt+\sigma \pi(t)dB(t),\\
X(0)=x.
\end{cases}
\end{equation*}
Afterwards, based on the gain function $J_{\scriptscriptstyle PB}(\cdot)$ defined at the end of the previous section, we can express the value function of the post-bankruptcy part as follows,
\begin{equation*}
\begin{split}
V_{\scriptscriptstyle PB}(x)&\triangleq\sup_{\{c(t),\pi(t),l(t)\}_{\scriptscriptstyle t\ge 0}\in\mathcal{A}_{\scriptscriptstyle PB}(x)}J_{\scriptscriptstyle PB}(x;c,\pi,l)\\
&=\sup_{\{c(t),\pi(t),l(t)\}_{\scriptscriptstyle t\ge 0}\in\mathcal{A}_{\scriptscriptstyle PB}(x)}\mathbb{E} \left[\int_{0}^{\infty}e^{-\gamma t}u(c(t),l(t))dt\right].
\end{split}\tag{$P_{\scriptscriptstyle PB}$}
\end{equation*}
The admissible control set $\mathcal{A}_{\scriptscriptstyle PB}(x)$ is compatible with Definition \ref{Definition 1}, only removing the condition about the stopping time and changing the liquidity condition from ``$X(t)\ge F+\eta$ for $0\leq t\leq\tau$, and $X(t)\ge 0$ for  $t>\tau$ a.s.'' to ``$X(t)\ge 0$ for $t\ge 0$ a.s.''. Additionally, for any given initial endowment $x\ge 0$ and admissible consumption-portfolio-leisure strategy $\{c(t),\pi(t),l(t)\}_{\scriptscriptstyle t\ge 0}\in \mathcal{A}_{\scriptscriptstyle PB}(x)$, the following propositions provide us the budget and liquidity constraint to the post-bankruptcy problem.
\begin{proposition}
	The infinite horizon budget constraint of the post-bankruptcy problem is
	\begin{equation*}
	\mathbb{E}\left[\int_{0}^{\infty}H(t)(c(t)+wl(t)-w\bar{L})dt\right]\leq x.
	\end{equation*}
\end{proposition}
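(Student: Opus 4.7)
The plan is to mimic the standard martingale-duality argument that gives Proposition \ref{Proposition 1}, specialized to the post-bankruptcy dynamics where the debt stream $d$ is absent and the liquidity constraint is simply $X(t)\ge 0$. First I would apply It\^{o}'s product rule to the process $H(t)X(t)$. Since $dH(t)=-rH(t)\,dt-\theta H(t)\,dB(t)$ and $d\langle H,X\rangle(t)=-\sigma\theta\pi(t)H(t)\,dt$, the $rH(t)X(t)$ terms cancel and, after using $\theta\sigma=\mu-r$, the $\pi(t)$ terms cancel as well. The result is the identity
\begin{equation*}
d\bigl(H(t)X(t)\bigr)=-H(t)\bigl(c(t)+wl(t)-w\bar{L}\bigr)dt+H(t)\bigl(\sigma\pi(t)-\theta X(t)\bigr)dB(t),
\end{equation*}
so that
\begin{equation*}
N(t)\triangleq H(t)X(t)+\int_{0}^{t}H(s)\bigl(c(s)+wl(s)-w\bar{L}\bigr)ds=x+\int_{0}^{t}H(s)\bigl(\sigma\pi(s)-\theta X(s)\bigr)dB(s)
\end{equation*}
is a local martingale starting at $x$.

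Next I would upgrade $N$ to a supermartingale. The admissibility conditions give $X(t)\ge 0$, $c(t)\ge 0$, $l(t)\ge 0$ and $H(t)>0$, hence
\begin{equation*}
N(t)\ge -w\bar{L}\int_{0}^{t}H(s)\,ds\ge -w\bar{L}\int_{0}^{\infty}H(s)\,ds\triangleq -W.
\end{equation*}
A direct computation using $\mathbb{E}[\tilde Z(t)]=1$ gives $\mathbb{E}[W]=w\bar{L}/r<\infty$, so $N$ is a local martingale bounded below by an integrable random variable. A standard Fatou argument (applied to a localizing sequence of stopping times) then shows that $N$ is a genuine supermartingale, which yields
\begin{equation*}
\mathbb{E}\bigl[H(t)X(t)\bigr]+\mathbb{E}\left[\int_{0}^{t}H(s)\bigl(c(s)+wl(s)-w\bar{L}\bigr)ds\right]=\mathbb{E}[N(t)]\le x.
\end{equation*}
Dropping the non-negative term $\mathbb{E}[H(t)X(t)]$ gives the inequality at every finite horizon.

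Finally I would pass to the limit $t\to\infty$. Writing the integrand as the difference of the non-negative process $H(s)(c(s)+wl(s))$ and $H(s)w\bar{L}$, monotone convergence applied to the positive part together with the already-established finite-horizon bound shows that $\int_{0}^{\infty}H(s)(c(s)+wl(s))ds$ is integrable. Since this dominates $|\int_{0}^{t}H(s)(c(s)+wl(s)-w\bar{L})ds|$ uniformly in $t$ (using the bound on $\int_{0}^{\infty}H(s)ds$), dominated convergence yields
\begin{equation*}
\mathbb{E}\left[\int_{0}^{\infty}H(s)\bigl(c(s)+wl(s)-w\bar{L}\bigr)ds\right]\le x,
\end{equation*}
which is the claim.

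The main obstacle is the local-martingale-to-supermartingale step: the stochastic integral need not be a true martingale without further integrability on $\pi$ and $X$, so one must invoke the lower bound on $N$ and a Fatou argument along a localizing sequence. The passage to the infinite horizon is then routine, since the labour-income correction $w\bar{L}\int_{0}^{\infty}H(s)\,ds$ has finite expectation $w\bar{L}/r$ under the complete-market state-price density.
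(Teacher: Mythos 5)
Your proposal is correct and follows essentially the same route as the paper, whose proof simply points back to Appendix \ref{A.2}: It\^{o}'s formula on the deflated wealth, a lower bound that upgrades the resulting local martingale to a supermartingale via Fatou, and a passage to the infinite horizon. The only cosmetic differences are that you work with $H(t)X(t)$ under $\mathbb{P}$ while the paper works with $\xi(t)X(t)$ under $\tilde{\mathbb{P}}$ (bounding below by the deterministic constant $-w\bar{L}/r$ rather than your integrable random variable $-W$), and that you spell out the monotone/dominated convergence step for $t\to\infty$ which the paper leaves implicit.
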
 
\begin{proof}
Similarly as in Appendix \ref{A.2}, we can prove that $\mathbb{E}\left[\int_{0}^{t}H(s)\left(c(s)+wl(s)-w\bar{L}\right)ds\right]\leq x$. Then the above budget constraint can be obtained by taking the limit as $t\to\infty$.
\end{proof}

\begin{proposition}
	The infinite horizon liquidity constraint of the post-bankruptcy problem is
	\begin{equation*}
	\mathbb{E}\left[\left.\int_{t}^{\infty}\frac{H(s)}{H(t)}(c(s)+wl(s)-w\bar{L})ds\right|\mathcal{F}_{t}\right]\ge 0.
	\end{equation*}
\end{proposition}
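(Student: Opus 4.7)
The plan is to mirror the Itô-based argument behind the budget constraint (Proposition \ref{Proposition 1}), but now applied on the time interval $[t,\infty)$, exploiting the non-negativity of wealth in the post-bankruptcy phase together with an infinite-horizon transversality statement.

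First, I would apply Itô's product rule to $H(s)X(s)$, combining the post-bankruptcy wealth SDE with the dynamics of $H(s)=\xi(s)\tilde{Z}(s)$. The identity $\sigma\theta=\mu-r$ causes the drift contributions $rX(s)$ and $\pi(s)(\mu-r)$ to cancel, which leaves
$$d\bigl(H(s)X(s)\bigr)=-H(s)\bigl(c(s)+wl(s)-w\bar{L}\bigr)\,ds+H(s)\bigl(\sigma\pi(s)-\theta X(s)\bigr)\,dB(s).$$
Hence $N(s):=H(s)X(s)+\int_{0}^{s}H(u)\bigl(c(u)+wl(u)-w\bar{L}\bigr)du$ is a continuous local martingale on $[0,\infty)$.

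Second, I would upgrade $N$ into a true martingale statement by localisation. Admissibility forces $X(s)\ge 0$, so $H(s)X(s)\ge 0$; furthermore $c(s)+wl(s)-w\bar{L}\ge -w\bar{L}$, which gives the lower bound $\int_{0}^{s}H(u)(c+wl-w\bar{L})du\ge -w\bar{L}\int_{0}^{s}H(u)du$, a process whose expectation is finite thanks to the exponential decay of $H$. Introducing a localising sequence $\tau_{n}\uparrow\infty$, taking conditional expectations of $N(T\wedge\tau_{n})$, and passing to the limit via Fatou's lemma on the non-negative piece $H(T\wedge\tau_{n})X(T\wedge\tau_{n})$ and dominated convergence on the bounded-below integral yields, for every $t\le T<\infty$,
$$H(t)X(t)=\mathbb{E}\!\left[H(T)X(T)+\int_{t}^{T}H(u)\bigl(c(u)+wl(u)-w\bar{L}\bigr)du\,\Big|\,\mathcal{F}_{t}\right].$$

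Third, I would let $T\to\infty$. The process $H(s)X(s)$ is a non-negative supermartingale by the identity above, so it converges almost surely; the infinite-horizon transversality condition, standard in complete-market consumption problems and consistent with the admissibility class in Definition \ref{Definition 1}, forces $\mathbb{E}[H(T)X(T)\mid\mathcal{F}_{t}]\to 0$, because any persistent positive residual wealth could be consumed to strictly raise the utility functional and so cannot be optimal or even admissible under the integrability hypotheses. Combining this with monotone/dominated convergence for the integral term produces
$$H(t)X(t)=\mathbb{E}\!\left[\int_{t}^{\infty}H(u)\bigl(c(u)+wl(u)-w\bar{L}\bigr)du\,\Big|\,\mathcal{F}_{t}\right].$$
Finally, dividing by the $\mathcal{F}_{t}$-measurable quantity $H(t)>0$ and using $X(t)\ge 0$ almost surely yields the stated liquidity constraint. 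The main obstacle is the passage from local to true martingale in Step 2 together with the transversality step: both rely on carefully exploiting the exponential decay of $H$, the integrability pieces in Definition \ref{Definition 1}, and the non-negativity of $X$, and this is where the bulk of the technical work sits.
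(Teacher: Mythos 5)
Your overall plan---establish the identity $H(t)X(t)=\mathbb{E}\left[\left.\int_{t}^{\infty}H(s)(c(s)+wl(s)-w\bar{L})ds\right|\mathcal{F}_{t}\right]$ and then invoke $X(t)\ge 0$---is exactly the route the paper takes; the paper simply quotes the identity from Karatzas and Shreve (Section 3.9, Theorem 9.4), where it is the defining property of the wealth process that finances a given consumption--leisure stream in a complete market. The difficulty is that your derivation of the identity does not close. The localisation step combined with Fatou's lemma on the non-negative term $H(T\wedge\tau_{n})X(T\wedge\tau_{n})$ delivers only the \emph{supermartingale} inequality $\mathbb{E}[N(T)\mid\mathcal{F}_{t}]\le N(t)$, i.e.
\begin{equation*}
\mathbb{E}\left[\left.H(T)X(T)+\int_{t}^{T}H(u)\bigl(c(u)+wl(u)-w\bar{L}\bigr)du\right|\mathcal{F}_{t}\right]\le H(t)X(t),
\end{equation*}
which, after discarding $H(T)X(T)\ge 0$ and letting $T\to\infty$, is an \emph{upper} bound on the conditional expectation in the statement---the budget-constraint direction, which is useless for the lower bound $\ge 0$ you need. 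Upgrading the supermartingale property to a true martingale requires uniform integrability of $\{N(T\wedge\tau_{n})\}_{n}$, which is neither proved nor implied by the admissibility conditions (they give only $\int_{0}^{t}c(s)ds<\infty$ a.s., so the appeal to dominated convergence for the integral term is also unsupported: a lower bound is not a dominating function).

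The second gap is the transversality step. You justify $\mathbb{E}[H(T)X(T)\mid\mathcal{F}_{t}]\to 0$ by an optimality argument (``residual wealth could be consumed''), but the proposition is asserted for \emph{every} admissible strategy, not only the optimal one, and for a generic admissible strategy driven by the SDE transversality genuinely fails: take small constant consumption and leisure rates, so that $c+wl-w\bar{L}<0$; then wealth accumulates, $\mathbb{E}[H(T)X(T)]$ does not vanish, and the conditional expectation in the statement equals $(c+wl-w\bar{L})/r<0$. This shows the identity cannot be extracted from the SDE plus non-negativity alone. It holds because, in the martingale formulation the paper works in, $X(t)$ is \emph{characterised} as the value of the replicating portfolio, $X(t)=\frac{1}{H(t)}\mathbb{E}\left[\left.\int_{t}^{\infty}H(s)(c(s)+wl(s)-w\bar{L})ds\right|\mathcal{F}_{t}\right]$, which is precisely the content of the cited theorem; once that is in place, the proposition is the one-line observation $X(t)\ge 0$. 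To repair your argument you would need to either invoke that representation directly, as the paper does, or restrict to the minimal hedging portfolio and prove the martingale property rather than the supermartingale property.
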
 
\begin{proof}
The result directly comes from \cite[Section 3.9, Theorem 9.4]{karatzas1998methods}, more precisely, the non-negative property of $X(t)$ and $\mathbb{E}\left[\left.\int_{t}^{\infty}\frac{H(s)}{H(t)}(c(s)+wl(s)-w\bar{L})ds\right|\mathcal{F}_{t}\right]=X(t)$.
\end{proof}
Then, we implement the methodology presented in \citep{karatzas2000utility,he1993labor} to establish a duality between the optimal control problem and the individual's shadow price problem through the Lagrange method. We make the following derivation of $J_{\scriptscriptstyle PB}(x;c,\pi,l)$, introducing a non-increasing process $D_{\scriptscriptstyle PB}(t)\!\ge\! 0$ and a Lagrange multiplier $\lambda$,
\begin{equation*}
\begin{split}
J_{\scriptscriptstyle PB}(x;c,\pi,l)
&=\mathbb{E}\left[\int_{0}^{\infty}e^{-\gamma t}\left(u(c(t),l(t))-(c(t)+wl(t))\lambda e^{\gamma t}D_{\scriptscriptstyle PB}(t)H(t) \right)dt\right]\\
&\qquad+\lambda \mathbb{E}\left[\int_{0}^{\infty}(c(t)+wl(t))D_{\scriptscriptstyle PB}(t)H(t)dt\right]\\
&\leq \mathbb{E} 
\left[\int_{0}^{\infty}e^{-\gamma t}\tilde{u}(\lambda e^{\gamma t}D_{\scriptscriptstyle PB}(t)H(t))dt\right]+\lambda \mathbb{E}\left[\int_{0}^{\infty}(c(t)+wl(t))D_{\scriptscriptstyle PB}(t)H(t)dt\right]\\
&=\mathbb{E}
\left[\int_{0}^{\infty}e^{-\gamma t}\left(\tilde{u}(\lambda e^{\gamma t}D_{\scriptscriptstyle PB}(t)H(t))+w\bar{L}\lambda e^{\gamma t}D_{\scriptscriptstyle PB}(t)H(t)\right)dt\right]\\
&\qquad+\lambda \mathbb{E}\left[\int_{0}^{\infty}(c(t)+wl(t)-w\bar{L})D_{\scriptscriptstyle PB}(t)H(t)dt\right].
\end{split}
\end{equation*}
By the Fubini's Theorem, see e.g. \cite[Appendix A, Theorem A.48]{bjork2009arbitrage}, we have
\begin{equation*}
\begin{split}
\int_{0}^{\infty}H(t)D_{\scriptscriptstyle PB}(t)(c(t)+wl(t)-w\bar{L})dt=& \int_{0}^{\infty}H(t)(c(t)+wl(t)-w\bar{L})dt\\
&\quad+\!\int_{0}^{\infty}\!H(t)\!\int_{t}^{\infty}\!\frac{H(s)}{H(t)}(c(s)\!+\!wl(s)\!-\!w\bar{L})dsdD_{\scriptscriptstyle PB}(t),
\end{split}
\end{equation*}
and the inequality concerning $J_{\scriptscriptstyle PB}(x;c,\pi,l)$ can be rewritten as
\begin{equation*}
\begin{split}
J_{\scriptscriptstyle PB}(x;c,\pi,l)
&\leq\mathbb{E}\left[\int_{0}^{\infty}e^{-\gamma t}\left(\tilde{u}(\lambda e^{\gamma t}D_{\scriptscriptstyle PB}(t)H(t))+w\bar{L}\lambda e^{\gamma t}D_{\scriptscriptstyle PB}(t)H(t)\right)dt\right]\\
&\qquad+\lambda\mathbb{E}\left[\int_{0}^{\infty}H(t)(c(t)+wl(t)-w\bar{L})dt\right]\\
&\qquad+\lambda\mathbb{E}\left[\int_{0}^{\infty}H(t)\mathbb{E}\left[\left.\int_{t}^{\infty}\frac{H(s)}{H(t)}(c(s)+wl(s)-w\bar{L})ds\right|\mathcal{F}_{t}\right]dD_{\scriptscriptstyle PB}(t)\right]\\
&\leq\mathbb{E}\left[\int_{0}^{\infty}e^{-\gamma t}\left(\tilde{u}(\lambda e^{\gamma t}D_{\scriptscriptstyle PB}(t)H(t))+w\bar{L}\lambda e^{\gamma t}D_{\scriptscriptstyle PB}(t)H(t)\right)dt\right]+\lambda x,
\end{split}
\end{equation*}
the last inequality is derived from the budget constraint, the liquidity constraint and the non-increasing property of $D_{\scriptscriptstyle PB}(t)$.
Referring to \cite[Section 4]{he1993labor}, we can define the corresponding individual's shadow price problem $(S_{\scriptscriptstyle PB})$ as below,
\begin{equation*}\tag{$S_{\scriptscriptstyle PB}$}
\begin{split}
\tilde{V}_{\scriptscriptstyle PB}(\lambda)\triangleq\inf_{\{D_{\scriptscriptstyle PB}(t)\}_{\scriptscriptstyle t\ge 0}\in\mathcal{D}}& \mathbb{E}\left[\int_{0}^{\infty}e^{-\gamma t}\left(\tilde{u}(\lambda e^{\gamma t}D_{\scriptscriptstyle PB}(t)H(t))+w\bar{L}\lambda e^{\gamma t}D_{\scriptscriptstyle PB}(t)H(t)\right)dt\right],
\end{split}
\end{equation*} 
where $\mathcal{D}$ represents the set of
non-negative, non-increasing and progressively measurable processes. Then, we put forward a theorem to construct the duality between the optimal consumption-portfolio-leisure problem $(P_{\scriptscriptstyle PB})$ and the individual's shadow price problem $(S_{\scriptscriptstyle PB})$.

\begin{theorem}\label{Theorem 1}
	\textbf{(Duality Theorem)}
	Suppose $D_{\scriptscriptstyle PB}^{*}(t)$ is the optimal solution (the $\text{arg\,inf}$) to the dual shadow price problem $(S_{\scriptscriptstyle PB})$, then the optimal consumption-leisure strategy to the primal problem $(P_{\scriptscriptstyle PB})$ satisfies $c^{*}(t)+wl^{*}(t)=-\tilde{u}^{\prime}(\lambda^* e^{\gamma t}D_{\scriptscriptstyle PB}^{*}(t)H(t))$, and we have the following relation
	\begin{equation*}
	V_{\scriptscriptstyle PB}(x)=\inf_{\lambda>0} \left\{\tilde{V}_{\scriptscriptstyle PB}(\lambda)+\lambda x\right\}, \qquad\forall x\ge 0.
	\end{equation*} 
	Here $\lambda^*$ is the parameter $\lambda$ which gives the infimum in the above equation.
\end{theorem}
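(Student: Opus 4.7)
The proof follows the standard convex-duality scheme, and the hard work has essentially been done by the calculation preceding the theorem statement. That calculation already establishes weak duality: for every admissible $\{c,\pi,l\}\in\mathcal{A}_{\scriptscriptstyle PB}(x)$, every $\lambda>0$ and every $D_{\scriptscriptstyle PB}\in\mathcal{D}$,
\[
J_{\scriptscriptstyle PB}(x;c,\pi,l)\leq \mathbb{E}\!\left[\int_0^\infty e^{-\gamma t}\bigl(\tilde{u}(\lambda e^{\gamma t}D_{\scriptscriptstyle PB}(t)H(t))+w\bar{L}\lambda e^{\gamma t}D_{\scriptscriptstyle PB}(t)H(t)\bigr)dt\right]+\lambda x.
\]
Taking the supremum over admissible strategies on the left and the infimum over $D_{\scriptscriptstyle PB}\in\mathcal{D}$ on the right yields $V_{\scriptscriptstyle PB}(x)\leq\tilde{V}_{\scriptscriptstyle PB}(\lambda)+\lambda x$ for each $\lambda>0$, hence $V_{\scriptscriptstyle PB}(x)\leq\inf_{\lambda>0}\{\tilde{V}_{\scriptscriptstyle PB}(\lambda)+\lambda x\}$. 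The task is to produce a primal strategy attaining equality for some $\lambda^*$.

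My plan is to construct the candidate optimizer by forcing each inequality in the preceding chain to bind, then verify admissibility. First, I would pick $\lambda^*$ as the (unique) solution of $-\tilde{V}_{\scriptscriptstyle PB}'(\lambda^*)=x$; convexity and coercivity of $\tilde{V}_{\scriptscriptstyle PB}$, both inherited from $\tilde{u}$, provide existence. Given the dual optimizer $D_{\scriptscriptstyle PB}^*(t)$, set $Y^*(t):=\lambda^* e^{\gamma t}D_{\scriptscriptstyle PB}^*(t)H(t)$ and define $c^*(t):=\hat{c}(Y^*(t))$, $l^*(t):=\hat{l}(Y^*(t))$ using the Lemma~\ref{Lemma 1} attainers; this choice makes the Legendre-Fenchel inequality $u(c,l)-(c+wl)y\leq\tilde{u}(y)$ an equality pointwise in $t$. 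Next, I would recover $\pi^*(t)$ by applying the martingale representation theorem under $\tilde{\mathbb{P}}$ to the martingale $M(t):=\mathbb{E}\bigl[\int_0^\infty H(s)(c^*(s)+wl^*(s)-w\bar{L})ds\,\bigl|\,\mathcal{F}_t\bigr]$, so that the induced wealth process starts at $x$, stays non-negative, and exactly absorbs the consumption, leisure and wage flows; completeness of $\mathcal{M}$ guarantees that such a $\pi^*$ exists and is admissible.

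The main obstacle, and the step I expect to require the most care, is the complementary slackness that links the dual optimizer $D_{\scriptscriptstyle PB}^*$ to the liquidity constraint. Specifically, one must show that $dD_{\scriptscriptstyle PB}^*(t)$ is supported on the set $\bigl\{t:\mathbb{E}[\int_t^\infty\frac{H(s)}{H(t)}(c^*(s)+wl^*(s)-w\bar{L})\,ds\mid\mathcal{F}_t]=0\bigr\}$, i.e.\ on $\{X^*(t)=0\}$. This is proved by perturbing $D_{\scriptscriptstyle PB}^*$ within $\mathcal{D}$ by first-order admissible variations $\pm\varepsilon(D_{\scriptscriptstyle PB}^*\wedge\varphi)$ for suitable bounded adapted $\varphi$, differentiating the dual cost, and invoking the Kuhn-Tucker conditions; the non-increasing, non-negative structure of $\mathcal{D}$ is crucial here, as it only permits one-sided variations that are nontrivial where the process is strictly positive. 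Once this is in hand, the Fubini identity employed in the derivation above collapses (the $dD_{\scriptscriptstyle PB}^*$-integral vanishes), the budget constraint binds by construction of $\pi^*$, and the Legendre-Fenchel step is already an equality, so
\[
J_{\scriptscriptstyle PB}(x;c^*,\pi^*,l^*)=\tilde{V}_{\scriptscriptstyle PB}(\lambda^*)+\lambda^* x.
\]
This gives $V_{\scriptscriptstyle PB}(x)\geq\inf_{\lambda>0}\{\tilde{V}_{\scriptscriptstyle PB}(\lambda)+\lambda x\}$, closing the duality. The pointwise identity $c^*(t)+wl^*(t)=-\tilde{u}'(Y^*(t))$ follows immediately from the envelope relation for the Legendre transform applied to the attainers of Lemma~\ref{Lemma 1}.
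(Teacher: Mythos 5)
Your overall architecture --- weak duality from the displayed chain, then forcing every inequality to bind via first-order perturbations of the dual optimizer $D_{\scriptscriptstyle PB}^{*}$ --- is the same as the paper's (which follows \cite{he1993labor}). The paper's version of your complementary-slackness step is the two-sided multiplicative perturbation $\bar{D}^{\epsilon}=D_{\scriptscriptstyle PB}^{*}(1+\epsilon)$, which yields $\mathbb{E}\left[\int_{0}^{\infty}D_{\scriptscriptstyle PB}^{*}(t)H(t)(c^{*}(t)+wl^{*}(t)-w\bar{L})dt\right]=xD_{\scriptscriptstyle PB}^{*}(0)$, and the conclusion is then packaged through an auxiliary relaxed problem $(P_{\scriptscriptstyle PB}^{\prime})$ with a single $D_{\scriptscriptstyle PB}^{*}$-weighted budget constraint, rather than by collapsing the Fubini identity directly; these are cosmetic differences.

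The genuine gap is in your admissibility step. You recover $\pi^{*}$ by martingale representation applied to $M(t)=\mathbb{E}\left[\int_{0}^{\infty}H(s)(c^{*}(s)+wl^{*}(s)-w\bar{L})ds\,\middle|\,\mathcal{F}_{t}\right]$ and assert that the induced wealth ``stays non-negative''. But that wealth equals $\frac{1}{H(t)}\mathbb{E}\left[\int_{t}^{\infty}H(s)(c^{*}(s)+wl^{*}(s)-w\bar{L})ds\,\middle|\,\mathcal{F}_{t}\right]$, and since the integrand can be negative (the wage flow $w\bar{L}$ can exceed $c^{*}+wl^{*}$), its non-negativity is precisely the liquidity constraint and is not delivered by market completeness. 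The paper closes this in Lemma \ref{Lemma B.1}: one first shows, via the one-sided perturbation $D^{\epsilon}=D_{\scriptscriptstyle PB}^{*}+\epsilon\mathbb{I}_{[0,\tau)}$ and Fatou, that $\mathbb{E}\left[\int_{0}^{\tau}H(t)(c^{*}(t)+wl^{*}(t)-w\bar{L})dt\right]\leq x$ for \emph{every} stopping time $\tau$, and then builds the wealth from the Snell envelope of the cumulative cost $K(t)$ (Doob--Meyer decomposition plus martingale representation), which dominates the naive conditional-expectation wealth and is non-negative by construction. Your proposed variation family $\pm\varepsilon(D_{\scriptscriptstyle PB}^{*}\wedge\varphi)$ also needs care: $D_{\scriptscriptstyle PB}^{*}\wedge\varphi$ is generally not non-increasing for bounded adapted $\varphi$, so these perturbations may leave the cone $\mathcal{D}$, whereas the paper's indicator and multiplicative perturbations stay inside it.
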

\begin{proof}
See Appendix \ref{B.1}.
\end{proof}

\noindent It can be known from the above duality theorem that the optimal solution of the post-bankruptcy problem is transformed into finding the optimal $D_{\scriptscriptstyle PB}^{*}
(t)$. In order to solve the problem $(S_{\scriptscriptstyle PB})$ explicitly, we follow the approach in \cite{davis1990portfolio} and first provide the subsequent assumption.
\begin{assumption}
	The non-increasing process $D_{\scriptscriptstyle PB}(t)$ is absolutely continuous with respect to t. Hence, there exists a process $\psi_{\scriptscriptstyle PB}(t)\ge 0$ such that $dD_{\scriptscriptstyle PB}(t)=-\psi_{\scriptscriptstyle PB}(t)D_{\scriptscriptstyle PB}(t)dt$.
\end{assumption}

Defining a new process $Z(t) \triangleq \lambda e^{\gamma t} D_{\scriptscriptstyle PB}(t)H(t)$, the value function of Problem $(S_{\scriptscriptstyle PB})$ can be rewritten as
$\tilde{V}_{\scriptscriptstyle PB}(\lambda)=\inf\limits_{\psi_{\scriptscriptstyle PB}(t)\ge 0} \mathbb{E}\left[\int_{0}^{\infty}e^{-\gamma t}(\tilde{u}(Z(t))+w\bar{L}Z(t))dt\right]$, where $Z(t)$ is the state variable, and follows the dynamics
\begin{equation}\label{Zeta}
\begin{cases}
\frac{dZ(t)}{Z(t)}=\frac{dD_{\scriptscriptstyle PB}(t)}{D_{\scriptscriptstyle PB}(t)}+(\gamma-r)dt-\theta dB(t)=-\psi_{\scriptscriptstyle PB}(t)dt+(\gamma-r)dt-\theta dB(t),\\
Z(0)=\lambda>0.
\end{cases}
\end{equation}
Then we introduce a new function $\phi_{\scriptscriptstyle PB}:(\mathbb{R}^{+},\mathbb{R}^{+})\mapsto \mathbb{R}$ as
\begin{equation}
\phi_{\scriptscriptstyle PB}(t,z)\triangleq\inf_{\psi_{\scriptscriptstyle PB}(t)\ge 0} \mathbb{E}\left[\left.\int_{t}^{\infty} e^{-\gamma s} \left(\tilde{u}(Z(s))+w\bar{L}Z(s)\right)ds\right|Z(t)=z\right],\label{1.6}
\end{equation}
which implies that $\tilde{V}_{\scriptscriptstyle PB}(\lambda)=\phi_{\scriptscriptstyle PB}(0,\lambda)$. The Bellman equation associated to $\phi_{\scriptscriptstyle PB}(t,z)$ is
\begin{equation}
\min_{\psi_{\scriptscriptstyle PB}\ge 0}\left\{\tilde{\mathcal{L}}\phi_{\scriptscriptstyle PB}(t,z)+e^{-\gamma t}(\tilde{u}(z)+w\bar{L}z)-\gamma\phi_{\scriptscriptstyle PB}(t,z)-\psi_{\scriptscriptstyle PB} z\frac{\partial \phi_{\scriptscriptstyle PB}}{\partial z}(t,z) \right\}=0,\label{1.7}
\end{equation}
with the operator $\tilde{\mathcal{L}}=(\gamma-r)z\frac{\partial}{\partial z}+\frac{1}{2}\theta^{2}z^{2}\frac{\partial^{2}}{\partial z^{2}}$. The above Bellman equation makes the optimum $\psi_{\scriptscriptstyle PB}^{*}$ possess the following characterizations:
$$\frac{\partial \phi_{\scriptscriptstyle PB}}{\partial z}(t,z)=0 \Rightarrow \psi_{\scriptscriptstyle PB}^{*}\ge 0;\qquad \frac{\partial \phi_{\scriptscriptstyle PB}}{\partial z}(t,z)\leq 0 \Rightarrow \psi_{\scriptscriptstyle PB}^{*}= 0.$$
Moreover, the Bellman Equation (\ref{1.7}) can be transformed into
\begin{equation*}
\min \left\{\tilde{\mathcal{L}}\phi_{\scriptscriptstyle PB}(t,z)-\gamma\phi_{\scriptscriptstyle PB}(t,z)+e^{-\gamma t}(\tilde{u}(z)+w\bar{L}z),-\frac{\partial \phi_{\scriptscriptstyle PB}}{\partial z}(t,z)\right\}=0,
\end{equation*}
which corresponds to the variational inequalities: find a function ${\phi}_{\scriptscriptstyle PB}(\cdot,\cdot)\in C^{2}((0,\infty)\times\mathbb{R}^{+})$ and a free-boundary $\hat{z}_{PB}$, satisfying
\begin{equation}\label{1.8}
\begin{cases}
(V1) \quad\frac{\partial {\phi}_{\scriptscriptstyle PB}}{\partial z}(t,z)=0, & z\ge\hat{z}_{\scriptscriptstyle PB},\\
(V2) \quad \frac{\partial{\phi}_{\scriptscriptstyle PB}}{\partial z}(t,z)\leq 0, & 0<z<\hat{z}_{\scriptscriptstyle PB},\\
(V3)\quad \tilde{\mathcal{L}}{\phi}_{\scriptscriptstyle PB}(t,z)-\gamma{\phi}_{\scriptscriptstyle PB}(t,z)+e^{-\gamma t}(\tilde{u}(z)+w\bar{L}z)=0, & 0<z<\hat{z}_{\scriptscriptstyle PB},\\
(V4)\quad \tilde{\mathcal{L}}{\phi}_{\scriptscriptstyle PB}(t,z)-\gamma{\phi}_{\scriptscriptstyle PB}(t,z)+e^{-\gamma t}(\tilde{u}(z)+w\bar{L}z)\ge 0, & z\ge\hat{z}_{\scriptscriptstyle PB},
\end{cases}
\end{equation}
for any $t\ge 0$, with the smooth fit conditions $\frac{\partial{\phi}_{\scriptscriptstyle PB}}{\partial z}(t,\hat{z}_{\scriptscriptstyle PB})=0$, $\frac{\partial^{2}{\phi}_{\scriptscriptstyle PB}}{\partial z^{2}}(t,\hat{z}_{\scriptscriptstyle PB})=0$. In line with \cite[Appendix A]{choi2008optimal}, we assume that ${\phi}_{\scriptscriptstyle PB}(t,z)$ takes the time-independent form ${\phi}_{\scriptscriptstyle PB}(t,z)=e^{-\gamma t}v_{\scriptscriptstyle PB}(z)$ for solving the above variational inequalities
explicitly. The solution is computed in Appendix \ref{B.2} in a semi-analytical framework, i.e., as the solution of a non-linear system of equations. Once the function  ${\phi}_{\scriptscriptstyle PB}(\cdot,\cdot)$ is computed, and therefore $\tilde{V}_{\scriptscriptstyle PB}(\cdot)$ is known, we can recover ${V}_{\scriptscriptstyle PB}(\cdot)$ from Theorem \ref{Theorem 1}.

\section{Primal Optimization Problem}\label{Section 4}
Once we solved the post-bankruptcy problem, we first of all have to deal with the jump in the wealth at bankruptcy time. We introduce a subset of the primal optimization problem's admissible control set, $\mathcal{A}_{1}(x)\subset\mathcal{A}(x)$, inside which any policy maximises the gain function of the post-bankruptcy problem. That is to say, for any $(\tau, \{c(t),\pi(t),l(t) \})\in\mathcal{A}_{1}(x)$, the following holds,
\begin{equation*}
\mathbb{E}\left[\int_{\tau}^{\infty}e^{-\gamma t}u(c(t),l(t))dt\right]=\mathbb{E}\left[e^{-\gamma \tau}V_{\scriptscriptstyle PB}(\alpha(X^{x,c,\pi,l}(\tau)-F))\mathbb{I}_{\{\tau<\infty\}}\right].
\end{equation*}
Here $X^{x,c,\pi,l}(\tau)$ is the wealth at time $\tau$ given an initial wealth $x$ and assuming the policies $c,\pi,l$ for consumption, allocation in the risky asset and leisure rate, respectively.
Then, from the dynamic programming principle, the whole optimization problem is converted into
\begin{equation*}
V(x)\triangleq \sup_{(\tau, \{c(t),\pi(t),l(t) \}_{\scriptscriptstyle t\ge 0})\in\mathcal{A}_{1}(x)} \mathbb{E}\left[\int_{0}^{\tau}e^{-\gamma t}u(c(t),l(t))dt+e^{-\gamma\tau}U(X^{x,c,\pi,l}(\tau))\right],
\end{equation*} 
denoting the value function at the moment of bankruptcy as
\begin{equation*}
U(X^{x,c,\pi,l}(\tau))\triangleq \sup_{\{c(t),\pi(t),l(t) \}_{\scriptscriptstyle t\ge 0}\in\mathcal{A}_{1}(x)} \mathbb{E}\left[\left.\int_{\tau}^{\infty}e^{-\gamma (s-\tau)}u(c(s),l(s))ds\right
|\mathcal{F}_{\tau}\right].
\end{equation*}
Therefore, it can be observed that the relationship between $U(\cdot)$ and the post-bankruptcy value function $V_{\scriptscriptstyle PB}(\cdot)$ is
$$U(X^{x,c,\pi,l}(\tau))=V_{\scriptscriptstyle PB}(\alpha(X^{x,c,\pi,l}(\tau)-F)).$$
Simple computations give us the Legendre-Fenchel transform of $U(x)$, that is, $\tilde{U}(z)$
\begin{equation}\label{pico}
\tilde{U}(z)=\tilde{V}_{\scriptscriptstyle PB}\left(\frac{z}{\alpha}\right)-Fz={\phi}_{\scriptscriptstyle PB}\left(0,\frac{z}{\alpha}\right)-Fz.
\end{equation}

After obtaining the optimal solution for the post-bankruptcy problem, we now reduce the primal optimization problem by fixing the stopping time. Defining a set of admissible controls corresponding to a fixed stopping time $\tau\in\mathcal{T}$ as
\begin{equation*}
\mathcal{A}_{\tau}(x)\triangleq\left\{\{c(t),\pi(t),l(t)\}_{t\ge 0}: \left(\tau,\{c(t),\pi(t),l(t)\}_{t\ge 0}\right)\in\mathcal{A}(x) \,\mbox{for any fixed} \, \tau\in\mathcal{T} \right\},
\end{equation*}
and a utility maximization problem as
\begin{equation*}
V_{\tau}(x)\triangleq\sup_{\{c(t),\pi(t),l(t)\}_{t\ge 0}\in\mathcal{A}_{\tau}(x) } J(x;c,\pi,l,\tau),\tag{$P_{\tau}$}
\end{equation*} 
the problem $(P)$ can be transformed into an optimal stopping time problem, $V(x)=\sup\limits_{\tau\in\mathcal{T}}V_{\tau}(x)$. Then, we put forward the liquidity constraint for the optimal bankruptcy problem.
\begin{proposition}\label{Proposition 2}
	The liquidity constraint of the considered problem is
	\begin{equation}
	\mathbb{E}\left[\left.\int_{t}^{\tau}\frac{H(s)}{H(t)}\left(c(s)+d+wl(s)-w\bar{L}\right)ds+\frac{H(\tau)}{H(t)}X(\tau)\right|\mathcal{F}_{t}\right]\ge F+\eta, \quad\forall t\in[0,\tau].\label{1.9}
	\end{equation}
\end{proposition}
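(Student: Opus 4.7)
The plan is to mimic, in a conditional form, the budget-constraint derivation from Proposition \ref{Proposition 1} (Appendix \ref{A.2}). First I would compute $d(H(t)X(t))$ by Itô's product rule, using that $dH(t)=-rH(t)dt-\theta H(t)dB(t)$ and that on $[0,\tau]$ the wealth satisfies the pre-bankruptcy SDE with drift $rX(t)+\pi(t)(\mu-r)-c(t)-d+w(\bar L-l(t))$. The quadratic covariation contributes $-\theta\sigma\pi(t)H(t)dt$, which combines with the $\pi(t)(\mu-r)H(t)dt$ drift term to cancel, thanks to $\theta\sigma=\mu-r$. Similarly the $rH(t)X(t)$ drift from $dX$ cancels the $-rH(t)X(t)$ drift from $dH$. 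What is left is
\begin{equation*}
d(H(t)X(t))=-H(t)\bigl(c(t)+d+wl(t)-w\bar L\bigr)dt+H(t)\bigl(\sigma\pi(t)-\theta X(t)\bigr)dB(t).
\end{equation*}

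Next I would integrate this identity from $t$ to $\tau$, rearrange, and take conditional expectation given $\mathcal{F}_t$. Assuming the stochastic integral piece is a true martingale (so its conditional expectation vanishes) and dividing by $H(t)$ (which is $\mathcal{F}_t$-measurable and strictly positive), I obtain the equality
\begin{equation*}
X(t)=\mathbb{E}\left[\left.\int_{t}^{\tau}\frac{H(s)}{H(t)}\bigl(c(s)+d+wl(s)-w\bar L\bigr)ds+\frac{H(\tau)}{H(t)}X(\tau)\right|\mathcal{F}_t\right].
\end{equation*}
The announced liquidity constraint then follows immediately from the admissibility requirement in Definition \ref{Definition 1} that $X(t)\ge F+\eta$ a.s.\ for every $t\in[0,\tau]$.

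The one genuine technical point is justifying that the Itô integral $\int_t^{\tau} H(s)(\sigma\pi(s)-\theta X(s))dB(s)$ has zero conditional expectation. A priori $\pi$ and $X$ are only progressively measurable with $\int_0^T\pi^2(s)ds<\infty$ a.s., so the integrand is only a local martingale. I would handle this via the same localisation argument as in Appendix \ref{A.2}: introduce a reducing sequence of stopping times $\tau_n\uparrow\tau$ on which the stochastic integral is a martingale, write the identity on $[t,\tau_n]$, take conditional expectations, and then pass to the limit. For the positive part $H(s)(c(s)+d+wl(s))\ge0$ one applies monotone convergence, the constant part $-w\bar L H(s)$ is integrable, and for the boundary term $H(\tau_n)X(\tau_n)$ one uses Fatou (together with non-negativity of $X$ after the implicit bound $X\ge F+\eta\ge0$ before $\tau$). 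This yields the inequality in the limit and closes the argument; the use of Fatou may turn the equality above into an inequality, but in the direction that preserves the stated lower bound $X(t)\ge F+\eta$.
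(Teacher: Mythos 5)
Your It\^o computation and the overall plan (show that $M_\tau(t)\triangleq H(t)X(t)+\int_0^t H(s)(c(s)+d+wl(s)-w\bar L)ds$ behaves like a martingale on $[0,\tau]$, then invoke $X(t)\ge F+\eta$) are in the same spirit as the paper's Appendix \ref{C.1}. However, the final localisation step contains a genuine gap, and it sits exactly where the whole difficulty of this proposition lies. Your reducing sequence $\tau_n\uparrow\tau$ together with Fatou applied to the non-negative boundary term $H(\tau_n)X(\tau_n)$ yields
\begin{equation*}
\mathbb{E}\left[\left.\int_t^{\tau}H(s)\bigl(c(s)+d+wl(s)-w\bar L\bigr)ds+H(\tau)X(\tau)\right|\mathcal{F}_t\right]\;\le\;H(t)X(t),
\end{equation*}
i.e.\ the \emph{supermartingale} inequality. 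Combined with $X(t)\ge F+\eta$ this gives only an upper bound $H(t)X(t)$ on the conditional expectation, whereas the proposition requires the \emph{lower} bound $(F+\eta)H(t)$; the two pieces of information do not combine. Your closing claim that Fatou turns the equality into an inequality ``in the direction that preserves the stated lower bound'' is therefore incorrect. Note the contrast with Proposition \ref{Proposition 1}: there the supermartingale direction is precisely what one wants (an upper bound by $x$), which is why the localisation-plus-Fatou argument works for the budget constraint but cannot be transplanted verbatim to the liquidity constraint.

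The paper closes this gap differently: it shows that $M_\tau(t)$ is a \emph{true} martingale, not merely a supermartingale. Taking consumption and leisure processes that exhaust the budget, $\mathbb{E}\left[\int_0^\infty H(s)(c(s)+d+wl(s)-w\bar L)ds\right]=x$, it defines $\zeta(\tau)$ as the market value at $\tau$ of the remaining stream and invokes the representation/hedging theorem of \cite[Section 3.3, Theorem 3.5]{karatzas1998methods} to produce a portfolio with $X^{x,c,\pi_\tau,l}(\tau)=\zeta(\tau)$; this forces $\mathbb{E}[M_\tau(\tau)]=x=M_\tau(0)$, and a supermartingale with constant expectation is a martingale. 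The resulting equality $\mathbb{E}\left[\left.\int_t^\tau \frac{H(s)}{H(t)}(c(s)+d+wl(s)-w\bar L)ds+\frac{H(\tau)}{H(t)}X(\tau)\right|\mathcal{F}_t\right]=X(t)$ then delivers the bound. To repair your proof you would need to supply an argument of this type (or some other uniform-integrability condition on the local-martingale part); as written, the inequality you obtain points the wrong way.
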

\begin{proof}
See Appendix \ref{C.1}.
\end{proof}
Following the same technique as in the post-bankruptcy problem, considering the budget and liquidity constraints (\ref{1.5}) and (\ref{1.9}), we introduce a real number $\lambda > 0$, the Lagrange multiplier, and a non-increasing continuous process $D(t)>0$. We obtain
\begin{equation*}
\begin{split}
J(x;c,\pi,l,\tau)
&\leq\mathbb{E}\bigg[\int_{0}^{\tau}e^{-\gamma t}\left(\tilde{u}(\lambda D(t)e^{\gamma t}H(t))-(d-w\bar{L})\lambda e^{\gamma t}D(t)H(t)\right)dt\\
&\qquad+e^{-\gamma\tau}\tilde{U}(\lambda D(\tau)e^{\gamma\tau}H(\tau))\bigg]
+\lambda\mathbb{E}\left[\int_{0}^{\tau}(F+\eta)H(t)dD(t)\right]+\lambda x.
\end{split}
\end{equation*}
As in \cite[Section 4]{he1993labor}, the individual's shadow price problem inspired by the above inequality is defined as below,
\begin{equation*}\tag{$S_{\tau}$}
\begin{split}
\tilde{V}_{\tau}(x)
&\triangleq \inf_{\{D(t)\}_{t\ge 0}\in\mathcal{D}}\mathbb{E}\left[\int_{0}^{\tau}e^{-\gamma t}\left(\tilde{u}(\lambda D(t)e^{\gamma t}H(t))-(d-w\bar{L})\lambda e^{\gamma t}D(t)H(t)\right)dt\right.\\
&\qquad\left.+e^{-\gamma\tau}\tilde{U}(\lambda D(\tau)e^{\gamma\tau}H(\tau))\right]
+\lambda\mathbb{E}\left[\int_{0}^{\tau}(F+\eta)H(t)dD(t)\right].
\end{split} 
\end{equation*}
Hereafter, we provide a theorem to construct the duality between the individual's shadow price problem $(S_{\tau})$ and the optimal consumption-portfolio-leisure problem $(P_{\tau})$.
\begin{theorem}\label{Theorem 2}
	\textbf{(Duality Theorem)}
	Suppose $D^{*}(t)$ is the optimal solution (the $\text{arg\,inf}$) to Problem $(S_{\tau})$, then, $c^{*}(t)+wl^{*}(t)=-\tilde{u}^{\prime}(\lambda^* e^{\gamma t}D^{*}(t)H(t))$ and $X^{x,c^{*},\pi^{*},l^{*}}(\tau)=-\tilde{U}^{\prime}(\lambda^* e^{\gamma\tau}D^{*}(\tau)H(\tau))$ coincide with the optimal solution to Problem $(P_{\tau})$, and we have the following relationship
	\begin{equation*}
	V_{\tau}(x)=\inf_{\lambda>0} \left\{\tilde{V}_{\tau}(\lambda)+\lambda x\right\}, \qquad\forall x\ge F+\eta.
	\end{equation*} 
	Here $\lambda^*$ is the parameter $\lambda$ which gives the infimum in the above equation.
\end{theorem}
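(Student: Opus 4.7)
The plan is to mirror the proof strategy of Theorem \ref{Theorem 1}, adapted to account for the finite stopping time $\tau$, the extra debt term $d$, the terminal reward $\tilde{U}$ and the stricter liquidity bound $X(t)\ge F+\eta$. The ``easy'' direction, $V_{\tau}(x)\leq \inf_{\lambda>0}\{\tilde{V}_{\tau}(\lambda)+\lambda x\}$, is essentially already in the text: the displayed inequality just before the theorem statement holds for every admissible policy and every non-increasing process $D(t)>0$; taking the supremum over $(c,\pi,l)\in\mathcal{A}_{\tau}(x)$ on the left and then the infimum over $D\in\mathcal{D}$ and $\lambda>0$ on the right yields the bound.

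The heart of the proof is the reverse inequality, obtained by exhibiting a feasible triple $(c^{*},\pi^{*},l^{*})$ that saturates every inequality used above. First I would let $\lambda^{*}$ attain the outer infimum (guaranteed by the coercivity of $\lambda\mapsto\tilde{V}_{\tau}(\lambda)+\lambda x$), let $D^{*}$ be the $\arg\inf$ in $(S_{\tau})$, and define the candidates via the Legendre--Fenchel envelope equalities: $c^{*}(t)+wl^{*}(t)=-\tilde{u}^{\prime}(\lambda^{*}e^{\gamma t}D^{*}(t)H(t))$ for $t\in[0,\tau]$, split into $c^{*}$ and $l^{*}$ using the explicit expressions of Lemma \ref{Lemma 1}, and $X^{*}(\tau)=-\tilde{U}^{\prime}(\lambda^{*}e^{\gamma\tau}D^{*}(\tau)H(\tau))$. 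By construction the pointwise inequalities $u(c^{*}(t),l^{*}(t))-(c^{*}(t)+wl^{*}(t))\lambda^{*}e^{\gamma t}D^{*}(t)H(t)=\tilde{u}(\lambda^{*}e^{\gamma t}D^{*}(t)H(t))$ and $U(X^{*}(\tau))-X^{*}(\tau)\lambda^{*}e^{\gamma\tau}D^{*}(\tau)H(\tau)=\tilde{U}(\lambda^{*}e^{\gamma\tau}D^{*}(\tau)H(\tau))$ hold with equality. To obtain $\pi^{*}$ I would apply the martingale representation theorem to the $\mathbb{P}$-martingale generated by $\mathbb{E}[\int_{0}^{\tau}H(s)(c^{*}(s)+d+wl^{*}(s)-w\bar{L})ds+H(\tau)X^{*}(\tau)\mid\mathcal{F}_{t}]$, and read $\pi^{*}$ off the integrand after dividing by $-\sigma H(t)$, exactly as in the post-bankruptcy case.

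The main obstacle is the complementary slackness argument that makes the extra term $\lambda^{*}\mathbb{E}[\int_{0}^{\tau}(F+\eta)H(t)dD^{*}(t)]$ collapse. Specifically, the optimality of $D^{*}$ in $(S_{\tau})$ should force the liquidity constraint (\ref{1.9}) to be binding, i.e.\ $\mathbb{E}[\int_{t}^{\tau}(H(s)/H(t))(c^{*}(s)+d+wl^{*}(s)-w\bar{L})ds+(H(\tau)/H(t))X^{*}(\tau)\mid\mathcal{F}_{t}]=F+\eta$, on the support of the decreasing measure $-dD^{*}(t)$. This is the analogue of the post-bankruptcy characterization $\partial_{z}\phi_{\scriptscriptstyle PB}(t,z)\leq 0$ with equality inside the continuation region, and it forces the Fubini identity used in the derivation of the upper bound to become an equality. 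Once this is established, every inequality in the chain that produced the upper bound becomes an equality for the candidate triple, giving $J(x;c^{*},\pi^{*},l^{*},\tau)=\tilde{V}_{\tau}(\lambda^{*})+\lambda^{*}x$, which completes the duality.

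Finally, I would verify admissibility of $(c^{*},\pi^{*},l^{*})$: non-negativity and the bound $l^{*}\leq L$ follow from the explicit formulas in Lemma \ref{Lemma 1}; the integrability conditions in Definition \ref{Definition 1} follow from the $L^{1}$-integrability built into $\tilde{V}_{\tau}(\lambda^{*})<\infty$; the liquidity bound $X^{*}(t)\ge F+\eta$ on $[0,\tau]$ follows from the complementary slackness identity evaluated at each $t$ combined with the non-increasing nature of $D^{*}$; and the wealth jump $X^{*}(\tau^{+})=\alpha(X^{*}(\tau)-F)$ is encoded in the relation (\ref{pico}) between $\tilde{U}$ and $\tilde{V}_{\scriptscriptstyle PB}$, since $-\tilde{U}^{\prime}(z)=-\alpha^{-1}\tilde{V}_{\scriptscriptstyle PB}^{\prime}(z/\alpha)-(-F)$, i.e.\ the right Legendre dual recovers the correct post-jump wealth.
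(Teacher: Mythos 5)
Your overall architecture (weak duality from the displayed inequality, then attainment by a candidate triple defined through the Legendre--Fenchel envelope equalities, then complementary slackness to collapse the term $\lambda^{*}\mathbb{E}[\int_{0}^{\tau}(F+\eta)H(t)dD^{*}(t)]$) is the same as the paper's, which also mirrors the proof of Theorem \ref{Theorem 1} with a preparatory replication lemma (Lemma \ref{Lemma C.1}) and two-sided perturbations $\bar{D}^{\epsilon}=D^{*}(1+\epsilon)$ of the optimal multiplier process. The identities you want to saturate are exactly the ones the paper derives ((\ref{Ap.14})--(\ref{Ap.16})).

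There is, however, a genuine gap in your verification of admissibility, specifically of the pathwise liquidity bound $X^{*}(t)\ge F+\eta$ on $[0,\tau]$. You assert this ``follows from the complementary slackness identity evaluated at each $t$ combined with the non-increasing nature of $D^{*}$,'' but complementary slackness only gives you equality of the constraint on the support of the measure $-dD^{*}$; off that support you have no information at individual times, and in any case what the variational argument delivers is a family of \emph{expected-value} inequalities
\begin{equation*}
\mathbb{E}\left[\int_{\tilde{\tau}}^{\tau}H(t)\left(c^{*}(t)+wl^{*}(t)-w\bar{L}+d\right)dt+H(\tau)X^{*}(\tau)\right]\ge F+\eta,
\end{equation*}
one for each stopping time $\tilde{\tau}\in\mathcal{S}$, obtained by perturbing $D^{*}$ by $\epsilon\mathbb{I}_{[0,\tilde{\tau})}$ for \emph{every} $\tilde{\tau}$ (not just by the two-sided scaling). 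Passing from this family of expectations to the almost-sure statement $X^{*}(t)\ge F+\eta$ for all $t\in[0,\tau]$ is precisely what the paper's Lemma \ref{Lemma C.1} accomplishes, via the Snell envelope of $K(t)=-\int_{t}^{\tau}H(s)(c^{*}(s)+wl^{*}(s)-w\bar{L}+d)ds-H(\tau)Q$, its Doob--Meyer decomposition, and a portfolio built from the martingale part. Your alternative of applying the martingale representation theorem directly to the conditional expectation of the budget quantity produces a replicating portfolio with the correct terminal wealth $X^{*}(\tau)=Q$ and the correct initial cost, but it does not by itself show that the resulting wealth process stays above $F+\eta$ at intermediate times; the Snell-envelope construction (whose increasing compensator $\bar{A}(t)$ is exactly what pushes the wealth above the floor) is the missing ingredient. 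The rest of your sketch, including the check that (\ref{pico}) encodes the post-bankruptcy jump, is consistent with the paper.
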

\begin{proof}
See Appendix \ref{C.2}.
\end{proof} 

Furthermore, the duality theorem makes the value function of Problem $(P)$ conforms to the following derivation,
\begin{equation*}
V(x)=\sup_{\tau\in\mathcal{T}}V_{\tau}(x)=\sup_{\tau\in\mathcal{T}}\inf_{\lambda>0}\{\tilde{V}_{\tau}(\lambda)+\lambda x\}\leq\inf_{\lambda>0}\sup_{\tau\in\mathcal{T}}\{\tilde{V}_{\tau}(\lambda)+\lambda x\}=\inf_{\lambda>0}\{\sup_{\tau\in\mathcal{T}}\tilde{V}_{\tau}(\lambda)+\lambda x\}.
\end{equation*}
Let us introduce a new function 
$\tilde{V}(\lambda)\triangleq\sup\limits_{\tau\in\mathcal{T}}\tilde{V}_{\tau}(\lambda)$.
As in \cite[Section 8, Theorem 8.5]{karatzas2000utility}, the value function $V(x)$ satisfies $V(x)=\inf\limits_{\lambda>0} [\tilde{V}(\lambda)+\lambda x]$ under the condition that $\tilde{V}(\lambda)$ exists and is differentiable for any $\lambda> 0$. Hence, the objective optimization problem contains two steps:
\begin{equation*}
\begin{cases}
\tilde{V}(\lambda)=\sup\limits_{\tau \in \mathcal{T}}\tilde{V}_{\tau}(\lambda),\\
V(x)=\inf\limits_{\lambda>0} [\tilde{V}(\lambda)+\lambda x]=\tilde{V}(\lambda^{*})+\lambda^{*} x.
\end{cases}
\end{equation*}
The first step involves an optimal stopping time problem, and the second refers to obtain the
optimum $\lambda^{*}$ achieving the infimum part. We begin with the first optimization problem related to
the individual's shadow price. Before this, following the method of \cite{davis1990portfolio}, an assumption is imposed on the process $D(t)$.
\begin{assumption}
	The non-increasing process $D(t)$ is absolutely continuous with respect to t. Hence, there is a non-negative process $\psi(t)$ such that $dD(t)=-\psi(t)D(t)dt$.
\end{assumption}

Introducing a new process $Z(t)\triangleq\lambda D(t)e^{\gamma t}H(t)$, the value function of the dual problem $(S_{\tau})$ is rewritten as
\begin{equation*}
\tilde{V}_{\tau}(\lambda)=\inf_{\psi(t)\ge 0}\mathbb{E}\left[\int_{0}^{\tau} e^{-\gamma t} \left(\tilde{u}(Z(t))-(d-w\bar{L})Z(t)-(F+\eta)\psi(t)Z(t)\right)dt+e^{-\gamma \tau}\tilde{U}(Z(\tau))\right].
\end{equation*}
We consider a new generalized optimization problem 
\begin{equation*}
\phi(t,\!z)\!\triangleq\!\sup_{\tau\ge t} \!\inf_{\psi(t)>0}\! \mathbb{E}\!\left[\!\left.\int_{t}^{\tau}\!\!\!\!\!e^{\!-\!\gamma s}\! \left(\tilde{u}(Z(s))\!-\!(d\!-\!w\bar{L})Z(s)\!-\!(F\!+\!\eta)\psi(s)Z(s)\right)ds\!+\!e^{\!-\!\gamma \tau}\!\tilde{U}(Z(\tau))\right|\!Z(t)\!=\!z\right],
\end{equation*}
it can be observed that  $\tilde{V}(\lambda)=\phi(0,\lambda)$, which indicates the solution of $\tilde{V}(\lambda)$ is resorted to solve the above generalized problem.  We start with the infimum part through defining
\begin{equation*}
\phi_{\scriptscriptstyle inf}(t,z)\!\triangleq\!\!\!\inf_{\psi(t)>0} \!\!\mathbb{E}\!\left[\left.\int_{t}^{\tau} \!e^{-\gamma s}\! \left(\tilde{u}(Z(s))\!\!-\!\!(d\!-\!w\bar{L})Z(s)\!\!-\!\!(F\!+\!\eta)\psi(s)Z(s)\right)\!ds\!+\!e^{-\gamma \tau}\!\tilde{U}(Z(\tau))\right|\!Z(t)\!=\!z\right],
\end{equation*}
the dynamic programming principle gives us the subsequent Bellman equation
\begin{equation*}
\min_{\psi\ge 0} \left\{\mathcal{L}\phi_{\scriptscriptstyle inf}(t,z)+e^{-\gamma t}\left(\tilde{u}(z)-(d-w\bar{L})z\right)-\psi z\left[\frac{\partial \phi_{\scriptscriptstyle inf}}{\partial z}(t,z)+(F+\eta)e^{-\gamma t}\right]\right\}=0,
\end{equation*}
with introducing the operator $\mathcal{L}=\frac{\partial}{\partial  t}+(\gamma-r)z\frac{\partial}{\partial z}+\frac{1}{2}\theta^{2}z^{2}\frac{\partial^{2}}{\partial z^{2}}$. The following characterizations hold for the optimum $\psi^{*}$ (the $\text{arg\,min}$):
\begin{itemize}
	\item $\frac{\partial \phi_{\scriptscriptstyle inf}}{\partial z}(t,z)+(F+\eta)e^{-\gamma t}=0 \Longrightarrow \psi^{*}\ge 0$: then we get that there exists a boundary $\hat{z}$ such that
	\begin{equation}
	\frac{\partial \phi}{\partial z}(t,z)=\frac{\partial \phi_{\scriptscriptstyle inf}}{\partial z}(t,z)=-(F+\eta)e^{-\gamma t},\quad z\ge\hat{z}.\label{1.11}
	\end{equation}
	\item $\frac{\partial \phi_{\scriptscriptstyle inf}}{\partial z}(t,z)+(F+\eta)e^{-\gamma t}\leq 0 \Longrightarrow \psi^{*}=0$: in this case, $\phi(t,z)$ is reduced to a pure optimal stopping time problem,
	\begin{equation}
	\phi(t,z)\!=\!\sup_{\tau\ge t} \mathbb{E}\left[\left.\int_{t}^{\tau}\!\!\! e^{\!-\!\gamma s} \left(\tilde{u}(Z(s))\!-\!(d\!-\!w\bar{L})Z(s)\right)ds\!+\!e^{-\gamma \tau}\tilde{U}(Z(\tau))\right|Z(t)\!=\!z\right],\quad 0<z<\hat{z}.\label{1.12}
	\end{equation} 
\end{itemize}

We first focus on the optimal stopping time problem (\ref{1.12}) and present a lemma to determine its continuous and stopping regions. But before this, one relationship should be noticed. With the optimum $\psi^{*}(t)$, the function $ \phi_{\scriptscriptstyle inf}(t,z)$ can be rewritten as
\begin{equation*}
\phi_{\scriptscriptstyle inf}(t,z)= \mathbb{E}\left[\left.\int_{t}^{\tau} \!e^{-\gamma s}\! \left(\tilde{u}(Z(s))\!-\!(d\!-\!w\bar{L})Z(s)\!-\!(F\!+\!\eta)\psi^{*}(s)Z(s)\right)\!ds\!+\!e^{-\gamma \tau}\!\tilde{U}(Z(\tau))\right|\!Z(t)\!=\!z\right].
\end{equation*}
Fixing the time with $t=\tau$, $\phi_{\scriptscriptstyle inf}(\tau,z)$ can be treated as a single variable function of $z$, that is, $$\phi_{\scriptscriptstyle inf}(\tau,z)\!=\!e^{\!-\!\gamma \tau}\tilde{U}(z).$$ From this equation, and Equation (\ref{pico}) and (\ref{1.11}), we obtain
$\tilde{U}^{\prime}(\hat{z})\!=\!-\!(F\!+\!\eta)\!\leq\!-\!F\!=\!\tilde{U}^{\prime}(\alpha\hat{z}_{\scriptscriptstyle PB})$;
considering the convex property of $\tilde{U}(z)$, we directly obtain the relationship
\begin{equation}
\hat{z}\leq\alpha\hat{z}_{\scriptscriptstyle PB}.\label{1.13}
\end{equation}

\begin{lemma}\label{Lemma 3}
	Assuming
	\begin{equation}
	rF-d+w\bar{L}-\frac{w\bar{L}}{\alpha}<0,\label{1.14}
	\end{equation}
	there exists $\bar{z}$ such that the continuous region of the optimal stopping problem (\ref{1.12}) with the state variable $Z(t)$ is $\Omega_{1}=\{0<z<\bar{z}\}$, and the stopping region is $\Omega_{2}=\{z\ge\bar{z}\}$. $\bar{z}$ is the boundary that separates $\Omega_{1}$ and $\Omega_{2}$.
\end{lemma}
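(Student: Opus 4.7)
The plan is to use the infinitesimal characterization of the optimal stopping region. In the range of $z$ where the reduction to (\ref{1.12}) applies, we have $\psi^{*}=0$, so from (\ref{Zeta}) the process $Z$ is a GBM with drift $(\gamma-r)$ and diffusion $-\theta$, whose generator is $\mathcal{L}_{Z} = (\gamma-r)z\partial_z + \tfrac{1}{2}\theta^{2}z^{2}\partial_{zz}$. After factoring out the common $e^{-\gamma t}$ from the payoff, a point $z$ belongs to the stopping region precisely when the ``infinitesimal advantage of continuation''
\begin{equation*}
\Phi(z) \;:=\; (\mathcal{L}_{Z}-\gamma)\tilde{U}(z) \;+\; \tilde{u}(z) - (d-w\bar{L})z
\end{equation*}
is non-positive, so the task is to show that $\{z>0 : \Phi(z)\le 0\}$ is a half-line $[\bar z,\infty)$.

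First I would substitute $\tilde{U}(z)=v_{\scriptscriptstyle PB}(z/\alpha)-Fz$ from (\ref{pico}) into $\Phi$. Using that the post-bankruptcy Bellman equation (V3) of (\ref{1.8}) gives $(\mathcal{L}_{Z}-\gamma)v_{\scriptscriptstyle PB}(y)=-\tilde{u}(y)-w\bar{L}y$ for every $y<\hat z_{\scriptscriptstyle PB}$, and noting that by (\ref{1.13}) the relevant range $z<\hat z$ satisfies $z/\alpha<\hat z_{\scriptscriptstyle PB}$, a direct computation (tracking the factors of $1/\alpha$ coming from chain-rule derivatives of $v_{\scriptscriptstyle PB}(z/\alpha)$) collapses everything to
\begin{equation*}
\Phi(z) \;=\; \tilde{u}(z)-\tilde{u}(z/\alpha) \;+\; z\bigl(rF-d+w\bar{L}-w\bar{L}/\alpha\bigr).
\end{equation*}
The coefficient of $z$ is exactly the left-hand side of (\ref{1.14}) and is strictly negative by assumption, while the first piece is non-negative because $\tilde{u}$ is non-increasing (as a Legendre transform) and $z/\alpha>z$ for $\alpha\in(0,1)$; hence $\Phi$ is a sum of a non-negative and a negative term, and the question reduces to comparing their rates of growth.

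Next I would carry out the asymptotics of $\Phi$ using the two-regime formula for $\tilde{u}$ supplied by Lemma \ref{Lemma 1}. For $z\downarrow 0$ both arguments lie in the regime $y<\tilde y$, and $\tilde{u}(z)-\tilde{u}(z/\alpha)$ behaves like $A_{1}z^{p}(1-\alpha^{-p})$ with $p=\delta(1-k)/(\delta(1-k)-1)\in(0,1)$; since $A_{1}<0$ and $1-\alpha^{-p}<0$ the leading term is a strictly positive multiple of $z^{p}$, which dominates the linear contribution so $\Phi(z)>0$ near the origin. For $z$ large both points lie in the regime $y\ge\tilde y$ and $\tilde{u}(z)-\tilde{u}(z/\alpha)\sim A_{2}z^{(k-1)/k}(1-\alpha^{-(k-1)/k})$, which is still strictly sublinear in $z$, so the negative linear contribution wins and $\Phi(z)\to-\infty$. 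Continuity of $\Phi$ therefore produces at least one sign change at some $\bar z\in(0,\alpha\hat z_{\scriptscriptstyle PB})$.

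The main obstacle is to upgrade the existence of a sign change to the \emph{uniqueness} required for $\{\Phi\le 0\}=[\bar z,\infty)$. I plan to study $\Phi'$ piecewise on either side of the kink of $\tilde u$ at $z=\tilde y$ (and of $\tilde u(\cdot/\alpha)$ at $z=\alpha\tilde y$), using the strict convexity of $\tilde u$ and the explicit power-law exponents to show that $z\mapsto \Phi(z)/z$ is strictly decreasing, which rules out a return into $\{\Phi>0\}$ after the first crossing; alternatively one can verify that $\Phi$ is concave on each piece and use its endpoint behaviour to exclude multiple zeros. With uniqueness of $\bar z$ in hand, a standard verification argument (super-/sub-martingale comparison with $e^{-\gamma t}\tilde{U}(Z(t))$ along the lines of \cite{karatzas2000utility}) then identifies $\Omega_{1}=\{0<z<\bar z\}$ as the continuous region and $\Omega_{2}=\{z\ge\bar z\}$ as the stopping region, completing the proof.
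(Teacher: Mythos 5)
Your proposal follows essentially the same route as the paper's proof in Appendix \ref{C.3}: your $\Phi$ is exactly the paper's $h$, the reduction via the post-bankruptcy ODE to $\tilde{u}(z)-\tilde{u}(z/\alpha)+\left(rF-d+w\bar{L}-\frac{w\bar{L}}{\alpha}\right)z$ is identical, and the uniqueness argument you mention as an alternative (strict concavity on each piece combined with the endpoint behaviour $h(0^{+})=0$, $h^{\prime}(0^{+})=+\infty$, $h(z)\to-\infty$) is precisely what the paper carries out. The only step you defer --- checking the sign of $\tilde{u}^{\prime\prime}(z)-\alpha^{-2}\tilde{u}^{\prime\prime}(z/\alpha)$ on the mixed interval $\alpha\tilde{y}\le z<\tilde{y}$, where the two branches of $\tilde{u}$ interact --- is the one genuinely nontrivial computation in the paper's proof, and it does go through.
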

\begin{proof}
See Appendix \ref{C.3}
\end{proof}

\noindent After obtaining the continuous region $\Omega_{1}\!=\!\{0\!<\!z\!<\!\bar{z}\}$, we can treat the stopping time of bankruptcy as the first hitting time of process $Z(t)$ to the boundary $\bar{z}$ from the inner region of $\Omega_{1}$. Figure \ref{Figure 1} describes the relationship between the optimal stopping time and the continuous and stopping regions.  
\begin{figure}[t]
	\centering
	\caption{ Relationship between Optimal Bankruptcy Time and Continuous and Stopping Regions}\label{Figure 1}
	\includegraphics[width=1\linewidth,height=0.3\textheight]{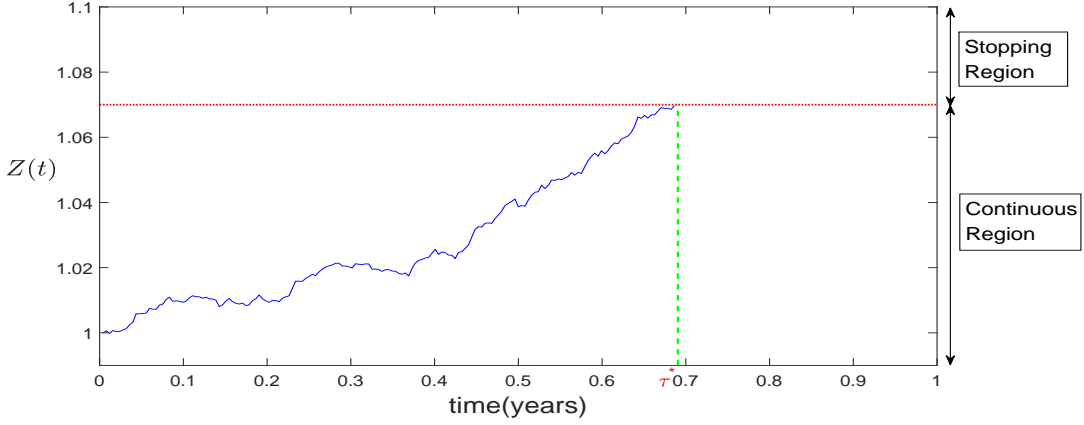}
\end{figure}
\noindent The optimal bankruptcy time is the moment when process $Z(t)$ first touches the boundary, which is represented with the red dotted line, from within the continuous region. Hence, the stopping time satisfies $\tau=\inf\{t\ge 0:Z(t)\ge\bar{z}\}$ and will be proved to be finite with probability one under a sufficient constraint with the following lemma. Besides, it should be clear that $\bar{z}$, corresponding to the bankruptcy threshold, is an upper barrier for the process $Z(t)$, and therefore a lower barrier of the wealth process, as we will show in Remark \ref{Rem1}.
\begin{lemma}\label{Lemma 4}
	Under the assumption
	\begin{equation}
	\gamma>r+\frac{\theta^{2}}{2},\label{1.15}
	\end{equation}
	we have $\mathbb{P}\left(\tau_{\bar{z}}\!<\!\tau_{0}\right)\!=\!1$,
	with two stopping times, $\tau_{\bar{z}}\!=\!\inf\limits_{t\ge 0}\{t:Z(t)\!=\!\bar{z}\}$ and $\tau_{0}\!=\!\inf\limits_{t\ge 0}\{t:Z(t)\!=\!0\}$.
\end{lemma}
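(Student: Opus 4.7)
My plan is to reduce the statement to a classical fact about Brownian motion with drift. In the continuous region $\Omega_1=\{0<z<\bar z\}$, the characterization of the optimizer established just before the lemma gives $\psi^*\equiv 0$ along any path while $Z(t)<\bar z$ (this is the case $\partial\phi_{\scriptscriptstyle inf}/\partial z+(F+\eta)e^{-\gamma t}<0$). So up to time $\tau_{\bar z}\wedge\tau_0$, the SDE for $Z(t)$ reduces, from the general form $dZ/Z=-\psi dt+(\gamma-r)dt-\theta dB$, to the geometric Brownian motion
\begin{equation*}
\frac{dZ(t)}{Z(t)}=(\gamma-r)dt-\theta dB(t),\qquad Z(0)=\lambda\in(0,\bar z),
\end{equation*}
whose explicit solution is $Z(t)=\lambda\exp\bigl((\gamma-r-\tfrac12\theta^2)t-\theta B(t)\bigr)$.

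The first observation I would use is that this geometric Brownian motion is strictly positive for every $t\ge 0$ with probability one. Consequently $\tau_0=\inf\{t\ge 0:Z(t)=0\}=+\infty$ a.s., and the event $\{\tau_{\bar z}<\tau_0\}$ coincides (modulo a null set) with $\{\tau_{\bar z}<\infty\}$. The proof therefore reduces to showing that the upper barrier $\bar z$ is reached in finite time almost surely.

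Taking logarithms, $\log Z(t)=\log\lambda+(\gamma-r-\tfrac12\theta^2)t-\theta B(t)$, so by the strong law of large numbers for Brownian motion ($B(t)/t\to 0$ a.s. as $t\to\infty$) one obtains
\begin{equation*}
\lim_{t\to\infty}\frac{\log Z(t)}{t}=\gamma-r-\tfrac12\theta^2\quad\text{a.s.}
\end{equation*}
Assumption (\ref{1.15}) guarantees that this limit is strictly positive, hence $\log Z(t)\to+\infty$ a.s.\ and therefore $Z(t)\to+\infty$ a.s. Continuity of the paths and $Z(0)=\lambda<\bar z$ then force $Z$ to cross the level $\bar z$ in finite time with probability one, i.e.\ $\tau_{\bar z}<\infty$ a.s., and combining with $\tau_0=+\infty$ a.s.\ yields $\mathbb{P}(\tau_{\bar z}<\tau_0)=1$.

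I do not expect a serious obstacle: the argument is essentially the long-term behaviour of a geometric Brownian motion with positive log-drift. The only point that requires a touch of care is the preliminary reduction to the driftless control ($\psi^*=0$) on the continuous region, so that the dynamics of $Z$ really are geometric Brownian up to $\tau_{\bar z}$; this is immediate from the dichotomy displayed before equations (\ref{1.11})--(\ref{1.12}), but should be stated explicitly to make clear that assumption (\ref{1.15}) is exactly what is needed to turn the log-drift positive.
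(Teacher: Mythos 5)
Your proof is correct, but it takes a genuinely different route from the paper. The paper argues via the scale function of the diffusion $dZ=(\gamma-r)Z\,dt-\theta Z\,dB$: it writes $\mathbb{P}(\tau_{\bar z}<\tau_0)=\frac{s(z)-s(0)}{s(\bar z)-s(0)}$ and shows that assumption (\ref{1.15}) forces $s(0)=-\infty$, so the ratio equals one; finiteness of $\tau_{\bar z}$ is then obtained by a second (and somewhat awkwardly stated) scale-function argument. You instead solve the SDE explicitly, observe that a geometric Brownian motion never reaches $0$ (so $\tau_0=+\infty$ a.s.\ and the event reduces to $\{\tau_{\bar z}<\infty\}$), and use the strong law $B(t)/t\to 0$ to conclude $\log Z(t)/t\to\gamma-r-\tfrac12\theta^2>0$, hence $Z(t)\to\infty$ and the barrier is crossed in finite time by path continuity. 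Both arguments hinge on exactly the same inequality $\gamma-r>\theta^2/2$ (in the paper it appears as $1-\frac{2(\gamma-r)}{\theta^2}<0$), so the approaches are equivalent in strength; yours is more elementary, avoids the exit-probability formula entirely, and cleanly separates the two conclusions ($\tau_0=\infty$ and $\tau_{\bar z}<\infty$) that the paper's final sentence conflates. Your preliminary reduction to $\psi^*\equiv 0$ on the continuation region is at the same level of rigor as the paper, which likewise computes the scale function for the uncontrolled dynamics; strictly speaking both treatments rely on the relevant case $\bar z<\hat z$ (the one that is verified numerically), since for $\bar z\ge\hat z$ the singular control could act on $[\hat z,\bar z)$, but this is not a defect specific to your argument.
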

\begin{proof}
See Appendix \ref{C.4}.
\end{proof}

Subsequently, combining Condition (\ref{1.11}) and the optimal stopping time problem (\ref{1.12}), we get the free boundary problem which characterizes the function $\phi(\cdot,\cdot)$, considering two different cases:\\
\textbf{(1) Variational Inequalities assuming $\bar{z}<\hat{z}$:} Find the free boundaries $\bar{z}>0$ (Bankruptcy), $\hat{z}>0$ ($(F+\eta)$-wealth level), and a function ${\phi}(\cdot,\cdot)\in C^{1}((0,\infty)\times\mathbb{R}^{+})\cap C^{2}((0,\infty)\times\mathbb{R}^{+}\setminus\{\bar{z}\})$ satisfying
\begin{equation}
\begin{cases}
(V1) \quad\frac{\partial {\phi}}{\partial z}(t,z)+(F+\eta)e^{-\gamma t}=0, & z\ge\hat{z},\\
(V2) \quad \frac{\partial {\phi}}{\partial z}(t,z)+(F+\eta)e^{-\gamma t}\leq 0, & 0<z<\hat{z},\\
(V3)\quad \mathcal{L}{\phi}(t,z)+e^{-\gamma t}\left(\tilde{u}(z)-(d-w\bar{L})z\right)=0, & 0<z<\bar{z},\\
(V4)\quad \mathcal{L}{\phi}(t,z)+e^{-\gamma t}\left(\tilde{u}(z)-(d-w\bar{L})z\right)\leq 0, & \bar{z}\leq z<\hat{z},\\
(V5) \quad {\phi}(t,z)\ge e^{-\gamma t}\tilde{U}(z), & 0<z<\bar{z},\\
(V6) \quad {\phi}(t,z)= e^{-\gamma t}\tilde{U}(z), &\bar{z}\leq z<\hat{z},\label{1.16}
\end{cases}
\end{equation}
for any $t\ge 0$, with the smooth fit conditions
\begin{equation*}
\quad\frac{\partial {\phi}}{\partial z}(t,\hat{z})=-(F+\eta)e^{-\gamma t},\quad \frac{\partial^{2}{\phi}}{\partial z^{2}}(t,\hat{z})=0,
\end{equation*}
\begin{equation*}
\quad\frac{\partial {\phi}}{\partial z}(t,\bar{z})=e^{-\gamma t}\tilde{U}^{\prime}(\bar{z}),\quad {\phi}(t,\bar{z})=e^{-\gamma t}\tilde{U}(\bar{z}).
\end{equation*}

Furthermore, in the period up to the stopping time $\tau$, which corresponds to the interval $0<z<\bar{z}$, we need to consider that whether the constraint $0\leq l(t)\leq L$ is trigged or not, which is related to the boundary $\tilde{y}$ introduced in Lemma \ref{Lemma 1}. Hence, the problem of the pre-bankruptcy part is divided into two cases, $0<\tilde{y}\leq\bar{z}$ and $0<\bar{z}<\tilde{y}$. Then combining with the two cases of the post-bankruptcy part, we have the following framework of partition for the primal optimization problem $(P)$. 

\begin{itemize}
	\item $0<\bar{z}<\tilde{y}$:
	\begin{center}
		\begin{tikzpicture}[
		grow=right,
		level 1/.style={sibling distance=1.7cm,level distance=3cm},
		level 2/.style={sibling distance=2cm, level distance=6cm},
		edge from parent/.style={very thick,draw=blue!40!black!60,
			shorten >=5pt, shorten <=5pt},
		edge from parent path={(\tikzparentnode.east) -- (\tikzchildnode.west)},
		kant/.style={text width=2cm, text centered, sloped},
		every node/.style={text ragged, inner sep=2mm},
		punkt/.style={rectangle, rounded corners, shade, top color=white,
			bottom color=blue!50!black!20, draw=blue!40!black!60, very
			thick }
		]
		
		\node[punkt, text width=4.5em] {$0<\bar{z}<\tilde{y}$}
		child{ node[punkt, text width=7em] { $0<\hat{z}_{\scriptscriptstyle PB}<\tilde{y}$ }
			child {
				node[punkt,text width=19em] { Case 4. $0\!<\!\bar{z}\!<\!\hat{z}\!\leq\!\alpha\hat{z}_{\scriptscriptstyle PB}\!<\!\alpha\tilde{y}\!<\!\tilde{y}$}
				edge from parent
				node[kant, below, pos=.6] {}
			}
			edge from parent{
				node[kant, above] {}}
		}
		child{ node[punkt, text width=7em] {$0<\tilde{y}\leq\hat{z}_{\scriptscriptstyle PB}$}
			child {
				node [punkt,text width=19em] [rectangle split, rectangle split,rectangle split parts=3,
				text ragged] {
					\text{ Case 1. $0\!<\!\bar{z}\!<\!\hat{z}\!<\!\alpha\tilde{y}\!\leq\!\alpha\hat{z}_{\scriptscriptstyle PB}$ }
					\nodepart{second}
					\text{ Case 2. $0\!<\!\bar{z}\!<\!\alpha\tilde{y}\!\leq\!\hat{z}\!\leq\!\alpha\hat{z}_{\scriptscriptstyle PB}$ }
					\nodepart{third}
					\text{ Case 3. $0\!<\!\alpha\tilde{y}\!<\!\bar{z}\!<\!\hat{z}\!\leq\!\alpha\hat{z}_{\scriptscriptstyle PB}$ $\&$ $\bar{z}\!<\!\tilde{y}$ }
			}}
			edge from parent{
				node[kant, above] {}}
		};
		\end{tikzpicture}
	\end{center}
	
	\item $0<\tilde{y}\leq\bar{z}$:
	\begin{center}
		\begin{tikzpicture}[
		grow=right,
		level 1/.style={sibling distance=1cm,level distance=2.7cm},
		level 2/.style={sibling distance=4.5cm, level distance=6cm},
		edge from parent/.style={very thick,draw=blue!40!black!60,
			shorten >=5pt, shorten <=5pt},
		edge from parent path={(\tikzparentnode.east) -- (\tikzchildnode.west)},
		kant/.style={text width=2cm, text centered, sloped},
		every node/.style={text ragged, inner sep=2mm},
		punkt/.style={rectangle, rounded corners, shade, top color=white,
			bottom color=blue!50!black!20, draw=blue!40!black!60, very
			thick }
		]
		
		\node[punkt, text width=4.3em] {$0<\tilde{y}\leq\bar{z}$}
		child{ node[punkt, text width=5.3em] { $0<\hat{z}_{\scriptscriptstyle PB}<\tilde{y}$ }
			child {
				node[punkt,text width=22.3em] {No Case (Since $\hat{z}_{\scriptscriptstyle PB}\!<\!\tilde{y}\!\leq\!\bar{z}\!<\!\hat{z}\!\leq\!\alpha\hat{z}_{\scriptscriptstyle PB}$ $\&$ $\alpha\in(0,1)$)}
				edge from parent
				node[kant, below, pos=.6] {}
			}
			edge from parent{
				node[kant, above] {}}
		}
		child{ node[punkt, text width=5.3em] { $0<\tilde{y}\leq\hat{z}_{\scriptscriptstyle PB}$ }
			child {
				node[punkt,text width=22.3em]{ Case 5. $0\!<\!\alpha\tilde{y}\!<\!\tilde{y}\!\leq\!\bar{z}\!<\!\hat{z}\!\leq\!\alpha\hat{z}_{\scriptscriptstyle PB}$}
				edge from parent
				node[kant, below, pos=.6] {}
			}
			edge from parent{
				node[kant, above] {}}
		};
		\end{tikzpicture}	
	\end{center}
	
\end{itemize}
\textbf{(2) Variational Inequalities assuming $\bar{z}\ge\hat{z}$:} Find the free boundaries $\bar{z}>0$ (Bankruptcy), $\hat{z}>0$ ($(F+\eta)$-wealth level), and a function $\phi(\cdot,\cdot)\in C^{1}((0,\infty)\times\mathbb{R}^{+})\cap C^{2}((0,\infty)\times\mathbb{R}^{+}\setminus\{\bar{z}\})$ satisfying
\begin{equation}
\begin{cases}
(V1) \quad\frac{\partial \phi}{\partial z}(t,z)+(F+\eta)e^{-\gamma t}=0, & \hat{z}\leq z<\bar{z},\\
(V2) \quad \frac{\partial \phi}{\partial z}(t,z)+(F+\eta)e^{-\gamma t}\leq 0, & 0<z<\hat{z},\\
(V3)\quad \mathcal{L}\phi(t,z)+e^{-\gamma t}\left(\tilde{u}(z)-(d-w\bar{L})z\right)=0, & 0<z<\hat{z},\\
(V4)\quad \mathcal{L}\phi(t,z)+e^{-\gamma t}\left(\tilde{u}(z)-(d-w\bar{L})z\right)\leq 0, &\hat{z}\leq z<\bar{z},\\
(V5) \quad \phi(t,z)\ge e^{-\gamma t}\tilde{U}(z), & 0<z<\bar{z},\\
(V6) \quad \phi(t,z)= e^{-\gamma t}\tilde{U}(z), &z\ge\bar{z},\label{1.17}
\end{cases}
\end{equation}
for any $t\ge 0$, with the smooth fit conditions $\frac{\partial \phi}{\partial z}(t,\hat{z})=-(F+\eta)e^{-\gamma t}$ and $\frac{\partial^{2}\phi}{\partial z^{2}}(t,\hat{z})=0$.
Same as before, we provide a diagram to show the possible situations under this prerequisite.\\
\begin{center}
	\begin{tikzpicture}[
	grow=right,
	level 1/.style={sibling distance=1cm,level distance=3.7cm},
	level 2/.style={sibling distance=4.5cm, level distance=4.5cm},
	edge from parent/.style={very thick,draw=blue!40!black!60,
		shorten >=5pt, shorten <=5pt},
	edge from parent path={(\tikzparentnode.east) -- (\tikzchildnode.west)},
	kant/.style={text width=2cm, text centered, sloped},
	every node/.style={text ragged, inner sep=2mm},
	punkt/.style={rectangle, rounded corners, shade, top color=white,
		bottom color=blue!50!black!20, draw=blue!40!black!60, very
		thick }
	]
	
	\node[punkt, text width=4.5em] {$\bar{z}\ge\hat{z}>0$}
	child{ node[punkt, text width=5.5em] { $0<\hat{z}<\tilde{y}$ }
		child {
			node[punkt,text width=12em] {Case 7. $0<\hat{z}<\tilde{y}$ $\&$ $\bar{z}\ge\hat{z}$}
			edge from parent
			node[kant, below, pos=.6] {}
		}
		edge from parent{
			node[kant, above] {}}
	}
	child{ node[punkt, text width=5.5em] { $0<\tilde{y}\leq\hat{z}$ }
		child {
			node[punkt,text width=12em]{Case 6. $0<\tilde{y}\leq\hat{z}$ $\&$ $\bar{z}\ge\hat{z}$}
			edge from parent
			node[kant, below, pos=.6] {}
		}
		edge from parent{
			node[kant, above] {}}
	};
	\end{tikzpicture}	
	
\end{center}

\noindent As in the post-bankruptcy problem, we assume that ${\phi}(t,z)$ adopts a time-independent form, ${\phi}(t,z)= e^{-\gamma t} v(z)$, then the solution is obtained, see Appendix \ref{C.5}. We would like to stress that only one of the seven cases admits a solution.

After acquiring the closed form of $v(z)$,  \cite[Section 8, Theorem 8.5]{karatzas2000utility} indicates that $$V(x)=\inf\limits_{\lambda>0}[\tilde{V}(\lambda)+\lambda x]=\inf\limits_{\lambda>0}[v(\lambda)+\lambda x]$$ keeps true for a unique $\lambda^*>0$ under the differentiable property of $v(\cdot)$. Then the wealth threshold of bankruptcy, namely $\bar{x}$, can be calculated from the relationship $\bar{x}=-v^{\prime}(\bar{z})$. Therefore, given any initial wealth $x\ge F+\eta$, we get the optimal Lagrange multiplier $\lambda^{*}$ through
solving the equation, $x=-v^{\prime} (\lambda^{*})$. Furthermore, since the optimum $\lambda^{*}$ is the initial value of process (\ref{Zeta}), $Z^{*}(t)$, the optimal wealth process follows $X^{*}(t)=-v^{\prime}(Z^{*}(t))$. The optimal bankruptcy time satisfies $\tau^{*}=\inf\limits_{t\ge 0}\{X^{*}(t)\leq \bar{x}\}$. Moreover, recalling Lemma \ref{Lemma 1}, the optimal consumption and leisure strategies are
\begin{equation*}
	\begin{cases}
		c^{*}(t)=(Z^{*}(t))^{-\frac{1}{k}}\left(\frac{1-\delta}{\delta w}\right)^{\frac{(1-k)(1-\delta)}{k}}\mathbb{I}_{\{Z^{*}(t)\ge\tilde{y}\}}+L^{-\frac{(1-k)(1-\delta)}{\delta(1-k)-1}}(Z^{*}(t))^{\frac{1}{\delta(1-k)-1}}\mathbb{I}_{\{0<Z^{*}(t)<\tilde{y}\}},\\
		l^{*}(t)=(Z^{*}(t))^{-\frac{1}{k}}\left(\frac{1-\delta}{\delta w}\right)^{-\frac{\delta(1-k)-1}{k}}\mathbb{I}_{\{Z^{*}(t)\ge\tilde{y}\}}+L\mathbb{I}_{\{0<Z^{*}(t)<\tilde{y}\}},
	\end{cases}
\end{equation*}
and the optimal portfolio strategy is $\pi^{*}(t)=\frac{\theta}{\sigma}Z^{*}(t)v^{\prime\prime}(Z^{*}(t))$, which can be obtained from \cite[Section 5, Theorem 3]{he1993labor}.

\begin{remark}\label{Rem1}
In Figure \ref{Figure 1}, we plot the relationship between the optimal bankruptcy time and the continuous and stopping regions with respect to $Z^*(t)$, showing that the optimal bankruptcy time is the first time the process $Z^*(t)$ touch the upper barrier $\bar{z}$. The same plot can be done with respect to $X^{*}(t)=-v^{\prime}(Z^{*}(t))$: the convex property of $v(\cdot)$, see \cite[Section 3.4, Lemma 4.3]{karatzas1998methods}, indicates that $X^{*}(t)$ is a decreasing function of $Z^{*}(t)$, therefore, in this case the optimal bankruptcy time is the first time the process $X^*(t)$ touch a lower barrier $\bar{x}=-v^{\prime}(\bar{z})$.
\end{remark}

\section{Numerical Analysis}\label{Section 5}
We now implement the sensitivity analysis with respect to the input parameters. As baseline parameters, we consider the ones listed in Table \ref{Table 1}. These inputs satisfy conditions (\ref{1.3}), (\ref{1.14}) and (\ref{1.15}).
\begin{table}[t]
	\renewcommand\arraystretch{1.3}
	\centering
	\caption{Baseline Input Parameters}\label{Table 1}
	\begin{tabular}{cccccccccccccc} \hline
		\textbf{ $\delta$} & \textbf{ $k$ } & \textbf{$r$} &\textbf{$\mu$} & \textbf{ $\sigma$ } & \textbf{ $\gamma$ } & \textbf{ $d$ } & \textbf{ $w$ } & \textbf{ $F$ }  & \textbf{ $\eta$ }  & \textbf{ $\alpha$ }& \textbf{ $\bar{L}$ } & \textbf{ $L$ } & \textbf{ $x$ }  \\ \hline
		0.6 & 3 & 0.05 & 0.1 & 0.2 & 0.3 & 0.3 & 1.5 & 0.96 & 0.0001 & 0.9 & 1 & 0.8 & 6.6  \\ \hline
	\end{tabular}
\end{table} 
\noindent The parameters $r$, $\mu$, $\sigma$, $\gamma$ and $\alpha$ are directly taken from \cite{jeanblanc2004optimal}. Whereas the fixed bankruptcy toll and the debt repayment amount in their study are $F=400\$$ and $d=125\$$, we set $F=0.96$ and $d=0.3$ such that the ratios of $F$ and $d$ in our and their research keep consistent. The same consideration is also applied for the setting of the initial wealth $x$.
As discussed in Section \ref{Section 4}, seven cases should be considered simultaneously and only one must be verified: in fact with these parameters, only Case 2, ``$0<\bar{z}<\alpha \tilde{y}\leq\hat{z}<\alpha\hat{z}_{PB}$'', admits a solution. With the above input parameters, we derive the output parameters: $B_{2}$, $\bar{z}$, $\hat{z}$, $\tilde{y}$, the set of wealth thresholds $\{\hat{x}, \bar{x}, \tilde{x}\}$, the optimal Lagrange multiplier $\lambda^{*}$ and the value function $V(x)$. The results are listed in Table \ref{Table 2} and \ref{Table 3}. The $fval$, i.e., the value of the function at its zero, in Tables \ref{Table 2} represents the maximum error generating from using the $fsolve$ function of MATLAB to solve the non-linear equations (\ref{Ap.19}), (\ref{Ap.20}) and (\ref{Ap.21}): as expected, the $fval$ value is close to zero, i.e., the algorithm correctly solve the system of equations. 
\begin{table}[t]
	\renewcommand\arraystretch{1.3}
	\centering
	\caption{Output Parameters, Optimal Lagrange Multiplier and Value Function}\label{Table 2}
	\begin{tabular}{cccccc|c} \hline
		\textbf{ $B_{2}$} & \textbf{ $\hat{z}$} & \textbf{ $\tilde{y}$} & \textbf{ $\bar{z}$} & \textbf{ $V(x)$ } &\textbf{$\lambda^{*}$} & $fval$ \\ \hline
		5.3104 & 2.6126 & 0.3280  & 0.1591 & -2.7741 & 0.3110 & 3.6973e-11 \\ \hline
	\end{tabular}
\end{table}

\begin{table}[t]
	\renewcommand\arraystretch{1.3}
	\centering
	\caption{Wealth Thresholds}\label{Table 3}
	\begin{tabular}{cccc} \hline
		\textbf{ $\hat{x}$} & \multicolumn{2}{c}{\textbf{ $\tilde{x}$ }} & \textbf{$\bar{x}$} \\ 
		\textbf{ $F\!+\!\eta$-wealth level} & \multicolumn{2}{c}{\textbf{ minimum wealth level for $l^{*}(t)\!=\!L$ }} & \textbf{bankruptcy wealth level}\\ \hline
		0.9601  & \multicolumn{2}{c}{6.3164} & 10.0651 \\ \hline
	\end{tabular}
\end{table}

We first of all want to stress that in this case the optimal bankruptcy time is 0, since the initial wealth, 6.6, is lower than the optimal bankruptcy barrier $\bar{x}= 10.0651$, i.e., the starting wealth is inside the stopping region.

\paragraph{Sensitivity of optimal solutions with respect to the risk aversion coefficient $k$:}~{}

In this part, we use the Monte Carlo method to simulate the single path of optimal wealth process and consumption-portfolio-leisure strategy by taking different values of $k$ for discovering the sensitivity of optimal solutions to the risk aversion coefficient. Parameters are the ones in Table \ref{Table 1}, with the exception of the initial wealth which is set to 25 instead of 6.6 for observing the bankruptcy mechanism. The agent with a higher value of $k$ prefers to have a higher wealth threshold for declaring bankruptcy ($\bar{x}=7.4777$ for $k=2$, $\bar{x}=10.0651$ for $k=3$ and $\bar{x}=12.4239$ for $k=4$). This is reasonable, as shown in Figure \ref{Figure 2}: the more risk-averse agent tends to smooth the consumption and leisure, and invest less in the risky asset such that maintaining a relatively higher wealth level. Moreover, from the optimal trajectories of leisure, it can be observed that the relatively low fixed and flexible bankruptcy costs, $F = 0.96$ and $1-\alpha= 0.1$, and the high bankruptcy
wealth thresholds enable the agent to enjoy the maximum leisure rate $L=0.8$ even after suffering the wealth shrinkage caused by declaring bankruptcy. Therefore, the leisure processes corresponding to different $k$ values are fully identical. Finally, in Figure \ref{Figure 2} we zoom close to the bankruptcy time, to show the discontinuity of the optimal strategies.

\begin{figure}[t]
	\centering
	\caption{Trajectories of Optimal Wealth and Control Strategies w.r.t. Risk Aversion Coefficient}\label{Figure 2}
	\includegraphics[width=1\linewidth,height=0.45\textheight]{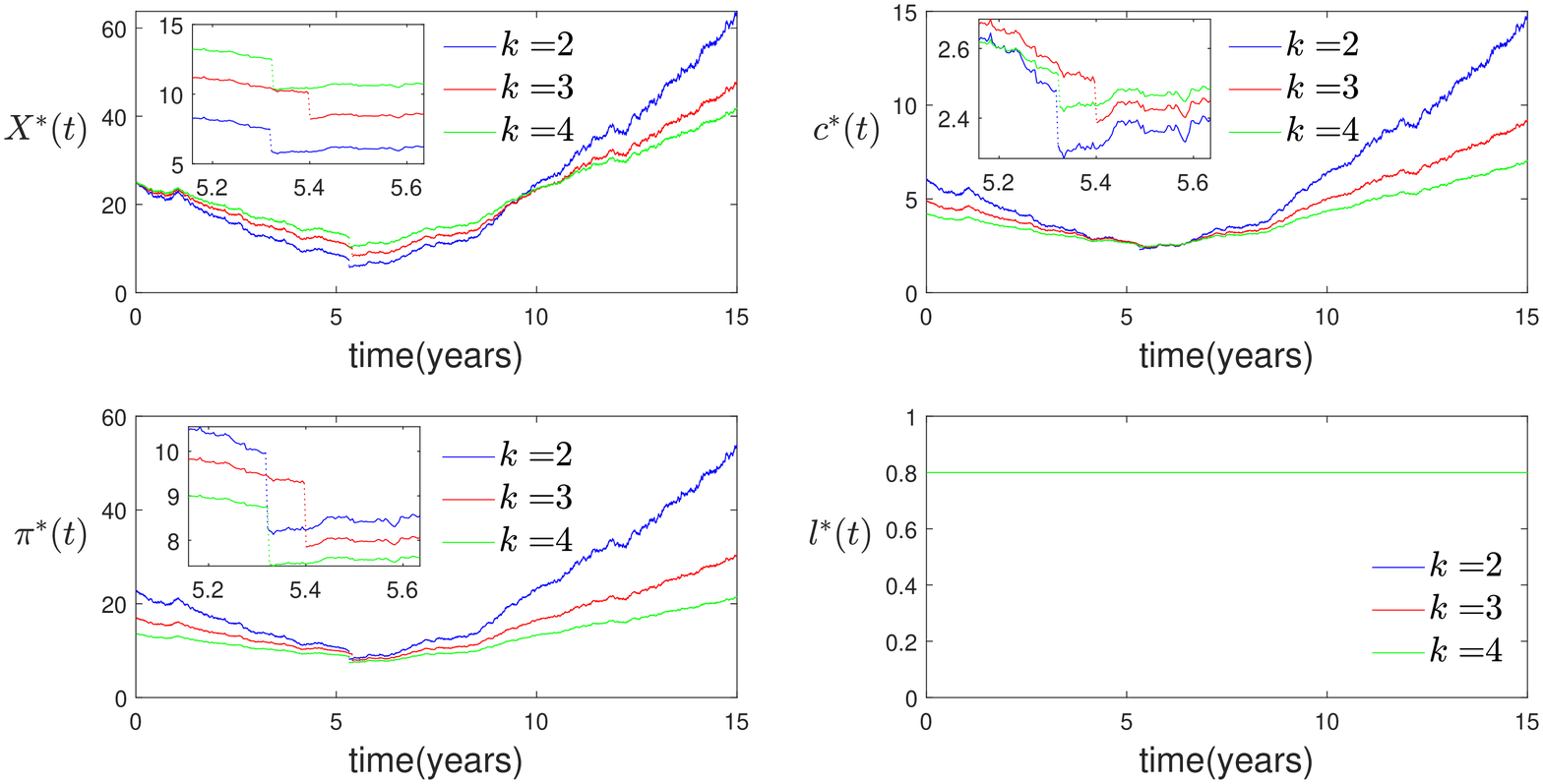}
\end{figure} 

\paragraph{Sensitivity of the optimal bankruptcy threshold with respect to the market risk premium jointly with the risk aversion coefficient:}~{}

The parameter $\theta=\frac{\mu-r}{\sigma}$, which is the Sharpe-Ratio, measures the market risk premium. For the purpose of discovering the relationship between $\bar{x}$ and $\theta$, we keep $r$, $\sigma$ constant and change the value of $\mu$ from 0.035 to 0.2, with an interval of 0.0075, which leads the value of $\theta$ to change from 0.1 to 1.2. Figure \ref{Figure 3} shows that, with a lower risk premium, that is, a smaller value of $\theta$, the agent prefers to set a higher wealth threshold to more easily get rid of the debt thanks to the bankruptcy. Contrarily, a better market performance entails the agent a stronger ability to bear the debt repayment; hence, she sets a lower wealth threshold to avoid suffering the bankruptcy costs. Furthermore, there is a positive relationship between the optimal wealth threshold of bankruptcy and the risk aversion level $k$, which is already clarified through Figure \ref{Figure 2}.

\begin{figure}[t]
	\centering
	\caption{Market Risk Premium, Risk Aversion Coefficient and Bankruptcy Threshold}\label{Figure 3}
	\includegraphics[width=1\linewidth,height=0.29\textheight]{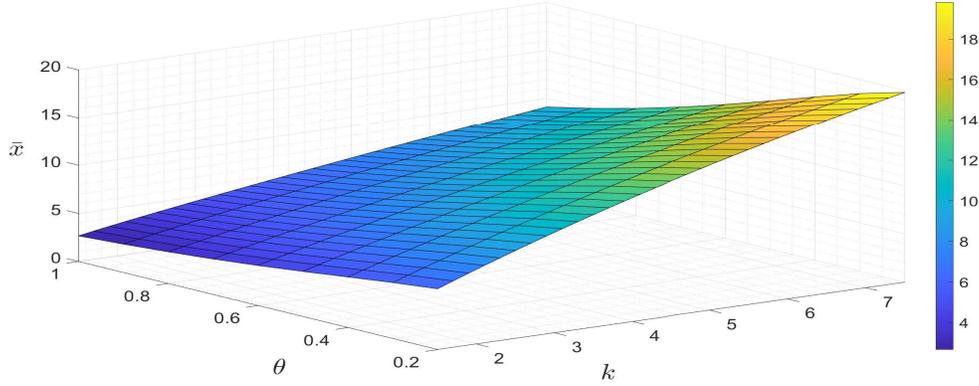}
\end{figure}
\noindent 

\paragraph{Sensitivity of the optimal bankruptcy threshold with respect to the debt repayment, the fixed and flexible bankruptcy cost:}~{}

The wealth process suffers a shrinkage through an affine function $\alpha(X(\tau)-F)$ declaring bankruptcy, and the debt repayment is exempted. Hence, $F$ and $(1-\alpha)$ can be treated as the fixed and flexible cost of bankruptcy, and $d$ is the benefit of bankruptcy. In order to discover the influence of the bankruptcy option, we provide Figures \ref{Figure 4}-\ref{Figure 5} to illustrate the sensitivity of optimal wealth threshold of bankruptcy with respect to the coefficients $\alpha$, $d$ and $F$. 
\begin{figure}[t]
	\centering
	\caption{Flexible Bankruptcy Cost, Debt Repayment and Bankruptcy Wealth Threshold}\label{Figure 4}
	\includegraphics[width=1\linewidth,height=0.29\textheight]{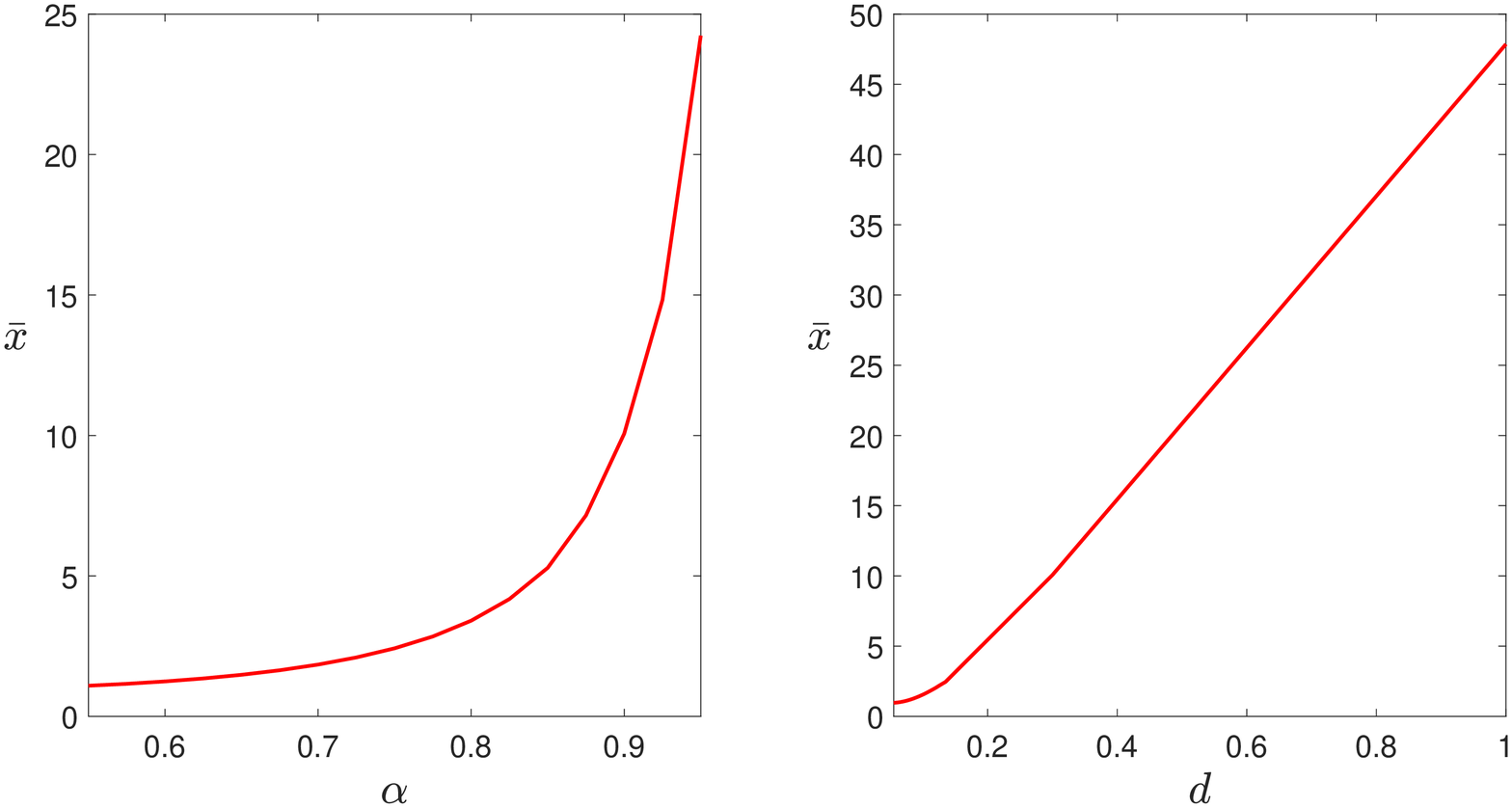}
\end{figure}
\noindent First considering the sensitivity of bankruptcy wealth threshold to the flexible cost coefficient, we can observe that $\bar{x}$ is an increasing function of $\alpha$. The rationale is the following: since $\alpha$ represents the proportion of wealth held after bankruptcy, a lower value of $\alpha$ indicates a higher cost such that the agent prefers to set a lower wealth threshold to avoid suffering the wealth shrinkage from bankruptcy. As for the relationship between $\bar{x}$ and $d$, it can be observed the same increasing and convex curve. When the debt repayment is higher, which implies that the benefit of bankruptcy is more attractive, the agent tends to take a higher threshold such that the wealth process satisfies the bankruptcy requirement $\bar{x}$ more easily to enjoy the debt exemption. However, this incentive becomes weaker as the debt repayment decreases, which leads to the convexity of the considering mapping. Finally, in Figure \ref{Figure 5} we provide a three-dimensional image to explain the sensitivity of bankruptcy wealth threshold to the parameter $F$ jointly with $k$. 
\begin{figure}[t]
	\centering
	\caption{Risk Aversion Coefficient, Fixed Bankruptcy Cost and Bankruptcy Wealth Threshold}\label{Figure 5}
	\includegraphics[width=1\linewidth,height=0.3\textheight]{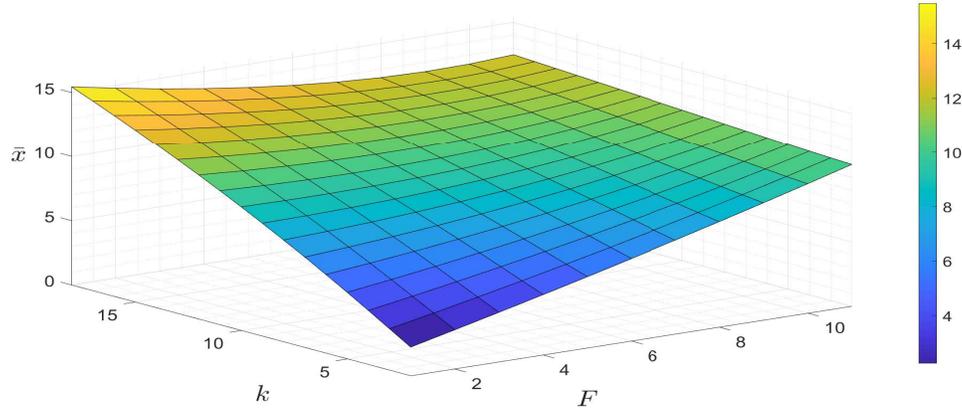}
\end{figure}
\noindent Since the value of $F$ adjusted according to \cite{jeanblanc2004optimal} is relatively low, the liquidity constraint triggered by $F+\eta$ is easy to be covered by the labour income. Thus, the role of $F$ is more related to the fixed cost of bankruptcy such that a higher value of $F$ will make the agent set a lower wealth threshold to avoid suffering the bankruptcy, which results in a monotonic decreasing relationship between $\bar{x}$ and $F$. However, $F$ is not only the cost of bankruptcy, but also can be regarded as the liquidity constraint to limit the agent's investment behaviour. In order to reflect the phenomenon that the role of $F$ is the trade-off between the liquidity constraint boundary of the pre-bankruptcy period and the fixed cost of bankruptcy, we conduct the same three-dimensional image with different input parameters, particularly, we follow the baseline of inputs in Table \ref{Table 1} and only change the values of $r=0.02$, $\mu=0.07$, $\sigma=0.15$, $\gamma=0.1$ and $\alpha=0.7$ to satisfy Condition (\ref{1.14}) also for larger values of $F$. From Figure \ref{Figure 6}, we find that the relationship between $\bar{x}$ and $F$ is not monotonous. When the risk aversion level is low, a positive relationship between $F$ and the bankruptcy threshold of wealth is observed. This is because a larger $F$ value, which is treated as the collateral recalling that $X(t)\ge F+\eta$ before bankruptcy, will reduce the agent's available capital and limit her investment behaviour. Therefore, the agent will set a higher bankruptcy wealth threshold to get rid of the limitation of the liquidity constraint. Whereas for the agent with deep risk aversion, liquidity constraints are less restrictive, and the parameter $F$ plays more as the role of the bankruptcy cost.


\begin{figure}[t]
	\centering
	\caption{Risk Aversion Coefficient, Fixed Bankruptcy Cost and Bankruptcy Wealth Threshold}\label{Figure 6}
	\includegraphics[width=1\linewidth,height=0.3\textheight]{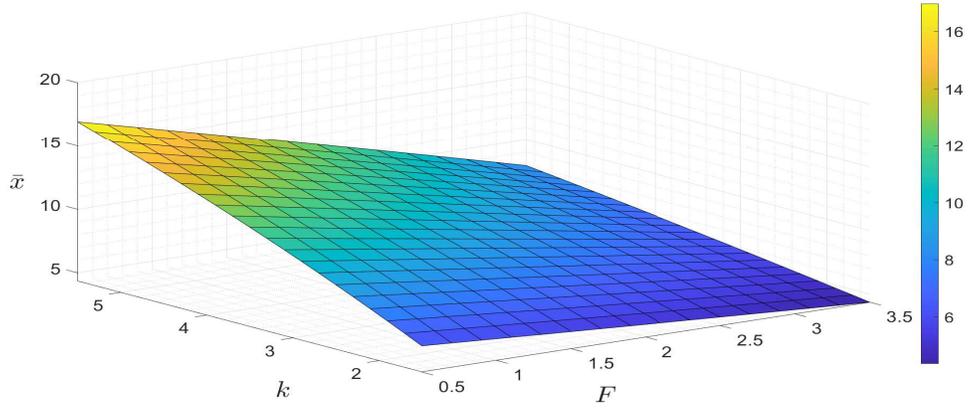}
\end{figure}

\paragraph{Influence of the bankruptcy option:}~{}

To study the influence of introducing the bankruptcy option, we also solve a pure optimal control problem without optimal stopping, see Appendix \ref{nob}. In Figure \ref{Figure 7}, we take the inputs baseline in Table \ref{Table 1} and show the optimal controls (as a function of the initial wealth) for the case with and without bankruptcy option models to reveal its influence. In the second case, we consider both the case with a liquidity constraint $X(t)>F+\eta$ and without liquidity constraint.
\begin{figure}[t]
	\centering
	\caption{Influence of Bankruptcy Option}\label{Figure 7}
	\hspace*{-1cm}\includegraphics[width=1.1\linewidth,height=0.3\textheight]{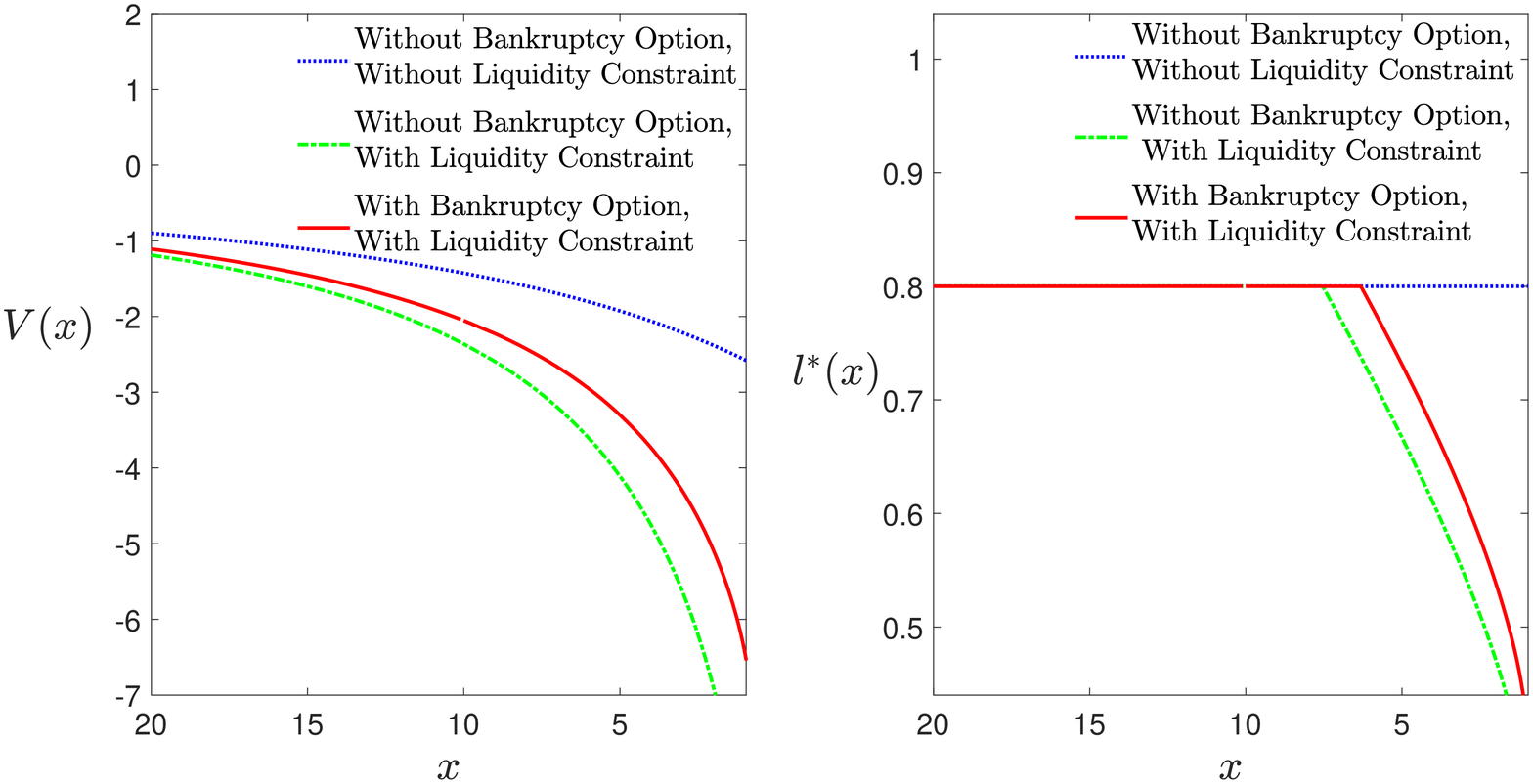}
	\hspace*{-1cm}\includegraphics[width=1.1\linewidth,height=0.3\textheight]{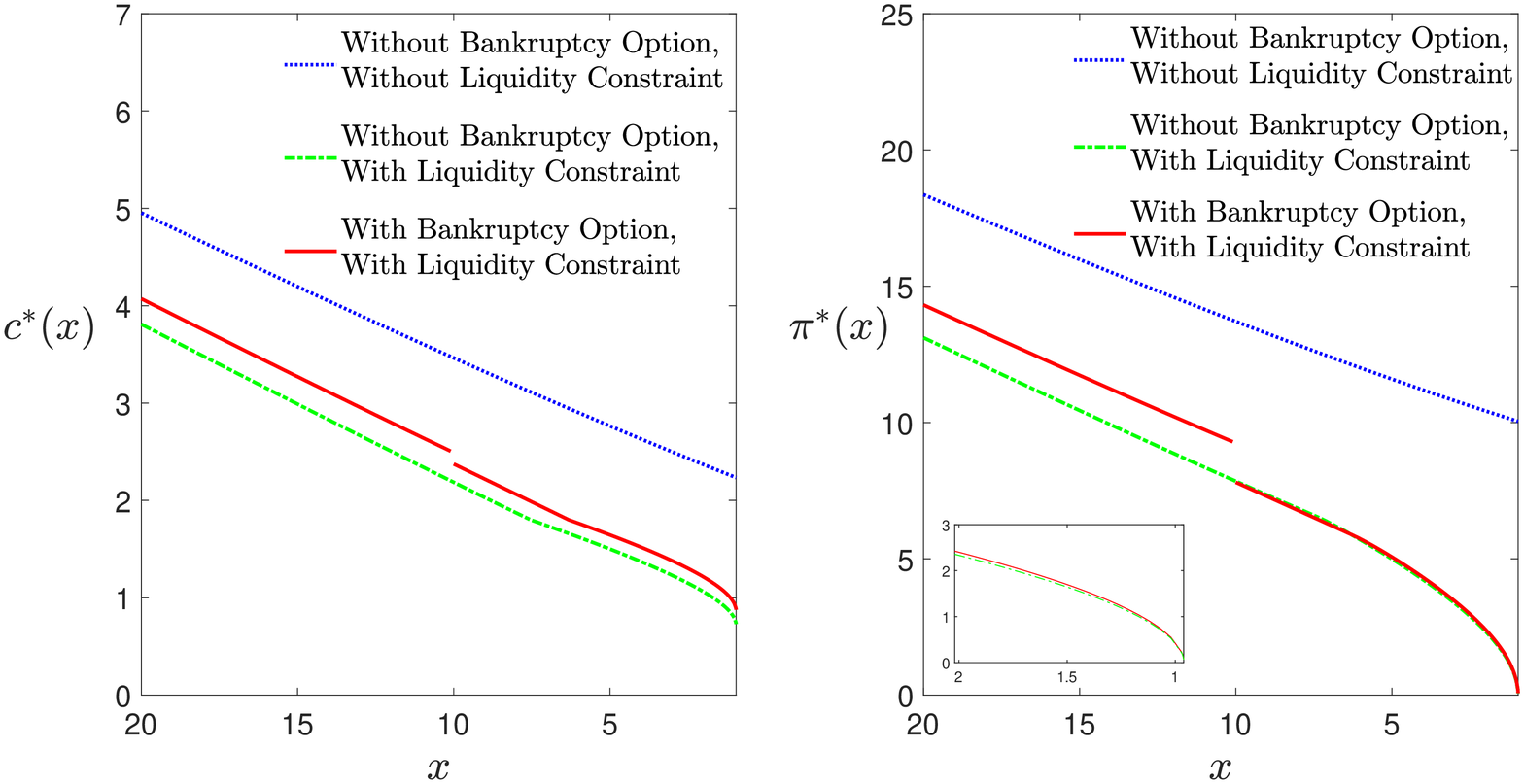}
\end{figure}
\noindent From Figure \ref{Figure 7}, five phenomena should be noticed by comparing the two curves with liquidity constraint. First, due to the additional bankruptcy option, the value function is always greater than the value function without such an option at any given initial wealth level. Second, before the occurring of optimal stopping time, the additional option offers the agent a better circumstance, and the optimal consumption-portfolio-leisure policies always dominates the corresponding policies without bankruptcy option model. Third, in order to meet the needs of obtaining utility from consumption and leisure, the agent with the bankruptcy option and a low initial wealth immediately file bankruptcy, facing a shrinkage of wealth, which causes a downward jump in consumption and allocation in the risky asset. Fourth, when a liquidity constraint is considered, the optimal leisure rate decreases for low values of the initial wealth $x$, since the agent needs to spend more time working to get a larger wage, therefore not exploiting the full leisure, to face debt and liquidity constraint. Finally, the amount of money allocated in the risky asset is 0 as the wealth level drops to the liquidity constraint boundary ($F+\eta=0.9601$). This is to avoid that the wealth process violates the liquidity constraint due to the risky asset's fluctuation. However, the optimal consumption always keeps positive even as the wealth approaches the liquidity boundary since the agent continues to obtain the labour income. Additionally, we can observe that the optimal solutions without the liquidity constraint always dominates the solutions with this extra constraint.

\paragraph{Influence of the labour income:}~{}

In order to investigate the influence of introducing the leisure rate as an additional control variable and thus the labour income, we first of all compare the results with the optimal bankruptcy model in \cite{jeanblanc2004optimal}, therefore with full leisure and no labour income, to conduct the numerical analysis to discover the sensitivity with respect to the presence of leisure rate and labour income. 

\begin{figure}[t]
	\centering
	\caption{Influence of Labour Income}\label{Figure 8}
	\includegraphics[width=1\linewidth,height=0.25\textheight]{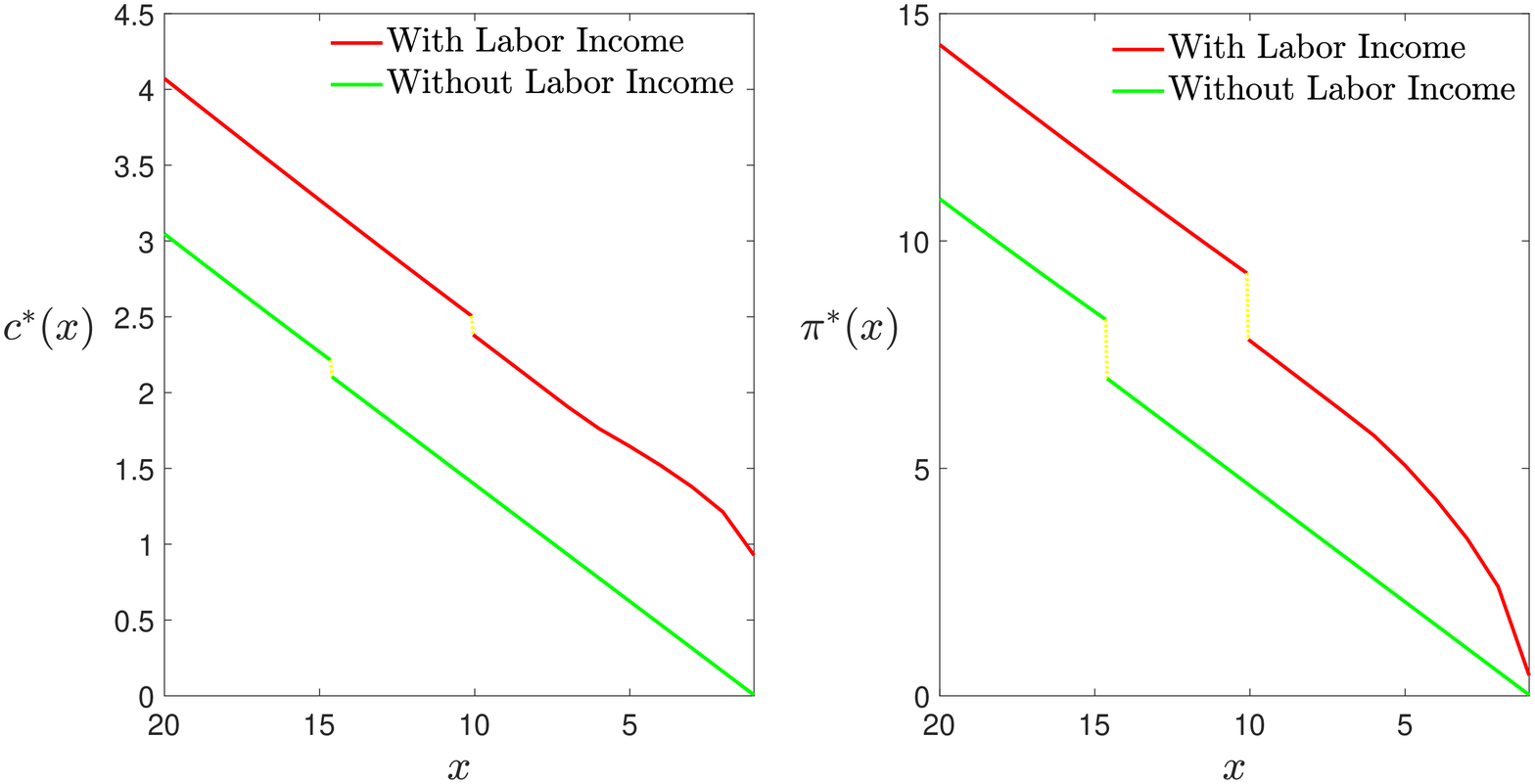}
\end{figure}
\noindent Figure \ref{Figure 8} shows that there exists a downward jumps of optimal control strategies for both full and selectable leisure models due to the shrinkage of wealth at the moment of declaring bankruptcy. Moreover, it can be observed that the optimal consumption and portfolio policies of the optimization problem introducing the leisure as a control variable always keep dominating the corresponding policies of the model with full leisure rate since the agent can earn the additional income from labour. Meanwhile, comparing the wealth levels corresponding to the downward jumps, we can observe that the agent with the full leisure rate tends to have a higher bankruptcy wealth threshold such that more wealth can be taken into the post-bankruptcy period to support the further consumption ($\bar{x}=10.0651$ for the ``with labour income case'', $\bar{x}=14.6091$ for the ``without labour income -full leisure- case''). Secondly, in Figure \ref{Figure 9} we discover the sensitivity of bankruptcy wealth threshold with respect to different values of $L$ within the optimization model considering leisure selection.
\begin{figure}[t]
	\centering
	\caption{Optimal Bankruptcy Threshold w.r.t. Maximum Leisure Rate}\label{Figure 9}
	\includegraphics[width=1\linewidth,height=0.25\textheight]{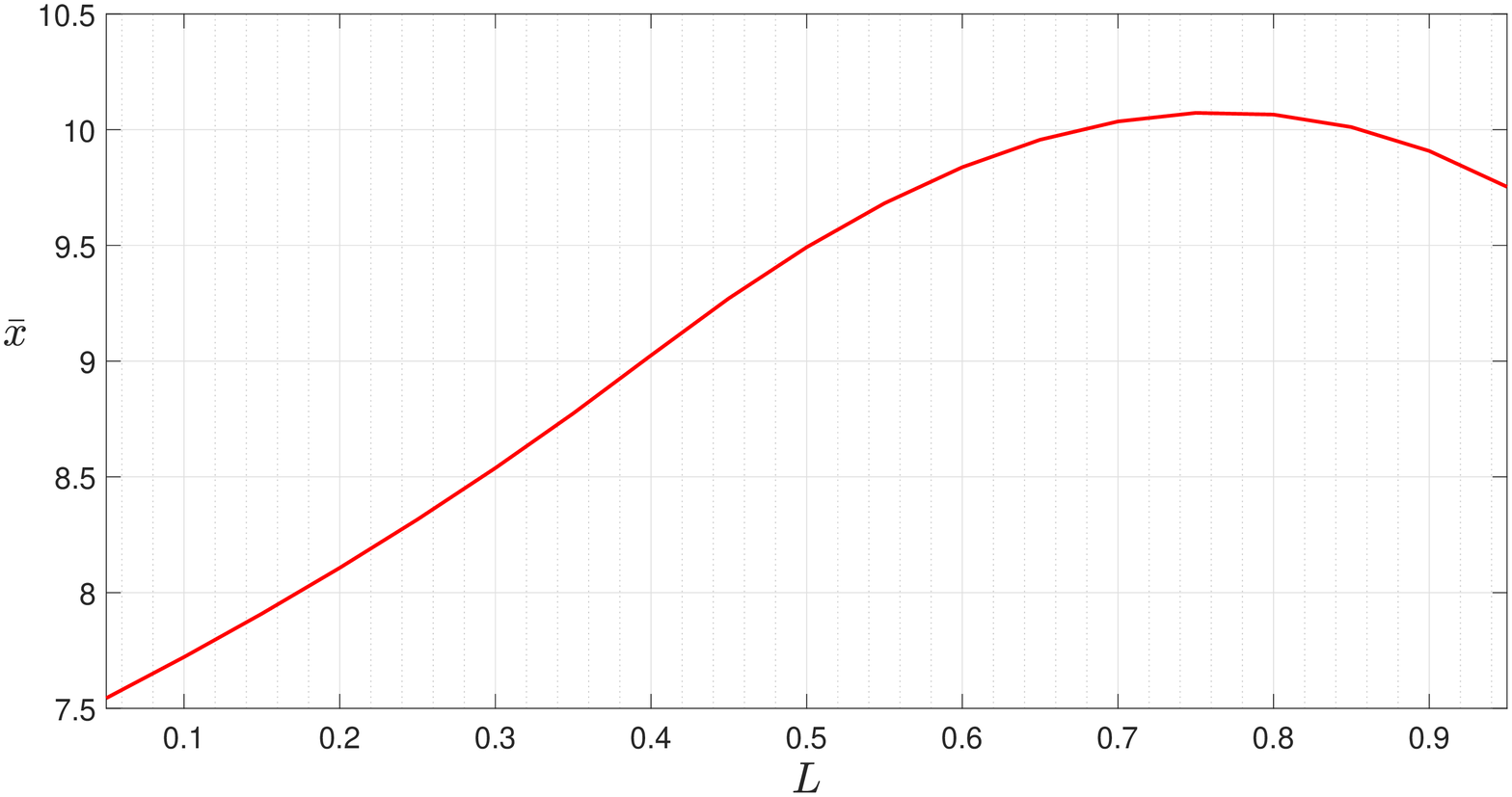}
\end{figure}
\noindent Figure \ref{Figure 9} shows that the bankruptcy wealth threshold is not a monotonic function of $L$; the relationship between these two variables works in a complex way and can be analysed briefly into two separated pieces: one piece is with a relatively low value of $L$, and another is with higher $L$ value. Since $L$ represents the upper boundary of the leisure variable $l(t)$, $(\bar{L}-L)$ represents the minimum mandatory working rate. In the first part, the low enough value of $L$ obliges the agent to allocate most of the time on work, thereby gaining enough labour income to afford the debt repayment. Hence, she prefers to set a smaller wealth threshold for bankruptcy to avoid encountering the wealth shrinkage. However, for the second part, the restriction on the optimal leisure choice caused by $L$ becomes weaker as its value increases, the agent gains the utility from taking more leisure, which leads to a decreasing labour income, a decreasing consumption since leisure and consumption are substitute, and a corresponding lower bankruptcy wealth threshold. Finally, we conduct the sensitivity analysis to the wage rate and present the result in Figure \ref{Figure 10}, in which an inverse relationship occurs between $\bar{x}$ and $w$. Because the economic situation of the agent becomes worse with a lower wage rate $w$, she tends to set a higher critical wealth level such that more wealth is retained for supporting the post-bankruptcy life.

\begin{figure}[t]
	\centering
	\caption{Optimal Bankruptcy Threshold w.r.t. Wage Rate}\label{Figure 10}
	\includegraphics[width=1\linewidth,height=0.25\textheight]{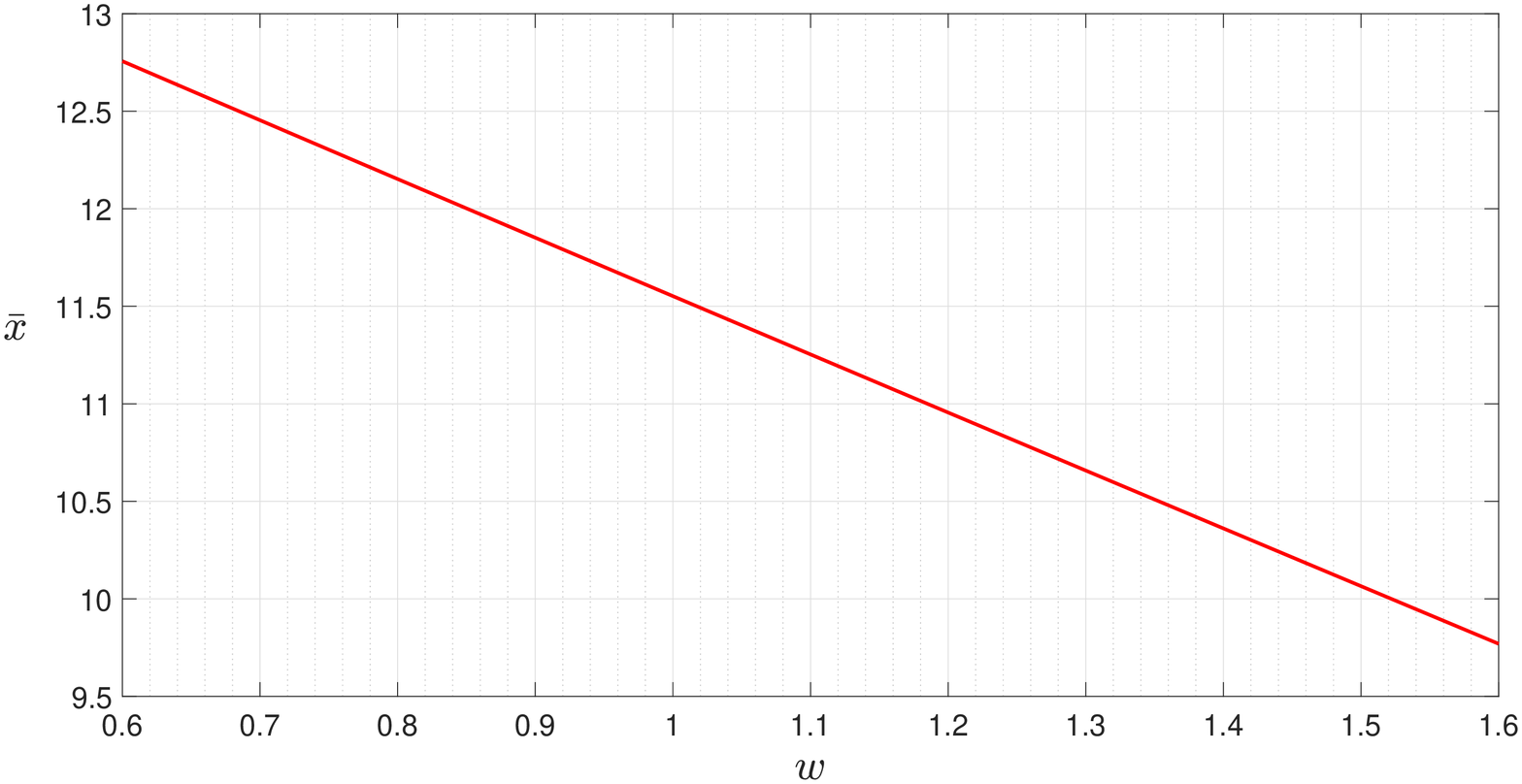}
\end{figure}

\section{Conclusion}\label{Section 6}
In this work the optimal consumption-portfolio-leisure and bankruptcy problem concerning a power utility function has been solved semi-analytically. By the Legendre-Fenchel transform, we have established the duality between the optimization problem with the individual's shadow price problem, which results in a system of variational inequalities then enables us to obtain the closed-form solutions. The optimal wealth and control strategies are represented as functions of wealth's dual process, $Z(t)$. Then we have proved that the optimal policy for the agent is to file bankruptcy at the first hitting time of the optimal wealth process to a critical wealth level, which is the boundary separating the continuous and stopping regions of the corresponding stopping time model. We have also conducted the sensitivity analysis of this wealth threshold to critical parameters. The bankruptcy wealth threshold is the increasing function of both $d$, which can be treated as the benefit of declaring bankruptcy, and $\alpha$, with $1-\alpha$ representing the flexible cost of bankruptcy. Whereas, the non-monotonic relationship between the bankruptcy wealth threshold and $F$ is because $F$ performs a trade-off between the liquidity constraint boundary in the pre-bankruptcy period and the fixed cost of bankruptcy. Regarding the effect of labour income, we show that the bankruptcy wealth threshold is a concave function of the upper bound of the leisure rate, $L$, that is, it first increases and then decreases: a high value for $L$ permits the agent to have large utility from leisure, while a low value of $L$ ``forces'' the agent to get a large wage, even if low utility from leisure. Moreover, the bankruptcy wealth threshold is strictly decreasing for the wage rate since a worse economic situation requires more wealth to support the post-bankruptcy period.

\bibliographystyle{plainnat} 

\bibliography{ReferenceOfP1}

\newpage
\setcounter{page}{1}
\begin{center}
	\title{\bf Effect of Labour Income on the Optimal Bankruptcy Problem - Online Appendix} 
	\author{\sffamily  Guodong Ding$^{1}$, Daniele Marazzina$^{1,2}$\\ \sffamily\small $^1$ Department of Mathematics, Politecnico di Milano \\ \sffamily\small I-20133, Milano, Italy \\ \sffamily\small $^2${ Corresponding Author. E-mail: daniele.marazzina@polimi.it}}
\end{center}

\appendix
\section{Appendix of Section \ref{Section 2}}

\subsection{Proof of Lemma \ref{Lemma 1}} \label{A.1}
In order to get rid of the constraint $0\leq l\leq L$, we begin with the maximization of $\{u(c,l)-(c+wl)y\}$ with $l\in \mathbb{R}$. From the first-order derivative conditions with respect to $c$ and $l$, we obtain the following equations
	\begin{equation}
	\begin{cases}
	l^{(1-\delta)(1-k)}c^{\delta(1-k)-1}=y,\\
	\frac{1-\delta}{\delta}l^{(1-\delta)(1-k)-1}c^{\delta(1-k)}=w y.\label{Ap.1}
	\end{cases}
	\end{equation}
	The above system entails the optimal consumption and leisure policies as
	\begin{equation*}
	\begin{cases}
	\hat{c}=y^{-\frac{1}{k}}\left(\frac{1-\delta}{\delta w}\right)^{\frac{(1-k)(1-\delta)}{k}},\\
	\hat{l}=y^{-\frac{1}{k}}\left(\frac{1-\delta}{\delta w}\right)^{\frac{1-\delta(1-k)}{k}},
	\end{cases}
	\end{equation*}
	$y>0$ guarantees the positive values of $\hat{c}$ and $\hat{l}$. 
	Besides, with treating $\hat{c}$ and $\hat{l}$ as functions of $y$, we get
	\begin{equation*}
	\tilde{u}^{\prime}(y)=\frac{\partial u}{\partial \hat{c}}\frac{d \hat{c}}{dy}+\frac{\partial u}{\partial \hat{l}}\frac{d \hat{l}}{dy}-(\hat{c}+w\hat{l})-\frac{d\hat{c}}{dy}y-wy\frac{d\hat{l}}{dy}=-(\hat{c}+w\hat{l}).
	\end{equation*}
	Then the remaining constraint of the Legendre-Fenchel transform of Equation (\ref{1.4}) is $\hat{l}\leq L$. Since
	\begin{equation*}
	\hat{l}\leq L \quad\Leftrightarrow\quad y\ge \left(\frac{1-\delta }{\delta w}\right)^{1-\delta(1-k)}L^{-k}\triangleq\tilde{y},
	\end{equation*}
	the optimal leisure plan $\hat{l}$ also satisfies the constraint $l\leq L$ under the condition $y\ge\tilde{y}$. Conversely, this constraint comes into force to make the optimal leisure to be $L$ for the interval $y<\tilde{y}$. Thereafter, we can summarize as follows,
	\begin{equation}
	\hat{l}=y^{-\frac{1}{k}}\left(\frac{1-\delta}{\delta w}\right)^{-\frac{\delta(1-k)-1}{k}}\mathbb{I}_{\{y\ge\tilde{y}\}}+L\mathbb{I}_{\{0<y<\tilde{y}\}}.\label{Ap.2}
	\end{equation}
	The first equation in (\ref{Ap.1}) implies the relationship between the optimal consumption and leisure, $\hat{c}=\left[yl^{-(1-k)(1-\delta)}\right]^{\frac{1}{\delta(1-k)-1}}$. Taking Equation (\ref{Ap.2}) into this relationship, we obtain
	\begin{equation*}
	\hat{c}=y^{-\frac{1}{k}}\left(\frac{1-\delta}{\delta w}\right)^{\frac{(1-k)(1-\delta)}{k}}\mathbb{I}_{\{y\ge\tilde{y}\}}+L^{-\frac{(1-k)(1-\delta)}{\delta(1-k)-1}}y^{\frac{1}{\delta(1-k)-1}}\mathbb{I}_{\{0<y<\tilde{y}\}}.
	\end{equation*}
	Finally, we can deduce the Legendre-Fenchel transform $\tilde{u}(y)$ directly by substituting $\hat{c}$ and $\hat{l}$ into Equation (\ref{1.4}).

\subsection{Proof of Proposition \ref{Proposition 1}} \label{A.2}
Referring to \cite[Section 3.3, Remark 3.3]{karatzas1998methods}, we first apply the It\^o's formula to $\xi(t)X(t)$, $\forall t\in[0,\tau]$,
	\begin{equation*}
	\begin{split}
	d\left(\xi(t)X(t)\right)&=\xi(t)dX(t)+X(t)d\xi(t)\\
	&=-\xi(t)\left(c(t)+d+wl(t)-w\bar{L}\right)dt+\pi(t)\left(\mu-r\right)\xi(t)dt+\sigma\xi(t)\pi(t)dB(t)\\
	&=-\xi(t)\left(c(t)+d+wl(t)-w\bar{L}\right)dt++\sigma\xi(t)\pi(t)d\tilde{B}(t),
	\end{split}
	\end{equation*}
	in which $\tilde{B}(t)$ is the Brownian motion under the $\tilde{\mathbb{P}}$ measure mentioned in Equation (\ref{1.2}). Taking the integral on both sides of the above equation from 0 to $\tau$, we obtain
	\begin{equation*}
	\int_{0}^{\tau}\xi(t)\left(c(t)+d+wl(t)-w\bar{L}\right)dt+\xi(\tau)X(\tau)=x+\int_{0}^{\tau}\sigma\xi(t)\pi(t)d\tilde{B}(t).
	\end{equation*}
	The left-hand side can be rewritten as
	\begin{equation*}
	\int_{0}^{\tau}\!\!\xi(t)(c(t)\!+\!d\!+\!wl(t)\!-\!w\bar{L})dt\!+\!\xi(\tau)X(\tau)\!=\!\int_{0}^{\tau}\!\!\xi(t)(c(t)\!+\!wl(t))dt\!+\!\xi(\tau)\left(X(\tau)\!-\!\frac{d\!-\!w\bar{L}}{r}\right)\!+\!\frac{d\!-\!w\bar{L}}{r},
	\end{equation*}
	the condition $X(\tau)\ge F+\eta\ge\frac{d-w\bar{L}}{r}$ from the definition of admissible control set ensures that the left-hand side is bounded below by the constant $\frac{d-w\bar{L}}{r}$, so the It\^o integral on the right-hand side is proved to be a $\tilde{\mathbb{P}}$-supermartingale by means of Fatou's Lemma. Then, taking the expectation on both sides under the $\tilde{\mathbb{P}}$ measure, we have
	\begin{equation*}
	\tilde{\mathbb{E}}\left[\int_{0}^{\tau}\xi(t)(c(t)+d+wl(t)-w\bar{L})dt+\xi(\tau)X(\tau)\right]=x+\tilde{\mathbb{E}}\left[\int_{0}^{\tau}\sigma\xi(t)\pi(t)d\tilde{B}(t)\right]\leq x,
	\end{equation*}
	which endows us with the desired budget constraint through converting the measure to $\mathbb{P}$,
	\begin{equation*}
	\mathbb{E}\!\left[\!\!\int_{0}^{\tau}\!\!H(t)\!\left(c(t)\!+\!d\!+\!wl(t)\!-\!w\bar{L}\right)dt\!+\!H(\tau)X(\tau)\!\right]\!\!=\!\tilde{\mathbb{E}}\!\left[\!\!\int_{0}^{\tau}\!\!\xi(t)(c(t)\!+\!d\!+\!wl(t)\!-\!w\bar{L})dt\!+\!\xi(\tau)X(\tau)\!\right]\!\!\leq\! x. 
	\end{equation*}
	
\section{Appendix of Section \ref{Section 3}}

\subsection{Proof of Theorem \ref{Theorem 1}} \label{B.1}
Before proving Theorem \ref{Theorem 1}, we insert a lemma which helps us to prove the theorem. 

\begin{lemma}\label{Lemma B.1}
	For any given initial wealth $x\ge 0$, and any given and progressively measurable consumption and leisure processes, $c(t)$, $l(t)$, satisfying
	$\sup\limits_{\tau\in\mathcal{T}}\mathbb{E}\left[\int_{0}^{\tau}\!H(t)(c(t)\!+\!wl(t)\!-\!w\bar{L})dt\right]\!\leq\! x$,
	with $\mathcal{T}$ standing for the set of $\mathcal{F}_{t}$-stopping times,
	there exists a portfolio process $\pi(t)$ making
	\begin{equation*}
	X^{x,c,\pi,l}(t)\ge 0, \quad\forall t\ge 0,
	\end{equation*} 
	holds almost surely.
\end{lemma}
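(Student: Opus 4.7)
The plan is to follow a super-hedging/martingale representation argument in the spirit of \cite[Section 3.9]{karatzas1998methods}, which is the framework already invoked throughout the paper. First, I introduce the shorthand $\tilde{c}(t)\triangleq c(t)+wl(t)-w\bar{L}$, so that the hypothesis reads $\sup_{\tau\in\mathcal{T}}\mathbb{E}\left[\int_{0}^{\tau}H(t)\tilde{c}(t)dt\right]\leq x$, and the post-bankruptcy wealth SDE becomes $dX(t)=\left[rX(t)+\pi(t)(\mu-r)-\tilde{c}(t)\right]dt+\sigma\pi(t)dB(t)$. The goal is then to exhibit a portfolio $\pi$ making the resulting wealth $X$ non-negative.

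Next, I would introduce the Snell-envelope-type process
\[
N(t)\triangleq\underset{\tau\geq t,\,\tau\in\mathcal{T}}{\operatorname{ess\,sup}}\,\mathbb{E}\left[\int_{0}^{\tau}H(s)\tilde{c}(s)ds\,\Big|\,\mathcal{F}_{t}\right],
\]
which by construction satisfies $N(0)\leq x$ and, by taking $\tau=t$ in the essential supremum, dominates $\int_{0}^{t}H(s)\tilde{c}(s)ds$. Standard Snell-envelope theory yields that $N$ is a right-continuous supermartingale. Because the filtration is the augmented Brownian one and the market $\mathcal{M}$ is complete (as noted in Section \ref{Section 2}), the combination of the Doob--Meyer decomposition with the Brownian martingale representation theorem provides a predictable process $\phi(t)$ and a non-decreasing adapted process $A(t)$ with $A(0)=0$ satisfying $N(t)=N(0)+\int_{0}^{t}\phi(s)dB(s)-A(t)$.

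Then I would define the candidate wealth and portfolio by
\[
X(t)\triangleq\frac{1}{H(t)}\left[N(t)-\int_{0}^{t}H(s)\tilde{c}(s)ds\right]+\frac{x-N(0)}{\xi(t)},\qquad \pi(t)\triangleq\frac{\phi(t)+\theta H(t)X(t)}{\sigma H(t)},
\]
where the added scalar $(x-N(0))/\xi(t)\geq 0$ corresponds to investing the non-negative slack $x-N(0)$ entirely in the bond. A direct application of It\^o's formula to $H(t)X(t)$ together with $dH(t)=-rH(t)dt-\theta H(t)dB(t)$ shows that this pair solves the post-bankruptcy wealth SDE with $X(0)=x$. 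Non-negativity of $X(t)$ is then immediate: both summands in its definition are non-negative, the first by the Snell-envelope dominance property and the second by the assumption.

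The main obstacle I anticipate is the technical justification of the decomposition together with the admissibility of $\pi$: one has to ensure $N$ is well-defined despite $\tilde{c}$ being possibly negative (for this, a truncation argument bounding $\tilde{c}^{-}$ and passing to the limit via monotone convergence is natural), and verify the local integrability $\int_{0}^{t}\pi^{2}(s)ds<\infty$ a.s.\ so that the pair $\pi$ meets Definition \ref{Definition 1}. These are precisely the issues resolved in \cite[Section 3.9, Theorem 9.4]{karatzas1998methods}, whose argument can be borrowed with only minor modifications to accommodate the labour-income term $w(\bar{L}-l(t))$ that has been absorbed into $\tilde{c}$.
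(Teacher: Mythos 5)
Your overall strategy --- Snell envelope of $K(t)=\int_{0}^{t}H(s)\tilde{c}(s)ds$, Doob--Meyer decomposition, Brownian martingale representation, then reading off the portfolio from the integrand --- is exactly the route the paper takes. The verification step, however, fails for the specific candidate you wrote down. With $N(t)=N(0)+\int_{0}^{t}\phi(s)dB(s)-A(t)$ and $H(t)X(t)=N(t)-K(t)+(x-N(0))\tilde{Z}(t)$ (recall $H/\xi=\tilde{Z}$ and $d\tilde{Z}=-\theta\tilde{Z}dB$), It\^o's formula gives
\[
d\bigl(H(t)X(t)\bigr)=\phi(t)dB(t)-dA(t)-H(t)\tilde{c}(t)dt-(x-N(0))\theta\tilde{Z}(t)dB(t),
\]
whereas a self-financing wealth process financing the withdrawal $\tilde{c}$ must satisfy $d(HX)=\bigl(\sigma\pi(t)H(t)-\theta H(t)X(t)\bigr)dB(t)-H(t)\tilde{c}(t)dt$. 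No choice of $\pi$ can absorb the finite-variation term $-dA(t)$ into the stochastic integral, so your $X$ does not solve the wealth SDE (it is a genuine supermartingale-deflated process, not a wealth process), and the lemma requires $X^{x,c,\pi,l}$ to be the solution of that SDE. There is a secondary bookkeeping slip as well: even ignoring $dA$, your formula $\pi=(\phi+\theta HX)/(\sigma H)$ applies the $\theta X/\sigma$ correction to the \emph{total} wealth including the bond slack, which double-counts and leaves the extra term $-(x-N(0))\theta\tilde{Z}dB$ unmatched.

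The repair is exactly what the paper does: keep the compensator inside the bracket. Define the deflated wealth as
\[
H(t)\bar{X}(t)=x+\int_{0}^{t}\phi(s)dB(s)-K(t)=\bigl(x-N(0)\bigr)+\bigl(N(t)-K(t)\bigr)+A(t),
\]
which is non-negative term by term (your two observations plus $A(t)\ge 0$), has differential $\phi\,dB-H\tilde{c}\,dt$ with no finite-variation remainder, and is matched by the portfolio $\pi(t)=\frac{\phi(t)}{\sigma H(t)}+\frac{\theta\bar{X}(t)}{\sigma}$; the slack $x-N(0)$ is then simply a constant in the deflated wealth and needs no separate bond position. Your remarks on integrability, the treatment of $\tilde{c}^{-}$, and the appeal to \cite[Section 3.9, Theorem 9.4]{karatzas1998methods} are consistent with how the paper handles those points.
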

\begin{proof}
	Referring to \cite[Appendix, Lemma 1]{he1993labor}, we first define a new process
	\begin{equation*}
	K(t)\triangleq\int_{0}^{t}(c(s)+wl(s)-w\bar{L})H(s)ds,\quad\forall t\ge 0.
	\end{equation*}
	From the properties of processes $c(t)$, $l(t)$ and $H(t)$, it is directly observed that $\mathbb{E}[K(t)]<\infty$, which implies that $\{K(\tau)\}_{\tau\in\mathcal{T}}$ is uniformly integrable. Then, as in \cite[Appendix I]{dellacherie1982probabilities}, there exists a Snell envelope of $K(t)$ denoted as $\bar{K}(t)$. It is a super-martingale under the $\mathbb{P}$ measure and satisfies 
	\begin{equation*}
	\bar{K}(0)=\sup_{\tau\in\mathcal{T}}\mathbb{E}[K(\tau)],\quad\bar{K}(\infty)=K(\infty).
	\end{equation*}
	By the Doob-Meyer Decomposition Theorem from \cite[Section 1.4, Theorem 4.10]{karatzas1998}, the super-martingale $\bar{K}(t)$ can be represented as
	\begin{equation*}
	\bar{K}(t)=\bar{K}(0)+\bar{M}(t)-\bar{A}(t),
	\end{equation*}
	where $\bar{M}(t)$ is a uniformly integrable martingale under the $\mathbb{P}$ measure with the initial value $\bar{M}(0)=0$, $\bar{A}(t)$ is a strictly increasing process with the initial value $\bar{A}(0)=0$. According to the Martingale Representation Theorem from \cite[Section 11.1, Theorem 11.2]{bjork2009arbitrage}, $\bar{M}(t)$ can be expressed as
	\begin{equation*}
	\bar{M}(t)=\int_{0}^{t}\bar{\rho}(s)dB(s),\quad \forall t\ge 0,
	\end{equation*}
	with an $\mathbb{F}$-adapted process $\bar{\rho}(t)$ satisfying $\int_{0}^{\infty}\bar{\rho}^{2}(s)ds<\infty$ a.s.. Let us define a new process 
	\begin{equation*}
	\bar{X}(t)\triangleq\frac{1}{H(t)}\left[x-\bar{K}(0)+\bar{K}(t)-K(t)+\bar{A}(t)\right].
	\end{equation*}
	Based on the fact that $\bar{K}(0)=\sup\limits_{\tau\in\mathcal{T}}\mathbb{E}[K(\tau)]=	\sup\limits_{\tau\in\mathcal{T}}\mathbb{E}\left[\int_{0}^{\tau}H(t)(c(t)+wl(t)-w\bar{L})dt\right]\leq x$, we can conclude that $\bar{X}(t)$ is a non-negative process with the initial wealth $\bar{X}(0)=x$. We express this process with the martingale $\bar{M}(t)$ as
	\begin{equation*}
	\bar{X}(t)\!=\!\frac{1}{H(t)}\left[x\!+\!\int_{0}^{t}\bar{\rho}(s)dB(s)\!-\!K(t)\right]\!=\!\frac{1}{H(t)}\left[x\!+\!\int_{0}^{t}\bar{\rho}(s)dB(s)\!-\!\int_{0}^{t}(c(s)\!+\!wl(s)\!-\!w\bar{L})H(s)ds\right].
	\end{equation*}
	As for the dynamics of wealth process 
	\begin{equation*}
	dX^{x,c,\pi,l}(t)=rX^{x,c,\pi,l}dt+\pi(t)(\mu-r)dt-c(t)dt+w(\bar{L}-l(t))dt+\sigma\pi(t)dB(t),
	\end{equation*}
	we implement the It\^o's formula to $H(t)X^{x,c,\pi,l}(t)$ to get
	\begin{equation*}
	d(H(t)X^{x,c,\pi,l}(t))=-H(t)X^{x,c,\pi,l}(t)\theta dB(t)-(c(t)+wl(t)-w\bar{L})H(t)dt+\sigma\pi(t)H(t)dB(t).
	\end{equation*}
	If we take the portfolio strategy as $\pi(t)=\frac{\bar{\rho}(t)}{\sigma H(t)}+\frac{\theta X^{x,c,\pi,l}(t)}{\sigma}$, the wealth process is rewritten as
	$X^{x,c,\pi,l}(t)\!=\!\frac{1}{H(t)}\!\left[x\!+\!\int_{0}^{t}\bar{\rho}(s)dB(s)\!-\!\int_{0}^{t}(c(s)\!+\!wl(s)\!-\!w\bar{L})H(s)ds\right]$, which shows that $\bar{X}(t)\!=\!X^{x,c,\pi,l}(t)$, a.s.. The non-negativity of $\bar{X}(t)$ claims $X^{x,c,\pi,l}(t)\ge 0$, a.s., $\forall t\ge 0.$
\end{proof}

Then we move to the proof of Duality Theorem \ref{Theorem 1}. Following \cite[Section 4, Theorem 1]{he1993labor}, the proof mainly contains two aspects: the first part is to show the admissibility of $c^{*}(t)$ and $l^{*}(t)$, and the second part is to claim that they are the optimal consumption-leisure strategy to Problem $(P_{\scriptscriptstyle PB})$.
	\\(1) We begin with verifying that any consumption-leisure strategy satisfying $c^{*}(t)+wl^{*}(t)=-\tilde{u}^{\prime}(\lambda^{*} e^{\gamma t}D_{\scriptscriptstyle PB}^{*}(t)H(t))$ is admissible. Taking any stopping time $\tau$ from $\mathcal{T}$, which is the set of $\mathcal{F}_{t}$-stopping times, we can define a process
	\begin{equation*}
	D^{\epsilon}(t)\triangleq D_{\scriptscriptstyle PB}^{*}(t)+\epsilon\mathbb{I}_{[0,\tau)}(t),
	\end{equation*} 
	where $\epsilon$ a positive constant. It is evident that $D^{\epsilon}(t)$ is a non-negative, non-increasing, and progressively measurable process, that is, $D^{\epsilon}(t)\in\mathcal{D}$. Let us define a function
	\begin{equation*}
	\mathfrak{L}(D(t))\triangleq\mathbb{E}\left[\int_{0}^{\infty}e^{-\gamma t}(\tilde{u}(\lambda^{*} e^{\gamma t}D(t)H(t))+w\bar{L}\lambda^{*} e^{\gamma t}D(t)H(t))dt\right]+\lambda^{*} x D(0).
	\end{equation*}
	Since $D_{\scriptscriptstyle PB}^{*}(t)$ is the optimal solution of Problem $(S_{\scriptscriptstyle PB})$, and $x\ge 0$, we get
	\begin{equation*}
	\begin{split}
	\mathfrak{L}(D_{\scriptscriptstyle PB}^{*}(t))&=\mathbb{E}\left[\int_{0}^{\infty}e^{-\gamma t}(\tilde{u}(\lambda^{*} e^{\gamma t}D_{\scriptscriptstyle PB}^{*}(t)H(t))+w\bar{L}\lambda^{*} e^{\gamma t}D_{\scriptscriptstyle PB}^{*}(t)H(t))dt\right]+\lambda^{*} x D_{\scriptscriptstyle PB}^{*}(0)\\
	&\leq \mathbb{E}\left[\int_{0}^{\infty}e^{-\gamma t}(\tilde{u}(\lambda^{*} e^{\gamma t}D^{\epsilon}(t)H(t))+w\bar{L}\lambda^{*} e^{\gamma t}D^{\epsilon}(t)H(t))dt\right]+\lambda^{*} x (D_{\scriptscriptstyle PB}^{*}(0)+\epsilon)\\
	&=\mathfrak{L}(D^{\epsilon}(t)),\quad\forall t\ge 0.
	\end{split}
	\end{equation*}
	The above inequalities give us
	$\limsup\limits_{\epsilon\downarrow 0}\frac{\mathfrak{L}(D^{\epsilon}(t))-\mathfrak{L}(D_{\scriptscriptstyle PB}^{*}(t))}{\epsilon}\ge 0$ and
	\begin{equation*}
	\limsup_{\epsilon\downarrow 0}\mathbb{E}\left[\int_{0}^{\tau}\left(e^{-\gamma t}\frac{\tilde{u}(\lambda^{*} e^{\gamma t}D^{\epsilon }(t)H(t))-\tilde{u}(\lambda^{*} e^{\gamma t}D_{\scriptscriptstyle PB}^{*}(t)H(t))}{\epsilon}+w\bar{L}\lambda^{*} H(t)\right)dt\right]+\lambda^{*} x\ge 0.
	\end{equation*}
	The decreasing property of $\tilde{u}(\cdot)$ endows us with
	$\tilde{u}(\lambda^{*} e^{\gamma t}D^{\epsilon }(t)H(t))\leq\tilde{u}(\lambda^{*} e^{\gamma t}D_{\scriptscriptstyle PB}^{*}(t)H(t))$. Applying the Fatou's lemma, we have
	\begin{equation*}
	\mathbb{E}\!\left[\!\int_{0}^{\tau}\!\!\lambda^{*} H(t)\tilde{u}^{\prime}(\lambda^{*} e^{\gamma t}\!D_{\scriptscriptstyle PB}^{*}(t)H(t))dt\!\right]\!\ge\!\limsup_{\epsilon\downarrow 0}\mathbb{E}\!\left[\!\int_{0}^{\tau}\!\!e^{\!-\!\gamma t}\frac{\tilde{u}(\lambda^{*} e^{\gamma t}\!D^{\epsilon }(t)H(t))\!-\!\tilde{u}(\lambda^{*} e^{\gamma t}\!D_{\scriptscriptstyle PB}^{*}(t)H(t))}{\epsilon}dt\right].
	\end{equation*}
	Because of $c^{*}(t)+wl^{*}(t)=-\tilde{u}^{\prime}(\lambda^{*} e^{\gamma t}D_{\scriptscriptstyle PB}^{*}(t)H(t))$, we get
	$\mathbb{E}\left[\int_{0}^{\tau}H(t)(c^{*}(t)+wl^{*}(t)-w\bar{L})dt\right]\leq x$.
	Since $\tau$ can be any $\mathcal{F}$-stopping time in the set $\mathcal{T}$, there exists a portfolio strategy $\pi^{*}(t)$ that makes the corresponding wealth process satisfying $X^{x,c^{*},\pi^{*},l^{*}}(t)\ge 0$, $\forall t\ge 0$ based on the result from Lemma \ref{Lemma B.1}.\\
	(2) In this part, we claim that $c^{*}(t)$ and $l^{*}(t)$ are the optimal consumption and leisure to Problem $(P_{\scriptscriptstyle PB})$ under the liquidity constraint. Taking an arbitrary consumption strategy $\{c(t),\pi(t),l(t)\}\in\mathcal{A}_{\scriptscriptstyle PB}(x)$, the proof of Lemma \ref{Lemma B.1} guarantees that there exists a process $\zeta(t)$ satisfying
	\begin{equation}
	\int_{0}^{t}H(s)(c(s)+wl(s)-w\bar{L})ds+H(t)X^{x,c,\pi,l}(t)=x+\int_{0}^{t}\zeta(s)dB(s).\label{Ap.3}
	\end{equation}
	Since $X^{x,c,\pi,l}(t)\ge 0$ a.s., we obtain the following inequality with any process $D(t)\in\mathcal{D}$,
	\begin{equation*}
	\int_{0}^{T}\int_{0}^{t}H(s)(c(s)+wl(s)-w\bar{L})ds dD(t)\ge\int_{0}^{T}\left[x+\int_{0}^{t}\zeta(s)dB(s)\right]dD(t),
	\end{equation*}
	where $T$ is any time satisfying $T\ge t$. Because $D(t)$ is bounded variation, we can integrate by parts and get
	\begin{equation*}
	\begin{split}
	&\int_{0}^{T}D(s)H(s)(c(s)+wl(s)-w\bar{L})ds- \int_{0}^{T}D(s)\zeta(s)dB(s)\leq\\ & \qquad D(0)x+D(T)\left[\int_{0}^{T}H(s)(c(s)+wl(s)-w\bar{L})ds\!-\!x\!-\!\int_{0}^{T}\zeta(s)dB(s)\right].
	\end{split}
	\end{equation*}
	Taking the expectation under the $\mathbb{P}$ measure on both sides and replacing Equation (\ref{Ap.3}), we obtain
	\begin{equation*}
	\mathbb{E}\left[\int_{0}^{T}D(s)H(s)(c(s)+wl(s)-w\bar{L})ds\right]\leq D(0)x-\mathbb{E}\left[D(T)H(T)X^{x,c,\pi,l}(T)\right]\leq D(0)x,
	\end{equation*} 
	then, by Lebesgue's Monotone Convergence Theorem, we have
	\begin{equation*}
	\mathbb{E}\left[\int_{0}^{\infty}D(s)H(s)(c(s)+wl(s)-w\bar{L})ds\right]\leq D(0)x.
	\end{equation*}
	The above inequality keeps true for any admissible control strategy $c(t)$, $\pi(t)$, $l(t)$ and any non-negative, non-increasing process $D(t)$; furthermore, we will show that the inequality changes into equality with the given $c^{*}(t)$, $l^{*}(t)$ and $D_{\scriptscriptstyle PB}^{*}(t)$. We define a new process
	\begin{equation*}
	\bar{D}^{\epsilon}(t)\triangleq D_{\scriptscriptstyle PB}^{*}(t)(1+\epsilon)\in\mathcal{D},
	\end{equation*}
	where $\epsilon$ is a small enough constant. Following the same argument as in the first part of the proof, we have $\mathfrak{L}(\bar{D}^{\epsilon}(t))\ge\mathfrak{L}(D_{\scriptscriptstyle PB}^{*}(t))$ and
	\begin{equation*}
	\begin{split}
	\limsup_{\epsilon\downarrow 0}\frac{\mathfrak{L}(\bar{D}^{\epsilon}(t))\!-\!\mathfrak{L}(D_{\scriptscriptstyle PB}^{*}(t))}{\epsilon}&\!=\!\limsup_{\epsilon\downarrow 0}\mathbb{E}\!\bigg[\int_{0}^{\infty}\!\!\!\!e^{\!-\!\gamma t}\frac{\tilde{u}(\lambda^{*} e^{\gamma t}D_{\scriptscriptstyle PB}^{*}(t)(1\!+\!\epsilon)H(t))\!-\!\tilde{u}(\lambda^{*} e^{\gamma t}D_{\scriptscriptstyle PB}^{*}(t)H(t))}{\epsilon}dt\\
	&\qquad +\int_{0}^{\infty}w\bar{L}\lambda^{*} D_{\scriptscriptstyle PB}^{*}(t)H(t)dt\bigg]+\lambda^{*} x D_{\scriptscriptstyle PB}^{*}(0)\ge 0,
	\end{split}
	\end{equation*}
	\begin{equation*}
	\begin{split}
	\liminf_{\epsilon\uparrow 0}\frac{\mathfrak{L}(\bar{D}^{\epsilon}(t))\!-\!\mathfrak{L}(D_{\scriptscriptstyle PB}^{*}(t))}{\epsilon}&\!=\!\liminf_{\epsilon\uparrow 0}\mathbb{E}\!\bigg[\int_{0}^{\infty}\!\!\!\!e^{\!-\!\gamma t}\frac{\tilde{u}(\lambda^{*} e^{\gamma t}D_{\scriptscriptstyle PB}^{*}(t)(1\!+\!\epsilon)H(t))\!-\!\tilde{u}(\lambda^{*} e^{\gamma t}D_{\scriptscriptstyle PB}^{*}(t)H(t))}{\epsilon}dt\\
	&\qquad +\int_{0}^{\infty}w\bar{L}\lambda^{*} D_{\scriptscriptstyle PB}^{*}(t)H(t)dt\bigg]+\lambda^{*} x D_{\scriptscriptstyle PB}^{*}(0)\leq 0.
	\end{split}
	\end{equation*}
	Applying the Fatou's lemma, we obtain separately
	\begin{equation*}
	\begin{split}
	& \mathbb{E}\left[\int_{0}^{\infty}D_{\scriptscriptstyle PB}^{*}(t)H(t)(c^{*}(t)+wl^{*}(t)-w\bar{L})dt\right]\leq xD_{\scriptscriptstyle PB}^{*}(0),\\
	&\mathbb{E}\left[\int_{0}^{\infty}D_{\scriptscriptstyle PB}^{*}(t)H(t)(c^{*}(t)+wl^{*}(t)-w\bar{L})dt\right]\ge xD_{\scriptscriptstyle PB}^{*}(0),
	\end{split}
	\end{equation*}
	which give us
	$\mathbb{E}\left[\int_{0}^{\infty}D_{\scriptscriptstyle PB}^{*}(t)H(t)(c^{*}(t)+wl^{*}(t)-w\bar{L})dt\right]=xD_{\scriptscriptstyle PB}^{*}(0)$. Subsequently, we define a new optimization problem named $(P_{\scriptscriptstyle PB}^{\prime})$ as
	\begin{equation*}
	\max_{c(t)\ge 0,l(t)\ge 0} \mathbb{E}\left[\int_{0}^{\infty}e^{-\gamma t}u(c(t),l(t))dt\right] \tag{$P_{\scriptscriptstyle PB}^{\prime}$}
	\end{equation*}
	\begin{equation*}
	s.t. \quad\mathbb{E} \left[\int_{0}^{\infty}D_{\scriptscriptstyle PB}^{*}(t)H(t)(c(t)+wl(t)-w\bar{L})dt\right]\leq xD_{\scriptscriptstyle PB}^{*}(0).
	\end{equation*}
	We denote the optimal consumption solution of the above problem is $\tilde{c}^{*}(t)$. From the Lagrange method, we know that $\tilde{c}^{*}(t)+w\tilde{l}^{*}(t)=-\tilde{u}^{\prime}(\tilde{\lambda}e^{\gamma t}D_{\scriptscriptstyle PB}^{*}(t)H(t))$,
	where $\tilde{\lambda}>0$ is the Lagrange multiplier. The constraint of problem $(P_{\scriptscriptstyle PB}^{\prime})$ takes equality when $\tilde{\lambda}=\lambda^{*}$. Then, the condition $c^{*}(t)+wl^{*}(t)=-\tilde{u}^{\prime}(\lambda^{*} e^{\gamma t}D_{\scriptscriptstyle PB}^{*}(t)H(t))$ implies that $c^{*}(t)$ and $l^{*}(t)$ are the optimal control policies of Problem $(P_{\scriptscriptstyle PB}^{\prime})$. Moreover, since the maximum utility of primal problem $(P_{\scriptscriptstyle PB})$ is upper bounded by the maximum utility of $(P_{\scriptscriptstyle PB}^{\prime})$, we can conclude that $c^{*}(t)$ and $l^{*}(t)$ are also the optimal consumption and leisure solutions of Problem $(P_{\scriptscriptstyle PB})$.
	
	\subsection{Solutions of Variational Inequalities (\ref{1.8})} \label{B.2}
	The solution is computed considering two cases: $0<\tilde{y}\leq\hat{z}_{\scriptscriptstyle PB}$ and $0<\hat{z}_{\scriptscriptstyle PB}<\tilde{y}$.

	\paragraph{\textbf{Case 1. $0<\tilde{y}\leq\hat{z}_{\scriptscriptstyle PB}$:}}
	\noindent Condition $(V3)$ in (\ref{1.8}) results in a differential equation,
	\begin{equation}
	-\gamma v_{\scriptscriptstyle PB}(z)+(\gamma-r)z v_{\scriptscriptstyle PB}^{\prime}(z)+\frac{1}{2}\theta^{2}z^{2}v_{\scriptscriptstyle PB}^{\prime\prime}(z)+\tilde{u}(z)+w\bar{L}z=0, \qquad 0<z<\hat{z}_{\scriptscriptstyle PB},\label{Ap.18}
	\end{equation}
	which has the solution
	\begin{equation*}
	v_{\scriptscriptstyle PB}(z)=\begin{cases}
	B_{11,\scriptscriptstyle PB}z^{n_{1}}+B_{21,\scriptscriptstyle PB}z^{n_{2}}+\frac{A_{1}}{\Gamma_{1}}z^{\frac{\delta(1-k)}{\delta(1-k)-1}}+\frac{w(\bar{L}-L)}{r}z, & 0< z<\tilde{y},\\
	
	B_{12,\scriptscriptstyle PB}z^{n_{1}}+B_{22,\scriptscriptstyle PB}z^{n_{2}}+\frac{A_{2}}{\Gamma_{2}}z^{-\frac{1-k}{k}}+\frac{w\bar{L}}{r}z,& \tilde{y}\leq z< \hat{z}_{\scriptscriptstyle PB}.
	\end{cases}
	\end{equation*}
	For avoiding the explosion of term $z^{n_{1}}$ when $z$ goes to 0, we set $B_{11,\scriptscriptstyle PB}=0$. Then four parameters are left to be determined, which are $B_{21,\scriptscriptstyle PB}$, $B_{12,\scriptscriptstyle PB}$, $B_{22,\scriptscriptstyle PB}$ and $\hat{z}_{\scriptscriptstyle PB}$. To accomplish this task, we use the smooth conditions at $z=\tilde{y}$ and $z=\hat{z}_{\scriptscriptstyle PB}$ to construct a four-equation system:
	\begin{itemize}
		\item $\mathcal{C}^{0}$ condition at $z=\tilde{y}$
		\begin{equation*}
		B_{21,\scriptscriptstyle PB}\tilde{y}^{n_{2}}+\frac{A_{1}}{\Gamma_{1}}\tilde{y}^{\frac{\delta(1-k)}{\delta(1-k)-1}}-\frac{wL}{r}\tilde{y}=B_{12,\scriptscriptstyle PB}\tilde{y}^{n_{1}}+B_{22,\scriptscriptstyle PB}\tilde{y}^{n_{2}}+\frac{A_{2}}{\Gamma_{2}}\tilde{y}^{-\frac{1-k}{k}};\label{Ap.4}
		\end{equation*}
		\item $\mathcal{C}^{1}$ condition at $z=\tilde{y}$
		\begin{equation*}
		n_{2}B_{21,\scriptscriptstyle PB}\tilde{y}^{n_{2}\!-\!1}\!+\!\frac{\delta(1\!-\!k)}{\delta(1\!-\!k)\!-\!1}\frac{A_{1}}{\Gamma_{1}}\tilde{y}^{\frac{1}{\delta(1\!-\!k)\!-\!1}}\!-\!\frac{wL}{r}\!=\! n_{1}B_{12,\scriptscriptstyle PB}\tilde{y}^{n_{1}\!-\!1}\!+\!n_{2}B_{22,\scriptscriptstyle PB}\tilde{y}^{n_{2}\!-\!1}\!-\!\frac{1\!-\!k}{k}\frac{A_{2}}{\Gamma_{2}}\tilde{y}^{\!-\!\frac{1}{k}};\label{Ap.5}
		\end{equation*}
		\item $\mathcal{C}^{1}$ condition at $z=\hat{z}_{\scriptscriptstyle PB}$
		\begin{equation*}
		n_{1}B_{12,\scriptscriptstyle PB}\hat{z}_{\scriptscriptstyle PB}^{n_{1}-1}+n_{2}B_{22,\scriptscriptstyle PB}\hat{z}_{\scriptscriptstyle PB}^{n_{2}-1}-\frac{1-k}{k}\frac{A_{2}}{\Gamma_{2}}\hat{z}_{\scriptscriptstyle PB}^{-\frac{1}{k}}+\frac{w\bar{L}}{r}=0;\label{Ap.6}
		\end{equation*}
		\item $\mathcal{C}^{2}$ condition at $z=\hat{z}_{\scriptscriptstyle PB}$
		\begin{equation*}
		n_{1}(n_{1}-1)B_{12,\scriptscriptstyle PB}\hat{z}_{\scriptscriptstyle PB}^{n_{1}-2}+n_{2}(n_{2}-1)B_{22,\scriptscriptstyle PB}\hat{z}_{\scriptscriptstyle PB}^{n_{2}-2}+\frac{1-k}{k^{2}}\frac{A_{2}}{\Gamma_{2}}\hat{z}_{\scriptscriptstyle PB}^{-\frac{1+k}{k}}=0.\label{Ap.7}
		\end{equation*}	
	\end{itemize}

	\paragraph{\textbf{Case 2. $0<\hat{z}_{\scriptscriptstyle PB}<\tilde{y}$:}}
	\noindent The same argument with the previous case, we first handle Condition $(V3)$ in (\ref{1.8}). Recalling Lemma \ref{Lemma 1}, the corresponding interval $0<z<\hat{z}_{\scriptscriptstyle PB}$ restricts the function $\tilde{u}(z)$ only takes the form $\tilde{u}(z)=A_{1}z^{\frac{\delta(1-k)}{\delta(1-k)-1}}-wLz$, which is identical with the piece of $0<z<\tilde{y}$ in Case 1. Hence, the differential equation from $(V3)$ has the same solution, only changing the parameters' notations from $B_{11,\scriptscriptstyle PB}$ to $B_{1,\scriptscriptstyle PB}$, and $B_{21,\scriptscriptstyle PB}$ to $B_{2,\scriptscriptstyle PB}$, that is,
	\begin{equation*}
	v_{\scriptscriptstyle PB}(z)=B_{1,\scriptscriptstyle PB}z^{n_{1}}+B_{2,\scriptscriptstyle PB}z^{n_{2}}+\frac{A_{1}}{\Gamma_{1}}z^{\frac{\delta(1-k)}{\delta(1-k)-1}}+\frac{w(\bar{L}-L)}{r}z, \quad 0< z<\hat{z}_{\scriptscriptstyle PB}.
	\end{equation*}
	We set the coefficient $B_{1,\scriptscriptstyle PB}=0$ for the reason that the term $z^{n_{1}}$ goes to $\infty$ as $z$ approaches 0, which violates the boundedness assumption of $v_{\scriptscriptstyle PB}(z)$. Subsequently, using the smooth condition at $z=\hat{z}_{\scriptscriptstyle PB}$, a two-equation system is established to determine the exact values of $B_{2,\scriptscriptstyle PB}$ and $\hat{z}_{\scriptscriptstyle PB}$:
	\begin{itemize}
		\item $\mathcal{C}^{1}$ condition at $z=\hat{z}_{\scriptscriptstyle PB}$
		\begin{equation*}
		n_{2}B_{2,\scriptscriptstyle PB}\hat{z}_{\scriptscriptstyle PB}^{n_{2}-1}+\frac{\delta(1-k)}{\delta(1-k)-1}\frac{A_{1}}{\Gamma_{1}}\hat{z}_{\scriptscriptstyle PB}^{\frac{1}{\delta(1-k)-1}}+\frac{w(\bar{L}-L)}{r}=0;
		\label{Ap.10}
		\end{equation*}
		\item $\mathcal{C}^{2}$ condition at $z=\hat{z}_{\scriptscriptstyle PB}$
		\begin{equation*}
		n_{2}(n_{2}-1)B_{2,\scriptscriptstyle PB}\hat{z}_{\scriptscriptstyle PB}^{n_{2}-2}+\frac{\delta(1-k)}{(\delta(1-k)-1)^{2}}\frac{A_{1}}{\Gamma_{1}}\hat{z}_{\scriptscriptstyle PB}^{\frac{2-\delta(1-k)}{\delta(1-k)-1}}=0.
		\label{Ap.11}
		\end{equation*}
	\end{itemize}
\section{Appendix of Section \ref{Section 4}}

\subsection{Proof of Proposition \ref{Proposition 2}} \label{C.1}
Let us first introduce a process $M_{\tau}(t)\!\triangleq\! H(t)X(t)\!+\!\int_{0}^{t}\left(c(s)\!+\!d\!+\!wl(s)\!-\!w\bar{L}\right)H(s)ds$, $\forall t\in[0,\tau]$.
	Following \cite[Section 3.9, Theorem 9.4]{karatzas1998methods}, we firstly claim that $M_{\tau}(t)$ is a $\mathbb{P}$-martingale. Let $c(t)$ and $l(t)$ be the consumption and leisure processes such that
	\begin{equation*}
	\mathbb{E}\left[\int_{0}^{\infty}H(s)\left(c(s)+d+wl(s)-w\bar{L}\right)ds\right]=x.
	\end{equation*}
	For any fixed stopping time $\tau\in\mathcal{T}$, we define $\zeta(\tau)\!\triangleq\!\frac{1}{H(\tau)}\mathbb{E}\left[\left.\int_{\tau}^{\infty}H(s)\left(c(s)\!+\!d\!+\!wl(s)\!-\!w\bar{L}\right)ds\right|\mathcal{F}_{\tau}\right]$. Then we have $x=\mathbb{E}\left[\int_{0}^{\tau}H(s)\left(c(s)+d+wl(s)-w\bar{L}\right)ds+H(\tau)\zeta(\tau)\right]$.
	\cite[Section 3.3, Theorem 3.5]{karatzas1998methods} implies that there exists a portfolio process $\pi_{\tau}=\{\pi_{\tau}(t):0\leq t\leq\tau\}$ satisfying
	$\zeta(\tau)=X^{x,c,\pi_{\tau},l}(\tau)$. 
	Moreover, since
	\begin{equation*}
	\begin{split}
	\mathbb{E}\left[M_{\tau}(\tau)\right]&=\mathbb{E}\left[H(\tau)X^{x,c,\pi_{\tau},l}(\tau)\right]+\mathbb{E}\left[\int_{0}^{\tau}H(s)\left(c(s)+d+wl(s)-w\bar{L}\right)ds\right]\\
	&=\mathbb{E}\left[H(\tau)\zeta(\tau)\right]+\mathbb{E}\left[\int_{0}^{\tau}H(s)\left(c(s)+d+wl(s)-w\bar{L}\right)ds\right]\\
	&=\mathbb{E}\left[\mathbb{E}\left[\left.\int_{\tau}^{\infty}H(s)\left(c(s)\!+\!d\!+\!wl(s)\!-\!w\bar{L}\right)ds\right|\!\mathcal{F}_{\tau}\!\right]\!+\!\int_{0}^{\tau}\!H(s)\left(c(s)\!+\!d\!+\!wl(s)\!-\!w\bar{L}\right)ds\right]\\
	&=x=M_{\tau}(0),
	\end{split}
	\end{equation*}
	the process $\{M_{\tau}(t):0\leq t\leq\tau\}$ is a $\mathbb{P}$-martingale. Combined with the constraint $X(t)\ge F+\eta$, $\forall t\in[0,\tau]$, we have
	\begin{equation*}
	\mathbb{E}\!\left[\!H(\tau)X(\tau)\!+\!\!\left.\int_{0}^{\tau}\!\!H(s)\!\left(c(s)\!+\!d\!+\!wl(s)\!-\!w\bar{L}\right)\!ds\right|\!\mathcal{F}_{t}\right]\!\!=\!\!H(t)X(t)\!+\!\int_{0}^{t}\!\!H(s)\!\left(c(s)\!+\!d\!+\!wl(s)\!-\!w\bar{L}\right)\!ds,
	\end{equation*}
	then,
	\begin{equation*}
	\mathbb{E}\left[\left.\int_{t}^{\tau}\frac{H(s)}{H(t)}\left(c(s)+d+wl(s)-w\bar{L}\right)ds+\frac{H(\tau)}{H(t)}X(\tau)\right|\mathcal{F}_{t}\right]=X(t)\ge F+\eta,\quad\forall t\in[0,\tau].
	\end{equation*}
	
\subsection{Proof of Theorem \ref{Theorem 2}} \label{C.2}
The proof here is consistent with Appendix \ref{B.1}, but some modification is needed due to the stop-time embedding. We first introduce a lemma, which will be used in the proof of the duality theorem.
\begin{lemma}\label{Lemma C.1}
	For any given initial wealth $x\!\ge\!F\!+\!\eta$, any $\mathcal{F}_{t}$-stopping time $\tau$ with $\mathbb{P}(\tau\!<\!\infty)\!=\!1$, any $\mathcal{F}_{\tau}$-measurable random variable $Q$ with $\mathbb{P}(Q\!\ge\! F\!+\!\eta)\!=\!1$ under the $\mathbb{P}$ measure, and any given progressively measurable consumption and leisure processes $c(t)\ge 0$, $0\leq l(t)\leq L$, satisfying
	$\sup\limits_{\tilde{\tau}\in\mathcal{S}}\!\mathbb{E}\!\left[-\!\int_{\tilde{\tau}}^{\tau}\!H(t)(c(t)\!+\!wl(t)\!-\!w\bar{L}\!+\!d)dt\!-\!H(\tau)Q\right]\leq \!-\!F\!+\!\eta$,
	where $\mathcal{S}$ stands for the set of $\mathcal{F}_{t}$-stopping times before the fixed stopping time $\tau$, and 
	$\mathbb{E}\!\left[\!\int_{0}^{\tau}\!H(t)\left(c(t)\!+\!wl(t)\!-\!w\bar{L}\!+\!d\right)dt\!+\!H(\tau)Q\right]\!=\! x$, there exists a portfolio process $\pi(t)$ making
	$X^{x,c,\pi,l}(t)\!\ge\!F\!+\!\eta$, $\forall t \!\in\![0,\tau]$, and $X^{x,c,\pi,l}(\tau)\!=\!Q$ hold almost surely.
\end{lemma}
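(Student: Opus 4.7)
The plan is to adapt the proof of Lemma B.1 to the present finite-horizon setting, the three new features being (a) the continuous debt outflow $d$ in the integrand, (b) the lower wealth floor $F+\eta$ in place of $0$, and (c) the requirement that the constructed wealth equal the prescribed random terminal value $Q$ at the stopping time $\tau$.

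First I would recast the data in the form used in Lemma B.1. Setting $K(t):=\int_0^t H(s)\bigl(c(s)+wl(s)-w\bar L+d\bigr)\,ds$ on $[0,\tau]$, the given budget equality combined with the supremum hypothesis gives, for every $\tilde\tau\in\mathcal S$,
\begin{equation*}
\mathbb E[K(\tilde\tau)]=x-\mathbb E\!\left[\int_{\tilde\tau}^{\tau}H(s)(c+wl-w\bar L+d)\,ds+H(\tau)Q\right]\le x-(F+\eta),
\end{equation*}
i.e.\ $\sup_{\tilde\tau\in\mathcal S}\mathbb E[K(\tilde\tau)]\le x-(F+\eta)$, precisely the inequality used in Lemma B.1 but with effective initial wealth $x-(F+\eta)$. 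Running that machinery verbatim yields the Snell envelope $\bar K$ with $\bar K(0)\le x-(F+\eta)$, $\bar K(\tau)=K(\tau)$ and $\bar K\ge K$; the Doob--Meyer decomposition $\bar K=\bar K(0)+\bar M-\bar A$; and a martingale representation $\bar M(t)=\int_0^{t}\bar\rho(s)\,dB(s)$. Defining
\begin{equation*}
\bar X(t):=(F+\eta)+\frac{1}{H(t)}\Bigl[\bigl(x-(F+\eta)-\bar K(0)\bigr)+\bigl(\bar K(t)-K(t)\bigr)+\bar A(t)\Bigr],\quad t\in[0,\tau],
\end{equation*}
the three bracketed correction terms are individually nonnegative, so $\bar X(t)\ge F+\eta$ pathwise; applying It\^o to $H(t)\bar X(t)$ and matching with the controlled wealth dynamics, exactly as at the end of Appendix \ref{B.1}, identifies the portfolio $\pi(t)=\bar\rho(t)/(\sigma H(t))+\theta(\bar X(t)-(F+\eta))/\sigma$ so that $\bar X=X^{x,c,\pi,l}$ a.s.

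The main obstacle is to secure the terminal matching $\bar X(\tau)=Q$, since the construction only yields the $\mathcal F_{\tau}$-measurable $\bar X(\tau)=(F+\eta)+[x-(F+\eta)-\bar K(0)+\bar A(\tau)]/H(\tau)$, not obviously equal to $Q$. My plan is to close this gap by shortcutting the construction: let $M(t):=\mathbb E[K(\tau)+H(\tau)Q\mid\mathcal F_t]$, a uniformly integrable $\mathbb P$-martingale with $M(0)=x$ (by the budget equality) and $M(\tau)=K(\tau)+H(\tau)Q$; its representation $M(t)=x+\int_0^{t}\rho(s)\,dB(s)$ directly produces $X(t):=[M(t)-K(t)]/H(t)$ and the portfolio $\pi(t)=[\rho(t)/H(t)+\theta X(t)]/\sigma$, with $X(0)=x$ and $X(\tau)=Q$ automatic. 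The supremum hypothesis then enters in the contrapositive, via an optional-stopping argument at $\tilde\tau_\varepsilon:=\inf\{t\le\tau:X(t)\le F+\eta-\varepsilon\}\wedge\tau$: if $\mathbb P(\tilde\tau_\varepsilon<\tau)>0$ for some $\varepsilon>0$, then evaluating $\mathbb E[\int_{\tilde\tau_\varepsilon}^{\tau}H(s)(\cdot)\,ds+H(\tau)Q]=\mathbb E[H(\tilde\tau_\varepsilon)X(\tilde\tau_\varepsilon)]$ and comparing with the lower bound $F+\eta$ from the hypothesis yields a contradiction, forcing $X(t)\ge F+\eta$ a.s. on $[0,\tau]$. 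This contrapositive passage from the global expectation-type inequality to the pathwise bound is the subtle point on which the argument hinges.
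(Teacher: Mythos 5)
Your first half follows the paper's own route (an accumulation process $K$, its Snell envelope, Doob--Meyer, martingale representation, and identification of the portfolio via It\^o applied to $H(t)X(t)$), and you correctly isolate the delicate point, namely the terminal matching $X(\tau)=Q$. The paper handles that point by working with the \emph{backward} process $K(t)=-\int_t^{\tau}H(s)(c(s)+wl(s)-w\bar L+d)\,ds-H(\tau)Q$, so that $\bar K(\tau)=K(\tau)$ pins down the terminal value, and by inserting the extra term $-\bar M(\tau)$ into the definition of $\bar X$; after conditioning on $\mathcal F_t$ this collapses $\bar X(t)+F+\eta$ to $\frac{1}{H(t)}\mathbb E\bigl[\int_t^{\tau}H(s)(c+wl-w\bar L+d)\,ds+H(\tau)Q\mid\mathcal F_t\bigr]$, which is exactly the process your martingale $M(t)=\mathbb E[K(\tau)+H(\tau)Q\mid\mathcal F_t]$ produces. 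So your ``shortcut'' construction giving $X(0)=x$ and $X(\tau)=Q$ is sound and is in substance the same object the paper builds.

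The genuine gap is in your final step, the passage from the expectation-type hypothesis to the pathwise floor. For every $\tilde\tau\in\mathcal S$ the hypothesis gives only the \emph{unconditional} inequality $\mathbb E[H(\tilde\tau)X(\tilde\tau)]=\mathbb E\bigl[\int_{\tilde\tau}^{\tau}H(s)(c+wl-w\bar L+d)\,ds+H(\tau)Q\bigr]\ge F+\eta$. Evaluating this at $\tilde\tau_\varepsilon$ yields no contradiction: on $\{\tilde\tau_\varepsilon=\tau\}$ one has $H(\tilde\tau_\varepsilon)X(\tilde\tau_\varepsilon)=H(\tau)Q$, which can be arbitrarily large and absorb the deficit created on $\{\tilde\tau_\varepsilon<\tau\}$ where $X\le F+\eta-\varepsilon$; and in any case the hypothesis is a \emph{lower} bound on the expectation, whereas your contradiction would require showing the expectation is strictly \emph{below} $F+\eta$. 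A family of unconditional inequalities indexed by stopping times does not, by itself, imply the pathwise (or even conditional) bound $X(t)\ge F+\eta$. This is precisely where the paper's proof relies on the Snell-envelope structure instead: $\bar K$ dominates $K$, $\bar K(0)=\sup_{\tilde\tau\in\mathcal S}\mathbb E[K(\tilde\tau)]\le-(F+\eta)$, and nonnegativity of $\bar X$ is read off from the decomposition $\bar X(t)=\frac{1}{H(t)}\mathbb E[-\bar K(0)\mid\mathcal F_t]+\frac{1}{H(t)}\mathbb E[\bar K(0)-K(t)\mid\mathcal F_t]-F-\eta$, not from an optional-stopping contradiction. To close your argument you would need either the conditional form of the liquidity constraint (as in Proposition \ref{Proposition 2}) or a return to the Snell-envelope comparison of the paper's Appendix \ref{B.1}/\ref{C.2}.
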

\begin{proof}
	Following the similar argument with \cite[Appendix, Lemma 1]{he1993labor}, we first define a new process
	\begin{equation*}
	K(t)\triangleq-\int_{t}^{\tau}H(s)(c(s)+wl(s)-w\bar{L}+d)ds-H(\tau)Q,\quad\forall t\in [0,\tau].
	\end{equation*}
	From the properties of processes $c(t)$, $l(t)$ and $H(t)$, it can be observed that $\mathbb{E}[K(t)]<\infty$, which implies $\{K(\tilde{\tau})\}_{\tilde{\tau}\in\mathcal{S}}$ is uniformly integrable. Therefore, there exists a Snell envelope of $K(t)$ denoted as $\bar{K}(t)$. It is a super-martingale under the $\mathbb{P}$ measure and satisfies 
	\begin{equation*}
	\bar{K}(0)=\sup_{\tilde{\tau}\in\mathcal{S}}\mathbb{E}[K(\tilde{\tau})],\quad\mbox{and}\quad \bar{K}(\tau)=K(\tau).
	\end{equation*}
	By the Doob-Meyer Decomposition Theorem from \cite[Section 1.4, Theorem 4.10]{karatzas1998}, the super-martingale $\bar{K}(t)$ can be decomposed into
	\begin{equation*}
	\bar{K}(t)=\bar{K}(0)+\bar{M}(t)-\bar{A}(t),
	\end{equation*}
	where $\bar{M}(t)$ is a uniformly integrable martingale under the $\mathbb{P}$ measure with the initial value $\bar{M}(0)=0$, $\bar{A}(t)$ is a strictly increasing process with the initial value $\bar{A}(0)=0$. According to the Martingale Representation Theorem from \cite[Section 11.1, Theorem 11.2]{bjork2009arbitrage}, $\bar{M}(t)$ can be expressed as
	\begin{equation*}
	\bar{M}(t)=\int_{0}^{t}\bar{\rho}(s)dB(s),\quad \forall t\in [0,\tau],
	\end{equation*}
	with an $\mathbb{F}$-adapted process $\bar{\rho}(t)$ satisfying $\int_{0}^{\infty}\bar{\rho}^{2}(s)ds<\infty$ a.s.. Let us define a new process 
	\begin{equation*}
	\bar{X}(t)\triangleq\frac{1}{H(t)}\mathbb{E}\left[-\bar{K}(0)+\bar{K}(t)-K(t)+\bar{A}(t)-\bar{M}(\tau)|\mathcal{F}_{t}\right]-F-\eta.
	\end{equation*}
	It can be verified that
	\begin{equation*}
	\bar{X}(\tau)=Q-F-\eta, \quad
	\mbox{and}\quad \bar{X}(0)=x-F-\eta,
	\end{equation*}
	using the condition 
	$\mathbb{E}\left[\int_{0}^{\tau}H(t)\left(c(t)+wl(t)-w\bar{L}+d\right)dt+H(\tau)Q\right]= x$,
	and the martingale property of $\bar{M}(t)$. Further, because of $\bar{K}(0)=\sup\limits_{\tilde{\tau}\in\mathcal{S}}\mathbb{E}[K(\tilde{\tau})]$, we can prove that $\bar{K}(0)\ge K(t)$, $\forall t\in [0,\tau]$ by constructing a contradiction. Let us assume that $\tilde{\tau}^{*}$ attains the supremum within the expression of $\bar{K}(0)$, i.e., $\bar{K}(0)=K(\tilde{\tau}^{*})$, and introduce a stopping time as $\bar{\tau}\triangleq\inf\{0\leq t\leq \tau:K(t)>K(\tilde{\tau}^{*})\}$. Since $\bar{\tau}\in\mathcal{S}$, we have $K(\bar{\tau})\leq K(\tilde{\tau}^{*})$, which is contradictory to the definition of $\bar{\tau}$. Then, based on the fact that $\bar{X}(t)=\frac{1}{H(t)}\mathbb{E}[-\bar{K}(0)|\mathcal{F}_{t}]+\frac{1}{H(t)}\mathbb{E}[\bar{K}(0)-K(t)|\mathcal{F}_{t}]-F-\eta$,
	we can conclude that $\bar{X}(t)\ge 0$, a.s.. Additionally, the process $\bar{X}(t)$ is re-expressed in terms of the martingale $\bar{M}(t)$ as
	\begin{equation}
	\bar{X}(t)\!=\!\frac{1}{H(t)}\mathbb{E}\!\left[\left.\int_{t}^{\tau}\!H(s)(c(s)\!+\!wl(s)\!-\!w\bar{L}\!+\!d)ds\!+\!H(\tau)Q\!-\!\int_{t}^{\tau}\!\bar{\rho}(s)dB(s)\right|\mathcal{F}_{t}\right]\!-\!F\!-\!\eta.\label{Ap.13}
	\end{equation}
	As for the wealth process 
	\begin{equation*}
	\begin{cases}
	dX^{x,c,\pi,l}(t)=rX^{x,c,\pi,l}dt+\pi(t)(\mu-r)dt-(c(t)+wl(t)-w\bar{L}+d)dt+\sigma\pi(t)dB(t),\\
	X^{x,c,\pi,l}(0)=x,
	\end{cases}
	\end{equation*}
	by implementing the It\^o's formula to $H(t)X^{x,c,\pi,l}(t)$ and adopting the portfolio strategy as
	\begin{equation*}
	\pi(t)=\frac{\bar{\rho}(t)}{\sigma H(t)}+\frac{\theta X^{x,c,\pi,l}(t)}{\sigma},
	\end{equation*}
	it is rewritten as $d(H(t)X^{x,c.\pi,l}(t))=\bar{\rho}(t)dB(t)-(c(t)+wl(t)-w\bar{L}+d)H(t)dt$. Taking the integral from $t$ to $\tau$, and then the conditional expectation w.r.t. $\mathcal{F}_{t}$ on both sides of the above equation, we obtain 
	\begin{equation*}
	X^{x,c,\pi,l}(t)\!=\!\frac{1}{H(t)}\mathbb{E}\!\left[\int_{t}^{\tau}\!H(s)(c(s)\!+\!wl(s)\!-\!w\bar{L}\!+\!d)ds\!+\!H(\tau)X^{x,c,\pi,l}(\tau)\!-\!\int_{t}^{\tau}\!\bar{\rho}(s)dB(s)|\mathcal{F}_{t}\right].
	\end{equation*}
	Since $X^{x,c,\pi,l}(0)\!=\!x\!=\!\mathbb{E}\!\left[\int_{0}^{\tau}\!H(t)(c(t)\!+\!wl(t)\!-\!w\bar{L}\!+\!d)dt\!+\!H(\tau)X^{x,c,\pi,l}(\tau)\right]$, we have $X^{x,c,\pi,l}(\tau)\!=\!Q$, a.s.. Finally, through comparing the process of $X^{x,c,\pi,l}(t)$ with Equation (\ref{Ap.13}),
	\begin{equation*}
	\bar{X}(t)\!=\!X^{x,c,\pi}(t)\!-\!F-\eta,\quad  a.s.
	\end{equation*} 
	is observed. The non-negativity of $\bar{X}(t)$ claims $X^{x,c,\pi,l}(t)\!\ge\!F+\eta$, a.s., $\forall t\in[0,\tau]$.
\end{proof}
With the aid of the above lemma, we can complete the statement and proof of Duality Theorem \ref{Theorem 2}.
	Referring to \cite[Section 4, Theorem 1]{he1993labor}, the proof procedure is divided into two aspects: the first part is focused on the admissibility of $c^{*}(t)$ and $l^{*}(t)$, and the second part is revolved around claiming that $c^{*}(t)$ and $l^{*}(t)$ are the optimal consumption-leisure strategy to the primal optimization problem.
	\\(1) We begin verifying that any consumption-leisure strategy satisfying
	\begin{equation*}
	c^{*}(t)+wl^{*}(t)=-\tilde{u}^{\prime}(\lambda^{*} e^{\gamma t}D^{*}(t)H(t)), \quad \mbox{and}\quad X^{x,c^{*},\pi^{*},l^{*}}(\tau)=-\tilde{U}^{\prime}(\lambda^{*} e^{\gamma\tau}D^{*}(\tau)H(\tau)),
	\end{equation*} 
	is admissible. Taking any stopping time $\tilde{\tau}$ from $\mathcal{S}$, we can define a process $D^{\epsilon}(t)\triangleq D^{*}(t)+\epsilon\mathbb{I}_{[0,\tilde{\tau})}(t)$,
	where $\epsilon$ is a positive constant. It is evident that $D^{\epsilon}(t)$ is a non-negative, non-increasing and progressively measurable process, that is, $D^{\epsilon}(t)\in\mathcal{D}$. Let us define a function
	\begin{equation*}
	\begin{split}
	\mathfrak{L}(D(t))&\triangleq\mathbb{E}\!\left[\int_{0}^{\tau}\!e^{\!-\!\gamma t}\left(\tilde{u}(\lambda^{*} D(t)e^{\gamma t}H(t))\!-\!(d\!-\!w\bar{L})\lambda^{*} e^{\gamma t}D(t)H(t)\right)dt\!+\!e^{\!-\!\gamma\tau}\tilde{U}(\lambda^{*} D(\tau)e^{\gamma\tau}H(\tau))\right]
	\\
	&\quad+\lambda^{*}\mathbb{E}\left[\int_{0}^{\tau}(F+\eta)H(t)dD(t)\right]+\lambda^{*}(x-(F+\eta))D(0).
	\end{split}
	\end{equation*}
	Considering $D^{*}(t)$ is the optimal solution of problem $(S_{\tau})$ and the fact $x\ge F+\eta$, we obtain
	\begin{equation*}
	\begin{split}
	\mathfrak{L}(D^{*}(t))&\!=\!\mathbb{E}\!\left[\!\int_{0}^{\tau}\!\!e^{\!-\!\gamma t}\left(\tilde{u}(\lambda^{*} D^{*}(t)e^{\gamma t}H(t))\!-\!(d\!-\!w\bar{L})\lambda^{*} e^{\gamma t}D^{*}(t)H(t)\right)dt\!+\!e^{\!-\!\gamma\tau}\tilde{U}(\lambda^{*} D^{*}(\tau)e^{\gamma\tau}H(\tau))\!\right]
	\\
	&\quad+\lambda^{*}\mathbb{E}\left[\int_{0}^{\tau}(F+\eta)H(t)dD^{*}(t)\right]+\lambda^{*}(x-(F+\eta))D^{*}(0)\\
	&\!\leq\! \mathbb{E}\!\left[\!\int_{0}^{\tau}\!\!e^{\!-\!\gamma t}\left(\tilde{u}(\lambda^{*} D^{\epsilon}(t)e^{\gamma t}H(t))\!-\!(d\!-\!w\bar{L})\lambda^{*} e^{\gamma t}D^{\epsilon}(t)H(t)\right)dt\!+\!e^{\!-\!\gamma\tau}\tilde{U}(\lambda^{*} D^{\epsilon}(\tau)e^{\gamma\tau}H(\tau))\!\right]
	\\
	&\quad+\lambda^{*}\mathbb{E}\left[\int_{0}^{\tau}(F+\eta)H(t)dD^{\epsilon}(t)\right]+\lambda^{*}(x-(F+\eta))(D^{*}(0)+\epsilon)\\
	&=\mathfrak{L}(D^{\epsilon}(t)),\quad\forall t\ge \in[0,\tau].
	\end{split}
	\end{equation*}
	The above inequalities give us
	$\limsup\limits_{\epsilon\downarrow 0}\frac{\mathfrak{L}(D^{\epsilon}(t))-\mathfrak{L}(D^{*}(t))}{\epsilon}\ge 0$, combining with $dD^{\epsilon}(t)=dD^{*}(t)$, $D^{\epsilon}(\tau)=D^{*}(\tau)$, we have
	\begin{equation*}
	\limsup_{\epsilon\downarrow 0}\mathbb{E}\!\left[\!\int_{0}^{\tilde{\tau}}\!\left(e^{-\gamma t}\frac{\tilde{u}(\lambda^{*} e^{\gamma t}D^{\epsilon }(t)H(t))\!-\!\tilde{u}(\lambda^{*} e^{\gamma t}D^{*}(t)H(t))}{\epsilon}\!-\!(d\!-\!w\bar{L})\lambda^{*} H(t)\right)dt\right]\!+\!\lambda^{*}(x\!-\!(F\!+\!\eta))\!\ge\! 0.
	\end{equation*}
	The decreasing property of $\tilde{u}(\cdot)$ endows us with
	$\tilde{u}(\lambda^{*} e^{\gamma t}D^{\epsilon }(t)H(t))\leq\tilde{u}(\lambda^{*} e^{\gamma t}D^{*}(t)H(t))$. Applying the Fatou's lemma, we have
	\begin{equation*}
	\mathbb{E}\!\left[\!\int_{0}^{\tilde{\tau}}\!\!\!\!e^{\!-\!\gamma t}\!\tilde{u}^{\prime}(\lambda^{*} e^{\gamma t}\!D^{*}(t)H(t))\lambda^{*} e^{\gamma t}\!H(t)dt\!\right]\!\!\ge\!\limsup_{\epsilon\downarrow 0}\mathbb{E}\!\left[\!\int_{0}^{\tilde{\tau}}\!\!\!\!e^{\!-\!\gamma t}\!\frac{\tilde{u}(\lambda^{*} e^{\gamma t}\!D^{\epsilon }(t)H(t))\!-\!\tilde{u}(\lambda^{*} e^{\gamma t}\!D^{*}(t)H(t))}{\epsilon}dt\!\right]\!.
	\end{equation*}
	Because of $c^{*}(t)+wl^{*}(t)=-\tilde{u}^{\prime}(\lambda^{*} e^{\gamma t}D^{*}(t)H(t))$, we get
	\begin{equation}
	\mathbb{E}\left[\int_{0}^{\tilde{\tau}}H(t)(c^{*}(t)+wl^{*}(t)-w\bar{L}+d)dt\right]\leq x-(F+\eta).\label{Ap.14}
	\end{equation}
	Following the same technique, if defining $D^{\epsilon}(t)\triangleq D^{*}(t)+\epsilon\mathbb{I}_{[0,\tau)}(t)$, we can obtain that
	\begin{equation}
	\mathbb{E}\left[\int_{0}^{\tau}H(t)(c^{*}(t)+wl^{*}(t)-w\bar{L}+d)dt\right]\leq x-(F+\eta).\label{Ap.15}
	\end{equation}
We now follow the same argument as before, introducing a new process 
	$\tilde{D}^{\epsilon}(t)\triangleq D^{*}(t)+\epsilon\in\mathcal{D}$, here $\epsilon$ is no longer required to be positive, but a sufficiently small real number. Defining a function
	\begin{equation*}
	\begin{split}
	\tilde{\mathfrak{L}}(D(t))&\triangleq\mathbb{E}\left[\int_{0}^{\tau}\!e^{\!-\!\gamma t}\left(\tilde{u}(\lambda^{*} D(t)e^{\gamma t}H(t))\!-\!(d\!-\!w\bar{L})\lambda^{*} e^{\gamma t}D(t)H(t)\right)dt\!+\!e^{\!-\!\gamma\tau}\tilde{U}(\lambda^{*} D(\tau)e^{\gamma\tau}H(\tau))\right]
	\\
	&\quad+\lambda^{*}\mathbb{E}\left[\int_{0}^{\tau}(F+\eta)H(t)dD(t)\right]+\lambda^{*} x D(0),
	\end{split}
	\end{equation*}
	we have $\tilde{\mathfrak{L}}(\tilde{D}^{\epsilon}(t))\ge\tilde{\mathfrak{L}}(D^{*}(t))$, and
	\begin{equation*}
	\begin{split}
	\limsup_{\epsilon\downarrow 0}&\frac{\tilde{\mathfrak{L}}(\tilde{D}^{\epsilon}(t))-\tilde{\mathfrak{L}}(D^{*}(t))}{\epsilon}=\limsup_{\epsilon\downarrow 0}\mathbb{E}\left[\int_{0}^{\tau}e^{-\gamma t}\frac{\tilde{u}(\lambda^{*} e^{\gamma t}\tilde{D}^{\epsilon}(t)H(t))-\tilde{u}(\lambda^{*} e^{\gamma t}D^{*}(t)H(t))}{\epsilon}dt\right.\\
	&-\left.\int_{0}^{\tau}(d-w\bar{L})\lambda^{*} H(t)dt+e^{-\gamma \tau}\frac{\tilde{U}(\lambda^{*} e^{\gamma \tau}\tilde{D}^{\epsilon}(\tau)H(\tau))-\tilde{U}(\lambda^{*} e^{\gamma\tau}D^{*}(\tau)H(\tau))}{\epsilon}\right]+\lambda^{*} x\ge 0,
	\end{split}
	\end{equation*}
	\begin{equation*}
	\begin{split}
	\liminf_{\epsilon\uparrow 0}&\frac{\tilde{\mathfrak{L}}(\tilde{D}^{\epsilon}(t))-\tilde{\mathfrak{L}}(D^{*}(t))}{\epsilon}=\liminf_{\epsilon\uparrow 0}\mathbb{E}\left[\int_{0}^{\tau}e^{-\gamma t}\frac{\tilde{u}(\lambda^{*} e^{\gamma t}\tilde{D}^{\epsilon}(t)H(t))-\tilde{u}(\lambda^{*} e^{\gamma t}D^{*}(t)H(t))}{\epsilon}dt\right.\\
	&-\left.\int_{0}^{\tau}(d-w\bar{L})\lambda^{*} H(t)dt+e^{-\gamma \tau}\frac{\tilde{U}(\lambda^{*} e^{\gamma \tau}\tilde{D}^{\epsilon}(\tau)H(\tau))-\tilde{U}(\lambda^{*} e^{\gamma\tau}D^{*}(\tau)H(\tau))}{\epsilon}\right]+\lambda^{*} x\leq 0.
	\end{split}
	\end{equation*}
	Given the conditions
	\begin{equation*}
	c^{*}(t)+wl^{*}(t)=-\tilde{u}^{\prime}(\lambda^{*} e^{\gamma t}D^{*}(t)H(t)), \quad \mbox{and}\quad X^{x,c^{*},\pi^{*},l^{*}}(\tau)\!=\!-\tilde{U}^{\prime}(\lambda^{*} e^{\gamma\tau}D^{*}(\tau)H(\tau)), 
	\end{equation*} 
	the Fatou's lemma entails the following relations respectively:
	\begin{equation*}
	\mathbb{E}\left[\int_{0}^{\tau}H(t)\left(c^{*}(t)+wl^{*}(t)-w\bar{L}+d\right)dt+H(\tau)X^{x,c^{*},\pi^{*},l^{*}}(\tau)\right]\leq x,
	\end{equation*}
	\begin{equation*}
	\mathbb{E}\left[\int_{0}^{\tau}H(t)\left(c^{*}(t)+wl^{*}(t)-w\bar{L}+d\right)dt+H(\tau)X^{x,c^{*},\pi^{*},l^{*}}(\tau)\right]\ge x,
	\end{equation*}
	which lead to
	\begin{equation}
	\mathbb{E}\left[\int_{0}^{\tau}H(t)\left(c^{*}(t)+wl^{*}(t)-w\bar{L}+d\right)dt+H(\tau)X^{x,c^{*},\pi^{*},l^{*}}(\tau)\right]= x.\label{Ap.16}
	\end{equation}
	Subtracting Equation (\ref{Ap.16}) from (\ref{Ap.14}), we get
	\begin{equation*}
	\mathbb{E}\left[\int_{\tilde{\tau}}^{\tau}H(t)(c(t)+wl(t)-w\bar{L}+d)dt+H(\tau)X^{x,c^{*},\pi^{*},l^{*}}(\tau)\right]\ge F+\eta,
	\end{equation*}
	which is equivalent to
	$\mathbb{E}\left[-\int_{\tilde{\tau}}^{\tau}H(t)(c(t)+wl(t)-w\bar{L}+d)dt-H(\tau)X^{x,c^{*},\pi^{*},l^{*}}(\tau)\right]\leq -(F+\eta)$, for any stopping time $\tilde{\tau}\in\mathcal{S}$. Additionally, subtracting Equation  (\ref{Ap.16}) from (\ref{Ap.15}), we obtain
	$\mathbb{E}[H(\tau)X^{x,c^{*},\pi^{*},l^{*}}(\tau)]\ge F+\eta$,
	which implies that $X^{x,c^{*},\pi^{*},l^{*}}(\tau)\ge F+\eta$ a.s.. Since $\tilde{\tau}$ can be any $\mathcal{F}_{t}$-stopping time in the set $\mathcal{S}$, there exists a portfolio strategy $\pi^{*}(t)$ that makes the corresponding wealth process satisfying $X^{x,c^{*},\pi^{*},l^{*}}(t)\ge F+\eta$, $\forall t\in [0,\tau]$, according to Lemma \ref{Lemma C.1}.\\ 
	(2) Then we turn to claim that $c^{*}(t)$ and $l^{*}(t)$ are the optimal consumption and leisure to the problem $(P_{\tau})$. Taking an arbitrary control strategy $\{c(t),\pi(t),l(t)\}\!\in\!\mathcal{A}_{\tau}(x)$, the proof of Lemma \ref{Lemma C.1} guarantees that there exists a process $\zeta(t)$ satisfying
	\begin{equation}
	\int_{0}^{t}H(s)(c(s)+wl(s)-w\bar{L}+d)ds+H(t)X^{x,c,\pi,l}(t)=x+\int_{0}^{t}\zeta(s)dB(s),\quad \forall t\in[0,\tau].\label{Ap.17}
	\end{equation}
	Since $X^{x,c,\pi,l}(t)\ge F+\eta$ a.s., $\forall t\in[0,\tau]$, we obtain the following inequality with any process $D(t)\in\mathcal{D}$,
	\begin{equation*}
	\int_{0}^{\tau}\int_{0}^{t}H(s)(c(s)+wl(s)-w\bar{L}+d)ds dD(t)+\int_{0}^{\tau}(F+\eta)H(t)dD(t)\ge\int_{0}^{\tau}\left[x+\int_{0}^{t}\zeta(s)dB(s)\right]dD(t).
	\end{equation*}
	Since $D(t)$ is of bounded variation, we can implement the integration by parts and get
	\begin{equation*}
	\begin{split}
	\int_{0}^{\tau}D(s)H(s)&(c(s)+wl(s)-w\bar{L}+d)ds-\int_{0}^{\tau}D(s)\zeta(s)dB(s)\leq D(0)x+\int_{0}^{\tau}(F+\eta)H(s)dD(s)\\
	&\quad+ D(\tau)\left[\int_{0}^{\tau}H(s)(c(s)+wl(s)-w\bar{L}+d)ds-x-\int_{0}^{\tau}\zeta(s)dB(s)\right].
	\end{split}
	\end{equation*}
	Taking the expectation under the $\mathbb{P}$ measure on both sides and replacing Equation (\ref{Ap.17}), we obtain
	\begin{equation*}
	\mathbb{E}\!\left[\!\int_{0}^{\tau}\!\!\!D(s)H(s)(c(s)\!+\!wl(s)\!-\!w\bar{L}\!+\!d)ds\!\right]\!\leq\! D(0)x\!-\mathbb{E}\!\left[\!D(\tau)H(\tau)X^{x,c,\pi,l}\!(\tau)\right]\!+\mathbb{E}\left[\!\int_{0}^{\tau}\!\!\!(F+\eta)H(s)dD(s)\!\right]\!.
	\end{equation*} 
	The above inequality keeps true for any admissible control strategy $\{c(t),\pi(t),l(t)\}$ and any non-negative, non-increasing process $D(t)$. Furthermore, we will show that the inequality changes into equality with the given $c^{*}(t)$, $l^{*}(t)$ and $D^{*}(t)$. We first define a new process
	\begin{equation*}
	\bar{D}^{\epsilon}(t)\triangleq D^{*}(t)(1+\epsilon)\in\mathcal{D},
	\end{equation*}
	where $\epsilon$ is a small enough constant. Following the same argument in the first part proof, 
	we have $\tilde{\mathfrak{L}}(\bar{D}^{\epsilon}(t))\ge\tilde{\mathfrak{L}}(D^{*}(t))$, and
	\begin{equation*}
	\begin{split}
	\limsup_{\epsilon\downarrow 0}&\frac{\tilde{\mathfrak{L}}(\bar{D}^{\epsilon}(t))\!-\!\tilde{\mathfrak{L}}(D^{*}(t))}{\epsilon}\!=\!\limsup_{\epsilon\downarrow 0}\mathbb{E}\left[\int_{0}^{\tau}e^{-\gamma t}\frac{\tilde{u}(\lambda^{*} e^{\gamma t}D^{*}(t)(1\!+\!\epsilon)H(t))\!-\!\tilde{u}(\lambda^{*} e^{\gamma t}D^{*}(t)H(t))}{\epsilon}dt\right.\\
	&-\left.\int_{0}^{\tau}(d-w\bar{L})\lambda^{*} D^{*}(t) H(t)dt+e^{-\gamma \tau}\frac{\tilde{U}(\lambda^{*} e^{\gamma \tau}D^{*}(\tau)(1+\epsilon)H(\tau))-\tilde{U}(\lambda^{*} e^{\gamma\tau}D^{*}(\tau)H(\tau))}{\epsilon}\right]\\
	&+\lambda^{*}\mathbb{E}\left[\int_{0}^{\tau}(F+\eta)H(t)dD^{*}(t)\right]+\lambda^{*} x D^{*}(0)\ge 0,
	\end{split}
	\end{equation*}
	\begin{equation*}
	\begin{split}
	\liminf_{\epsilon\uparrow 0}&\frac{\tilde{\mathfrak{L}}(\bar{D}^{\epsilon}(t))\!-\!\tilde{\mathfrak{L}}(D^{*}(t))}{\epsilon}\!=\!\liminf_{\epsilon\uparrow 0}\mathbb{E}\left[\int_{0}^{\tau}e^{-\gamma t}\frac{\tilde{u}(\lambda^{*} e^{\gamma t}D^{*}(t)(1+\epsilon)H(t))-\tilde{u}(\lambda^{*} e^{\gamma t}D^{*}(t)H(t))}{\epsilon}dt\right.\\
	&-\left.\int_{0}^{\tau}(d-w\bar{L})\lambda^{*} D^{*}(t) H(t)dt+e^{-\gamma \tau}\frac{\tilde{U}(\lambda^{*} e^{\gamma \tau}D^{*}(\tau)(1+\epsilon)H(\tau))-\tilde{U}(\lambda^{*} e^{\gamma\tau}D^{*}(\tau)H(\tau))}{\epsilon}\right]\\
	&+\lambda^{*}\mathbb{E}\left[\int_{0}^{\tau}(F+\eta)H(t)dD^{*}(t)\right]+\lambda^{*} xD^{*}(0)\leq  0.
	\end{split}
	\end{equation*}
	Applying the Fatou's lemma, we obtain separately
	\begin{equation*}
	\begin{split}
	& \mathbb{E}\!\left[\!\int_{0}^{\tau}\!\!\!\!\!D^{*}\!(t)H\!(t)(\!c^{*}\!(t)\!+\!wl^{*}\!(t)\!-\!w\bar{L}\!+\!d)dt\!+\!\!D^{*}\!(\tau)H\!(\tau)X^{x\!,c^{*}\!,\pi^{*}\!,l^{*}}(\tau)\!\right]\!\!\leq\!\! xD^{*}\!(0)\!+\!\mathbb{E}\!\left[\!\int_{0}^{\tau}\!\!\!\!\!(F+\eta)H\!(t)dD^{*}\!(t)\right],\\
	&\mathbb{E}\!\left[\!\int_{0}^{\tau}\!\!\!\!\!D^{*}\!(t)H\!(t)(\!c^{*}\!(t)\!+\!wl^{*}\!(t)\!-\!w\bar{L}\!+\!d)dt\!+\!\!D^{*}\!(\tau)H\!(\tau)X^{x\!,c^{*}\!,\pi^{*}\!,l^{*}}(\tau)\!\right]\!\!\ge\!\! xD^{*}\!(0)\!+\!\mathbb{E}\!\left[\!\int_{0}^{\tau}\!\!\!\!\!(F+\eta)H\!(t)dD^{*}\!(t)\right],
	\end{split}
	\end{equation*}
	which give us
	\begin{equation*}
	\mathbb{E}\!\left[\!\int_{0}^{\tau}\!\!\!\!\!D^{*}\!(t)H\!(t)(\!c^{*}\!(t)\!+\!wl^{*}\!(t)\!-\!w\bar{L}\!+\!d)dt\!+\!\!D^{*}\!(\tau)H\!(\tau)X^{x\!,c^{*}\!,\pi^{*}\!,l^{*}}(\tau)\!\right]\!\!=\!\! xD^{*}\!(0)\!+\!\mathbb{E}\!\left[\!\int_{0}^{\tau}\!\!\!\!\!(F+\eta)H\!(t)dD^{*}\!(t)\right].
	\end{equation*}
	Moreover, we define a new optimization problem named $(P_{\tau}^{\prime})$ as
	\begin{equation*}
	\max_{c(t)\ge 0,l(t)\ge 0} \mathbb{E}\left[\int_{0}^{\tau}e^{-\gamma t}u(c(t),l(t))dt+e^{-\gamma \tau}U(X^{x,c,l}(\tau))\right], \tag{$P_{\tau}^{\prime}$}
	\end{equation*}
	subject to
	\begin{equation*} \mathbb{E}\!\left[\!\int_{0}^{\tau}\!\!\!\!\!D^{*}(t)H(t)(c(t)\!+\!wl(t)\!-\!w\bar{L}\!+\!d)dt+D^{*}(\tau)H(\tau)X^{x,c,l}(\tau)\!\right]\!\!\leq\! xD^{*}(0)\!+\!\mathbb{E}\!\left[\!\int_{0}^{\tau}\!\!\!\!\!(F\!+\!\eta)H(t)dD^{*}(t)\!\right] .
	\end{equation*}
	We denote the optimal solutions of the above problem as $\tilde{c}^{*}(t)$ and $\tilde{l}^{*}(t)$.The Lagrange method endows us  
	\begin{equation*}
	\tilde{c}^{*}(t)+w\tilde{l}^{*}(t)=-\tilde{u}^{\prime}(\tilde{\lambda}e^{\gamma t}D^{*}(t)H(t)),\quad X^{x,\tilde{c}^{*},\tilde{l}^{*}}(\tau)=-\tilde{U}^{\prime}(\tilde{\lambda}e^{\gamma \tau}D^{*}(\tau)H(\tau)),
	\end{equation*}
	where $\tilde{\lambda}>0$ is the Lagrange multiplier. The constraint of Problem $(P_{\tau}^{\prime})$ takes equality when $\tilde{\lambda}=\lambda^{*}$. Then, the condition $c^{*}(t)+wl^{*}(t)=-\tilde{u}^{\prime}(\lambda^{*} e^{\gamma t}D^{*}(t)H(t))$ 
	implies that $c^{*}(t)$ and $l^{*}(t)$ are the optimal control policies of the problem $(P_{\tau}^{\prime})$. Moreover, since the maximum utility of primal problem $(P_{\tau})$ is upper bounded by the maximum utility of $(P_{\tau}^{\prime})$, we can conclude that $c^{*}(t)$ and $l^{*}(t)$ are also the optimal consumption and leisure solution to the problem $(P_{\tau}).$
	
\subsection{Proof of Lemma \ref{Lemma 3}} \label{C.3}
Referring to \cite[Section 10.3, Example 10.3.1]{oksendal2013stochastic}, the notations here almost coincide with the ones there. Let us first introduce two new functions:
	\begin{equation*}
	g(y)\triangleq g(t,z)=e^{-\gamma t}\tilde{U}(z),\quad \mbox{and} \quad G(t,z,\bar{w})\triangleq g(t,z)+\bar{w}=e^{-\gamma t}\tilde{U}(z)+\bar{w}.
	\end{equation*}
	Defining an operator $\mathcal{A}_{P}G
	\triangleq\frac{\partial G}{\partial t}+(\gamma-r)z\frac{\partial G}{\partial z}+\frac{\theta^{2}}{2}z^{2}\frac{\partial^{2}G}{\partial z^{2}}+e^{-\gamma t}\tilde{u}(z)-e^{-\gamma t}(d-w\bar{L})z$, the continuous region of the corresponding optimal stopping time problem is expressed as 
	\begin{equation*}
	\Omega_{1}=\{(t,z,\bar{w}):\mathcal{A}_{P}G(t,z,\bar{w})>0\}.
	\end{equation*}
	Since 
	\begin{equation*}
	\mathcal{A}_{P}G(t,z,\bar{w})
	=-\gamma e^{-\gamma t}\tilde{U}(z)+(\gamma-r)z e^{-\gamma t}\tilde{U}^{\prime}(z)+\frac{\theta^{2}}{2}z^{2} e^{-\gamma t}\tilde{U}^{\prime\prime}(z)+e^{-\gamma t}\tilde{u}(z)-e^{-\gamma t}(d-w\bar{L})z,
	\end{equation*}
	defining a new function
	$h(z)
	\!=\!-\!\gamma \tilde{U}(z)\!+\!(\gamma\!-\!r)z \tilde{U}^{\prime}(z)\!+\!\frac{\theta^{2}}{2}z^{2}\tilde{U}^{\prime\prime}(z)\!+\!\tilde{u}(z)\!-\!(d\!-\!w\bar{L})z$,
	the continuous region can be rewritten as $	\Omega_{1}=\{z:h(z)>0\}.$ Because $\tilde{U}(z)$ takes the piecewise form of
	\begin{equation*}
	\tilde{U}(z)=\begin{cases}
	v_{\scriptscriptstyle PB}\left(\frac{z}{\alpha}\right)-Fz, & 0<z<\alpha \hat{z}_{\scriptscriptstyle PB},\\
	v_{\scriptscriptstyle PB}(\hat{z}_{\scriptscriptstyle PB})-Fz, & z\ge \alpha\hat{z}_{\scriptscriptstyle PB},
	\end{cases}
	\end{equation*}
	we can rewrite $h(z)$ as
	\begin{equation*}
	\begin{split}
	h(z)&=\begin{cases}
	-\gamma v_{\scriptscriptstyle PB}\left(\frac{z}{\alpha}\right)\!+\!(\gamma\!-\!r)\frac{z}{\alpha}v_{\scriptscriptstyle PB}^{\prime}\left(\frac{z}{\alpha}\right)\!+\!\frac{\theta^{2}}{2}\frac{z^{2}}{\alpha^{2}}v_{\scriptscriptstyle PB}^{\prime\prime}\left(\frac{z}{\alpha}\right)\!+\!rFz\!+\!\tilde{u}(z)\!-\!(d\!-\!w\bar{L})z,& 0\!<\!z\!<\!\alpha \hat{z}_{\scriptscriptstyle PB},\\
	-v_{\scriptscriptstyle PB}(\hat{z}_{\scriptscriptstyle PB})+\tilde{u}(z)+(rF-d+w\bar{L})z, & z\!\ge\! \alpha\hat{z}_{\scriptscriptstyle PB},
	\end{cases}\\
	&=\begin{cases}
	\tilde{u}(z)-\tilde{u}\left(\frac{z}{\alpha}\right)+(rF-d+w\bar{L}-\frac{w\bar{L}}{\alpha})z,& 0<z<\alpha \hat{z}_{\scriptscriptstyle PB},\\
	-v_{\scriptscriptstyle PB}(\hat{z}_{\scriptscriptstyle PB})+\tilde{u}(z)+(rF-d+w\bar{L})z, &z\ge \alpha\hat{z}_{\scriptscriptstyle PB},
	\end{cases}
	\end{split}
	\end{equation*}
	the last equality comes from Condition (\ref{Ap.18}) on the interval $0<z<\alpha \hat{z}_{\scriptscriptstyle PB}$.
	From Condition (\ref{1.13}), it can be observed that the function $h(z)$ only takes the form 
	\begin{equation*}
	h(z)=	\tilde{u}(z)-\tilde{u}\left(\frac{z}{\alpha}\right)+(rF-d+w\bar{L}-\frac{w\bar{L}}{\alpha})z,
	\end{equation*}
	on the considering interval $0<z<\hat{z}$.
	We first extend the domain of $h(z)$ to the whole positive real line, and discover the existence and uniqueness of its zero $\bar{z}$, then discuss the magnitude between $\bar{z}$ and $\hat{z}$. The decreasing property of $\tilde{u}(z)$ leads to $\tilde{u}(z)-\tilde{u}\left(\frac{z}{\alpha}\right)>0$; hence, a necessary condition to ensure that there is at least one zero point is put forward as
	$rF-d+w\bar{L}-\frac{w\bar{L}}{\alpha}<0.$
	Afterwards, for the sake of determining the curvature of $h(z)$, we take the second-order derivative and obtain
	\begin{equation*}
	h^{\prime\prime}(z)=
	\tilde{u}^{\prime\prime}(z)-\frac{1}{\alpha^{2}}\tilde{u}^{\prime\prime}\left(\frac{z}{\alpha}\right),
	\end{equation*}
	therefore we focus on the sign of function $\tilde{u}^{\prime\prime}(z)-\frac{1}{\alpha^{2}}\tilde{u}^{\prime\prime}\left(\frac{z}{\alpha}\right)$. Since $\tilde{u}(z)$ is strictly decreasing and convex on $(0,\infty)$ and adopts the piecewise form, we discover the sign of $\tilde{u}^{\prime\prime}(z)-\frac{1}{\alpha^{2}}\tilde{u}^{\prime\prime}\left(\frac{z}{\alpha}\right)$ on three different intervals: $z<\alpha\tilde{y}$, $\alpha\tilde{y}\leq z<\tilde{y}$ and $z\ge\tilde{y}$.
	\begin{itemize}
		\item On  $z<\alpha\tilde{y}$: from $\frac{\delta(1-k)}{1-\delta(1-k)}<0$, $0<\alpha <1$ and $A_{1}=\frac{1-\delta+\delta k}{\delta (1-k)}L^{-\frac{(1-k)(1-\delta)}{\delta(1-k)-1}}<0$, we get
		\begin{equation*}
		\tilde{u}^{\prime\prime}(z)-\frac{1}{\alpha^{2}}\tilde{u}^{\prime\prime}\left(\frac{z}{\alpha}\right)=\frac{\delta(1-k)}{(\delta(1-k)-1)^{2}}A_{1}z^{\frac{2-\delta(1-k)}{\delta(1-k)-1}}\left(1-\alpha^{\frac{\delta(1-k)}{1-\delta(1-k)}}\right)<0.
		\end{equation*}
		\item On $\alpha \tilde{y}\leq z<\tilde{y}$: the function $\tilde{u}^{\prime\prime}(z)-\frac{1}{\alpha^{2}}\tilde{u}^{\prime\prime}\left(\frac{z}{\alpha}\right)$ is rewritten as
		\begin{equation*}
		\tilde{u}^{\prime\prime}(z)-\frac{1}{\alpha^{2}}\tilde{u}^{\prime\prime}\left(\frac{z}{\alpha}\right)=\frac{1}{z^{2}}\left[\frac{\delta(1-k)}{(\delta(1-k)-1)^{2}}A_{1}z^{\frac{\delta(1-k)}{\delta(1-k)-1}}-\frac{1-k}{k^{2}}A_{2}z^{-\frac{1-k}{k}}\alpha^{\frac{1-k}{k}}\right].
		\end{equation*}
		The first term inside the square bracket has the following inequality relationship
		\begin{equation*}
		\frac{\delta(1\!-\!k)}{(\delta(1\!-\!k)\!-\!1)^{2}}A_{1}z^{\frac{\delta(1\!-\!k)}{\delta(1\!-\!k)\!-\!1}}\!<\!\frac{\delta(1\!-\!k)}{(\delta(1\!-\!k)\!-\!1)^{2}}A_{1}\tilde{y}^{\frac{\delta(1\!-\!k)}{\delta(1\!-\!k)\!-\!1}}\!=\!\frac{1}{1\!-\!\delta(1\!-\!k)}L^{1\!-\!k}\left(\frac{1\!-\!\delta}{\delta w}\right)^{\!-\!\delta(1\!-\!k)},
		\end{equation*}
		as for the second term, we have
		\begin{equation*}
		-\frac{1-k}{k^{2}}A_{2}z^{-\frac{1-k}{k}}\alpha^{\frac{1-k}{k}}\leq-\frac{1-k}{k^{2}}A_{2}(\alpha\tilde{y})^{-\frac{1-k}{k}}\alpha^{\frac{1-k}{k}}=-\frac{L^{1-k}}{\delta}\left(\frac{1-\delta}{\delta w}\right)^{-\delta(1-k)}.
		\end{equation*}
		Then we can determine the sign of $\tilde{u}^{\prime\prime}(z)-\frac{1}{\alpha^{2}}\tilde{u}^{\prime\prime}\left(\frac{z}{\alpha}\right)$ as
		\begin{equation*}
		\begin{split}
		\tilde{u}^{\prime\prime}(z)-\frac{1}{\alpha^{2}}\tilde{u}^{\prime\prime}\left(\frac{z}{\alpha}\right)&<\frac{1}{z^{2}}\bigg[\frac{1}{1-\delta(1-k)}L^{1-k}\left(\frac{1-\delta}{\delta w}\right)^{-\delta(1-k)}-\frac{L^{1-k}}{\delta}\left(\frac{1-\delta}{\delta w}\right)^{-\delta(1-k)}\bigg]\\
		&=\frac{1}{z^{2}}\bigg[\left(\frac{1-\delta}{\delta w}\right)^{-\delta(1-k)}L^{1-k}\left(\frac{\delta-1+\delta(1-k)}{(1-\delta(1-k))\delta}\right)\bigg]<0.
		\end{split}
		\end{equation*}
		\item On $z\ge\tilde{y}$: The conditions $\frac{1-k}{k}\!<\!0$, $0\!<\!\alpha\!<\!1$ and $A_{2}\!=\!\frac{k}{\delta(1\!-\!k)}\left(\frac{1\!-\!\delta}{\delta w}\right)^{\frac{(1\!-\!k)(1\!-\!\delta)}{k}}\!<\!0$ endows us
		\begin{equation*}
		\tilde{u}^{\prime\prime}(z)-\frac{1}{\alpha^{2}}\tilde{u}^{\prime\prime}\left(\frac{z}{\alpha}\right)=\frac{1-k}{k^{2}}A_{2}z^{-\frac{1}{k}-1}\left(1-\alpha^{\frac{1-k}{k}}\right)<0.
		\end{equation*} 
	\end{itemize}
	Hence, we can summarize that $h^{\prime\prime}(z)<0$, which means the function $h(z)$ is strictly concave for $z>0$. Additionally, combining with the subsequent facts
	\begin{equation*}
	\lim\limits_{z\downarrow 0}h(z)=\lim\limits_{z\downarrow 0}A_{1}z^{\frac{\delta(1-k)}{\delta(1-k)-1}}\left(1-\alpha^{\frac{\delta(1-k)}{1-\delta(1-k)}}\right)+wLz\left(\frac{1}{\alpha}-1\right)+\left(rF-d+w\bar{L}-\frac{w\bar{L}}{\alpha}\right)z=0,
	\end{equation*}
	\begin{equation*}
	\lim\limits_{z\downarrow 0}h^{\prime}(z)=\lim\limits_{z\downarrow 0}\frac{\delta(1-k)}{\delta(1-k)-1}A_{1}z^{\frac{1}{\delta(1-k)-1}}\left(1-\alpha^{\frac{\delta(1-k)}{1-\delta(1-k)}}\right)+rF-d+w(\bar{L}-L)\left(1-\frac{1}{\alpha}\right)=\infty,
	\end{equation*}
	\begin{equation*}
	\lim\limits_{z\to\infty}h(z)=\lim\limits_{z\to\infty}A_{2}z^{-\frac{1-k}{k}}\left(1-\alpha^{\frac{1-k}{k}}\right)+\left(rF-d+w\bar{L}-\frac{w\bar{L}}{\alpha}\right)z=-\infty,
	\end{equation*}
	we can conclude that there exists a unique $\bar{z}$ such that $\{z>0:h(z)\!>\!0\}\!=\!\{0\!<\!z\!<\!\bar{z}\}$. Moreover, solutions of the primal optimization problem are discussed in two different cases, $\bar{z}\!<\!\hat{z}$ and $\bar{z}\!\ge\!\hat{z}$.
	
\subsection{Proof of Lemma \ref{Lemma 4}} \label{C.4}
Referring to \cite[Appendix, A.2]{jeanblanc2004optimal}, we make use of the scale function to calculate the probability of the stopping time. Related to the diffusion process of $Z(t)$, the scale function is
	\begin{equation*}
	s(z)=\int_{a}^{z}e^{-2\int_{a}^{y}\frac{f(x)}{g^{2}(x)}dx}dy,
	\end{equation*}
	where $f(x)=(\gamma-r)x$, $g(x)=-\theta x$ are the drift and diffusion coefficients of $dZ(t)$ and $a$ is an arbitrary constant located in $(0,\bar{z})$. Then we obtain
	\begin{equation*}
	\mathbb{P}(\tau_{\bar{z}}<\tau_{0})=\frac{s(z)-s(0)}{s(\bar{z})-s(0)},
	\end{equation*}
	with $s(0)=\lim\limits_{b\to 0^{+}}\int_{a}^{b}e^{-2\int_{a}^{y}\frac{\gamma-r}{\theta^{2}x}dx}dy$.
	It follows that in order to prove that the above probability is one instead of depending on the initial value $z$, it suffices to show that $s(0)=-\infty$. Since
	\begin{equation*}
	s(0)=\lim_{b\to 0^{+}}\int_{a}^{b}e^{-2\int_{a}^{y}\frac{\gamma-r}{\theta^{2}x}dx}dy=\lim_{b\to 0^{+}}\left(\frac{a^{\frac{2(\gamma-r)}{\theta^{2}}}}{1-\frac{2(\gamma-r)}{\theta^{2}}}b^{1-\frac{2(\gamma-r)}{\theta^{2}}}-\frac{a}{1-\frac{2(\gamma-r)}{\theta^{2}}}\right),
	\end{equation*}
	a sufficient condition to make $s(0)=-\infty$ is $1-\frac{2(\gamma-r)}{\theta^{2}}<0$, which is equal to $\gamma>r+\frac{\theta^{2}}{2}.$ Moreover, following the same argument, it can be easily proven that $\mathbb{P}\left(\tau_{0}<\infty\right)=1$ with the condition (\ref{1.15}), which also gives us $\mathbb{P}\left(\tau_{\bar{z}}<\infty\right)=1$.
	
\subsection{Solutions of Variational Inequalities (\ref{1.16}) and (\ref{1.17})} \label{C.5}

\paragraph{\textbf{Case 1. $0<\bar{z}<\hat{z}<\alpha\tilde{y}\leq\alpha\hat{z}_{\scriptscriptstyle PB}$:}}
\noindent The differential equation generated from Condition $(V3)$ in the system (\ref{1.16}) takes the solution as
\begin{equation*}
v(z)=B_{1}z^{n_{1}}+B_{2}z^{n_{2}}+\frac{A_{1}}{\Gamma_{1}}z^{\frac{\delta(1-k)}{\delta(1-k)-1}}+\frac{w(\bar{L}-L)-d}{r}z, \qquad 0<z<\bar{z}.
\end{equation*}
Since $n_{1}<0$, the term $z^{n_{1}}$ will suffer the explosion as $z$ goes to 0. Therefore, we set the coefficient $B_{1}=0$ to meet the boundedness assumption of $v(z)$.
Forward, the smooth fit condition $v(\bar{z})=\tilde{U}(\bar{z})$ leads to the subsequent two-equation system to resolve the parameters $B_{2}$ and $\bar{z}$.
\begin{itemize}
	\item  $\mathcal{C}^{0}$ condition at $z=\bar{z}$
	\begin{equation}
	B_{2}\bar{z}^{n_{2}}\!+\!\frac{A_{1}}{\Gamma_{1}}\bar{z}^{\frac{\delta(1\!-\!k)}{\delta(1\!-\!k)\!-\!1}}\!+\!\frac{w(\bar{L}\!-\!L)\!-\!d}{r}\bar{z}\!=\!B_{21,\scriptscriptstyle PB}\left(\frac{\bar{z}}{\alpha}\right)^{n_{2}}\!+\!\frac{A_{1}}{\Gamma_{1}}\left(\frac{\bar{z}}{\alpha}\right)^{\frac{\delta(1\!-\!k)}{\delta(1\!-\!k)\!-\!1}}\!+\!\left(\frac{w(\bar{L}\!-\!L)}{r\alpha}\!-\!F\right)\bar{z}.\label{Ap.19}
	\end{equation}
	\item $\mathcal{C}^{1}$ condition at $z=\bar{z}$
	\begin{equation}
	\begin{split}
	n_{2}B_{2}\bar{z}^{n_{2}-1}+&\frac{\delta(1-k)}{\delta(1-k)-1}\frac{A_{1}}{\Gamma_{1}}\bar{z}^{\frac{1}{\delta(1-k)-1}}+\frac{w(\bar{L}-L)-d}{r}\\
	&=n_{2}\frac{B_{21,\scriptscriptstyle PB}}{\alpha}\left(\frac{\bar{z}}{\alpha}\right)^{n_{2}-1}+\frac{\delta(1-k)}{\delta(1-k)-1}\frac{A_{1}}{\Gamma_{1}\alpha}\left(\frac{\bar{z}}{\alpha}\right)^{\frac{1}{\delta(1-k)-1}}+\left(\frac{w(\bar{L}-L)}{r\alpha}-F\right).
	\end{split}\label{Ap.20}
	\end{equation}
\end{itemize}
Besides, the condition $\tilde{U}^{'}(\hat{z})=-(F+\eta)$ generates the following equation
\begin{equation*}
\tilde{U}(\hat{z})=n_{2}\frac{B_{21,\scriptscriptstyle PB}}{\alpha}\left(\frac{\hat{z}}{\alpha}\right)^{n_{2}-1}+\frac{\delta(1-k)}{\delta(1-k)-1}\frac{A_{1}}{\Gamma_{1}\alpha}\left(\frac{\hat{z}}{\alpha}\right)^{\frac{1}{\delta(1-k)-1}}+\frac{w(\bar{L}-L)}{r\alpha}=-\eta,
\end{equation*} 
which endows us the value of $\hat{z}$.

\paragraph{\textbf{Case 2. $0<\bar{z}<\alpha\tilde{y}\leq\hat{z}\leq\alpha\hat{z}_{\scriptscriptstyle PB}$:}}
\noindent The Condition $(V3)$ from (\ref{1.16}) leads to the following differential equation
\begin{equation*}
-\gamma v(z)+(\gamma-r)z v^{\prime}(z)+\frac{1}{2}\theta^{2}z^{2}v^{\prime\prime}(z)+\tilde{u}(z)-(d-w\bar{L})z=0, \quad 0<z<\bar{z},
\end{equation*}
since the condition $\bar{z}<\tilde{y}$ remains unchanged, the above equation keeps the same compared with the previous sections, hence, takes the identical solution as
\begin{equation*}
v(z)=B_{2}z^{n_{2}}+\frac{A_{1}}{\Gamma_{1}}z^{\frac{\delta(1-k)}{\delta(1-k)-1}}+\frac{w(\bar{L}-L)-d}{r}z, \qquad 0<z<\bar{z}.
\end{equation*}
Furthermore, because the condition $\bar{z}<\alpha\tilde{y}$ also coincides with Case 1, we have the same smooth fit condition $v(\bar{z})=\tilde{U}(\bar{z})$, which results in the identical values of parameters $B_{2}$ and $\bar{z}$. The difference from the previous case occurs at the boundary $\hat{z}$. Because of $\alpha\tilde{y}\leq\hat{z}\leq\alpha\hat{z}_{\scriptscriptstyle PB}$, the function $\tilde{U}(z)$ adopts a different form at the point $\hat{z}$; furthermore, 
$(V1)$ and $(V6)$ in the system (\ref{1.16}) produces a different equation, that is,
\begin{equation}
n_{1}\frac{B_{12,\scriptscriptstyle PB}}{\alpha}\left(\frac{\hat{z}}{\alpha}\right)^{n_{1}-1}+n_{2}\frac{B_{22,\scriptscriptstyle PB}}{\alpha}\left(\frac{\hat{z}}{\alpha}\right)^{n_{2}-1}-\frac{1-k}{k}\frac{A_{2}}{\Gamma_{2}\alpha}\left(\frac{\hat{z}}{\alpha}\right)^{-\frac{1}{k}}+\frac{w\bar{L}}{r\alpha}=-\eta,\label{Ap.21}
\end{equation}
to acquire the value of $\hat{z}$.

\paragraph{\textbf{Case 3. $0<\alpha\tilde{y}<\bar{z}<\hat{z}\leq\alpha\hat{z}_{\scriptscriptstyle PB}$ $\&$ $\bar{z}<\tilde{y}$:}}
\noindent Under the same condition $\bar{z}<\tilde{y}$ with the previous cases, the differential equation generating from $(V3)$ of (\ref{1.16}) follows the identical solution,
\begin{equation*}
v(z)=B_{2}z^{n_{2}}+\frac{A_{1}}{\Gamma_{1}}z^{\frac{\delta(1-k)}{\delta(1-k)-1}}+\frac{w(\bar{L}-L)-d}{r}z, \qquad 0<z<\bar{z}.
\end{equation*}
Then the smooth fit condition $v(\bar{z})=\tilde{U}(\bar{z})$ enables us to determine the values of parameters $B_{2}$ and $\bar{z}$ with the following two-equation system:
\begin{itemize}
	\item  $\mathcal{C}^{0}$ condition at $z=\bar{z}$
	\begin{equation*}
	B_{2}\bar{z}^{n_{2}}\!+\!\frac{A_{1}}{\Gamma_{1}}\bar{z}^{\frac{\delta(1\!-\!k)}{\delta(1\!-\!k)\!-\!1}}\!+\!\frac{w(\bar{L}\!-\!L)\!-d}{r}\bar{z}\!=\!B_{12,\scriptscriptstyle PB}\left(\!\frac{\bar{z}}{\alpha}\!\right)^{n_{1}}\!+\!B_{22,\scriptscriptstyle PB}\left(\!\frac{\bar{z}}{\alpha}\!\right)^{n_{2}}\!+\!\frac{A_{2}}{\Gamma_{2}}\!\left(\!\frac{\bar{z}}{\alpha}\!\right)^{\!-\!\frac{1\!-\!k}{k}}\!+\!\left(\!\frac{w\bar{L}}{r\alpha}\!-\!F\!\right)\bar{z};
	\end{equation*}
	\item $\mathcal{C}^{1}$ condition at $z=\bar{z}$
	\begin{equation*}
	\begin{split}
	n_{2}B_{2}\bar{z}^{n_{2}\!-\!1}\!\!+\!\frac{\delta(\!1\!-\!k\!)}{\delta(\!1\!-\!k\!)\!-\!1}\!&\frac{A_{1}}{\Gamma_{1}}\bar{z}^{\frac{1}{\delta(1\!-\!k)\!-\!1}}\!\!+\!\frac{w(\!\bar{L}\!-\!L\!)\!-\!d}{r}\!=\!\\
	&n_{1}\!\frac{B_{12,\scriptscriptstyle PB}}{\alpha}\!\left(\!\frac{\bar{z}}{\alpha}\!\right)^{n_{1}\!-\!1}\!\!+\!n_{2}\!\frac{B_{22,\scriptscriptstyle PB}}{\alpha}\!\left(\!\frac{\bar{z}}{\alpha}\!\right)^{n_{2}\!-\!1}\!\!-\!\frac{1\!-\!}{k}\!\frac{A_{2}}{\Gamma_{2}\alpha}\!\left(\!\frac{\bar{z}}{\alpha}\!\right)^{\!-\!\frac{1}{k}}\!+\!\left(\!\frac{w\bar{L}}{r\alpha}\!-\!F\!\right).
	\end{split}
	\end{equation*}
\end{itemize}
Besides, from Condition $(V1)$ and $(V6)$ of (\ref{1.16}), we have the smooth fit condition $\tilde{U}^{\prime}(\hat{z})=-(F+\eta)$. Combining with the prerequisite $\alpha\tilde{y}\leq\hat{z}\leq\alpha\hat{z}_{\scriptscriptstyle PB}$, we have the subsequent equation,
\begin{equation*}
n_{1}\frac{B_{12,\scriptscriptstyle PB}}{\alpha}\left(\frac{\hat{z}}{\alpha}\right)^{n_{1}-1}+n_{2}\frac{B_{22,\scriptscriptstyle PB}}{\alpha}\left(\frac{\hat{z}}{\alpha}\right)^{n_{2}-1}-\frac{1-k}{k}\frac{A_{2}}{\Gamma_{2}\alpha}\left(\frac{\hat{z}}{\alpha}\right)^{-\frac{1}{k}}+\frac{w\bar{L}}{r\alpha}=-\eta,
\end{equation*}
which gives us the exact value of parameter $\hat{z}$. 

\paragraph{\textbf{Case 4. $0<\bar{z}<\hat{z}\leq\alpha\hat{z}_{\scriptscriptstyle PB}<\alpha\tilde{y}<\tilde{y}$:}}
\noindent Same with the previous cases, Condition $(V3)$ of (\ref{1.16}) forces the function $v(z)$ to take the solution as
\begin{equation*}
v(z)=B_{2}z^{n_{2}}+\frac{A_{1}}{\Gamma_{1}}z^{\frac{\delta(1-k)}{\delta(1-k)-1}}+\frac{w(\bar{L}-L)-d}{r}z, \qquad 0<z<\bar{z}.
\end{equation*}
Then, considering the smooth fit condition $v(\bar{z})=\tilde{U}(\bar{z})$, a two-equation system is established to resolve the parameters $B_{2}$ and $\bar{z}$:
\begin{itemize}
	\item  $\mathcal{C}^{0}$ condition at $z=\bar{z}$
	\begin{equation*}
	B_{2}\bar{z}^{n_{2}}\!+\!\frac{A_{1}}{\Gamma_{1}}\bar{z}^{\frac{\delta(1-k)}{\delta(1-k)-1}}+\frac{w(\bar{L}\!-\!L)\!-\!d}{r}\bar{z}\!=\!B_{2,\scriptscriptstyle PB}\left(\frac{\bar{z}}{\alpha}\right)^{n_{2}}\!+\!\frac{A_{1}}{\Gamma_{1}}\left(\frac{\bar{z}}{\alpha}\right)^{\frac{\delta(1-k)}{\delta(1-k)-1}}\!+\!\left(\frac{w(\bar{L}-L)}{r\alpha}\!-\!F\right)\bar{z};
	\end{equation*}
	\item $\mathcal{C}^{1}$ condition at $z=\bar{z}$
	\begin{equation*}
	\begin{split}
	n_{2}B_{2}\bar{z}^{n_{2}-1}+&\frac{\delta(1-k)}{\delta(1-k)-1}\frac{A_{1}}{\Gamma_{1}}\bar{z}^{\frac{1}{\delta(1-k)-1}}+\frac{w(\bar{L}-L)-d}{r}\\
	&=n_{2}\frac{B_{2,\scriptscriptstyle PB}}{\alpha}\left(\frac{\bar{z}}{\alpha}\right)^{n_{2}-1}+\frac{\delta(1-k)}{\delta(1-k)-1}\frac{A_{1}}{\Gamma_{1}\alpha}\left(\frac{\bar{z}}{\alpha}\right)^{\frac{1}{\delta(1-k)-1}}+\left(\frac{w(\bar{L}-L)}{r\alpha}-F\right).
	\end{split}
	\end{equation*}
\end{itemize}
Combining the condition $\tilde{U}^{\prime}(\hat{z})=-(F+\eta)$ with the prerequisite $\hat{z}\leq\hat{z}_{\scriptscriptstyle PB}$, we get the following equation
\begin{equation*}
n_{2}\frac{B_{2,\scriptscriptstyle PB}}{\alpha}\left(\frac{\hat{z}}{\alpha}\right)^{n_{2}-1}+\frac{\delta(1-k)}{\delta(1-k)-1}\frac{A_{1}}{\Gamma_{1}\alpha}\left(\frac{\hat{z}}{\alpha}\right)^{\frac{1}{\delta(1-k)-1}}+\frac{w(\bar{L}-L)}{r\alpha}=-\eta,
\end{equation*}
which enables us to obtain the exact value of $\hat{z}$.

\paragraph{\textbf{Case 5. $0<\alpha\tilde{y}<\tilde{y}\leq\bar{z}<\hat{z}<\alpha\hat{z}_{\scriptscriptstyle PB}$:}}
\noindent We begin with Condition $(V3)$ of (\ref{1.16}), which leads to the subsequent differential equation on the interval $0<z<\bar{z}$,
\begin{equation*}
-\gamma v(z)+(\gamma-r)z v^{\prime}(z)+\frac{1}{2}\theta^{2}z^{2}v^{\prime\prime}(z)+\tilde{u}(z)-(d-w\bar{L})z=0.
\end{equation*} 
Since $\bar{z}>\tilde{y}$, Lemma \ref{Lemma 1} shows that the function $\tilde{u}(z)$ takes two different forms on the corresponding interval with $\tilde{y}$ as the separating threshold. Hence, the solution of the above differential equation inherits this piecewise property and has the following resolution,
\begin{equation*}
v(z)=\begin{cases}
B_{11}z^{n_{1}}+B_{21}z^{n_{2}}+\frac{A_{1}}{\Gamma_{1}}z^{\frac{\delta(1-k)}{\delta(1-k)-1}}+\frac{w(\bar{L}-L)-d}{r}z, & 0<z<\tilde{y},\\

B_{12}z^{n_{1}}+B_{22}z^{n_{2}}+\frac{A_{2}}{\Gamma_{2}}z^{-\frac{1-k}{k}}+\frac{w\bar{L}-d}{r}z,& \tilde{y}\leq z<\bar{z}.
\end{cases}
\end{equation*}
To meet the boundedness assumption of $v(z)$, we set $B_{11}=0$ to avoid the explosion of term $z^{n_{1}}$ as $z$ goes to 0. Then, the same argument with the previous cases, we need to construct a four-equation system, which resorts to the smooth conditions at the point $\tilde{y}$ and $\bar{z}$, to determine the boundary $\bar{z}$, and the coefficient of function $v(z)$, i.e., $B_{21}$, $B_{12}$, $B_{22}$:
\begin{itemize}
	\item $\mathcal{C}^{0}$ condition at $z=\bar{z}$
	\begin{equation*}
	B_{12}\bar{z}^{n_{1}}\!+\!B_{22}\bar{z}^{n_{2}}\!+\!\frac{A_{2}}{\Gamma_{2}}\bar{z}^{-\frac{1\!-\!k}{k}}\!+\!\frac{w\bar{L}\!-\!d}{r}\bar{z}\!=\!B_{12,\scriptscriptstyle PB}\left(\!\frac{\bar{z}}{\alpha}\!\right)^{n_{1}}\!\!+\!B_{22,\scriptscriptstyle PB}\left(\!\frac{\bar{z}}{\alpha}\!\right)^{n_{2}}\!\!+\!\frac{A_{2}}{\Gamma_{2}}\left(\!\frac{\bar{z}}{\alpha}\!\right)^{\!-\!\frac{1\!-\!k}{k}}\!\!+\!\left(\!\frac{w\bar{L}}{r\alpha}\!-\!F\!\right)\bar{z};
	\end{equation*}
	\item $\mathcal{C}^{1}$ condition at $z=\bar{z}$
	\begin{equation*}
	\begin{split}
	n_{1}B_{12}\bar{z}^{n_{1}-1}+n_{2}&B_{22}\bar{z}^{n_{2}-1}-\frac{1-k}{k}\frac{A_{2}}{\Gamma_{2}}\bar{z}^{-\frac{1}{k}}+\frac{w\bar{L}-d}{r}=\\
	&n_{1}\frac{B_{12,\scriptscriptstyle PB}}{\alpha}\left(\frac{\bar{z}}{\alpha}\right)^{n_{1}\!-\!1}\!+\!n_{2}\frac{B_{22,\scriptscriptstyle PB}}{\alpha}\left(\frac{\bar{z}}{\alpha}\right)^{n_{2}\!-\!1}\!-\!\frac{1\!-\!k}{k}\frac{A_{2}}{\Gamma_{2}\alpha}\left(\frac{\bar{z}}{\alpha}\right)^{\!-\!\frac{1}{k}}\!+\!\frac{w\bar{L}}{r\alpha}\!-\!F;
	\end{split}
	\end{equation*}
	\item $\mathcal{C}^{0}$ condition at $z=\tilde{y}$
	\begin{equation*}
	B_{21}\tilde{y}^{n_{2}}+\frac{A_{1}}{\Gamma_{1}}\tilde{y}^{\frac{\delta(1-k)}{\delta(1-k)-1}}-\frac{wL}{r}\tilde{y}=B_{12}\tilde{y}^{n_{1}}+B_{22}\tilde{y}^{n_{2}}+\frac{A_{2}}{\Gamma_{2}}\tilde{y}^{-\frac{1-k}{k}};
	\end{equation*}
	\item $\mathcal{C}^{1}$ condition at $z=\tilde{y}$
	\begin{equation*}
	n_{2}B_{21}\tilde{y}^{n_{2}-1}+\frac{\delta(1-k)}{\delta(1-k)-1}\frac{A_{1}}{\Gamma_{1}}\tilde{y}^{\frac{1}{\delta(1-k)-1}}-\frac{wL}{r}= n_{1}B_{12}\tilde{y}^{n_{1}-1}+n_{2}B_{22}\tilde{y}^{n_{2}-1}-\frac{1-k}{k}\frac{A_{2}}{\Gamma_{2}}\tilde{y}^{-\frac{1}{k}}.
	\end{equation*}	
\end{itemize}
Meanwhile, considering the prerequisite $\alpha\tilde{y}\!<\!\hat{z}\!<\!\alpha\hat{z}_{\scriptscriptstyle PB}$ and the smooth fit condition $\tilde{U}^{'}(\hat{z})\!=\!-(F\!+\!\eta)$, we have
\begin{equation*}
n_{1}\frac{B_{12,\scriptscriptstyle PB}}{\alpha}\left(\frac{\hat{z}}{\alpha}\right)^{n_{1}-1}+n_{2}\frac{B_{22,\scriptscriptstyle PB}}{\alpha}\left(\frac{\hat{z}}{\alpha}\right)^{n_{2}-1}-\frac{1-k}{k}\frac{A_{2}}{\Gamma_{2}\alpha}\left(\frac{\hat{z}}{\alpha}\right)^{-\frac{1}{k}}+\frac{w\bar{L}}{r\alpha}=-\eta,
\end{equation*}
which gives us the value of $\hat{z}$.

\paragraph{\textbf{Case 6. $0<\tilde{y}\leq\hat{z}$ $\&$ $\bar{z}\ge\hat{z}$:}}
The Condition $(V3)$ of (\ref{1.17}) endows us a partial differential equation with the solution
\begin{equation*}
v(z)=\begin{cases}
B_{11}z^{n_{1}}+B_{21}z^{n_{2}}+\frac{A_{1}}{\Gamma_{1}}z^{\frac{\delta(1-k)}{\delta(1-k)-1}}+\frac{w(\bar{L}-L)-d}{r}z, & 0<z<\tilde{y},\\

B_{12}z^{n_{1}}+B_{22}z^{n_{2}}+\frac{A_{2}}{\Gamma_{2}}z^{-\frac{1-k}{k}}+\frac{w\bar{L}-d}{r}z,& \tilde{y}\leq z< \hat{z}.
\end{cases}
\end{equation*}
Due to the same considerations as before, we set $B_{11}=0$ to meet the boundedness of $v(z)$. Then, with the smooth fit conditions at $\hat{z}$ and $\tilde{y}$, we can construct a four-equation system to determine the values of parameters $B_{21}$, $B_{12}$, $B_{22}$ and $\hat{z}$:
\begin{itemize}
	\item $\mathcal{C}^{0}$ condition at $z=\tilde{y}$
	\begin{equation*}
	B_{21}\tilde{y}^{n_{2}}+\frac{A_{1}}{\Gamma_{1}}\tilde{y}^{\frac{\delta(1-k)}{\delta(1-k)-1}}-\frac{wL}{r}\tilde{y}=B_{12}\tilde{y}^{n_{1}}+B_{22}\tilde{y}^{n_{2}}+\frac{A_{2}}{\Gamma_{2}}\tilde{y}^{-\frac{1-k}{k}};
	\end{equation*}
	\item $\mathcal{C}^{1}$ condition at $z=\tilde{y}$
	\begin{equation*}
	n_{2}B_{21}\tilde{y}^{n_{2}-1}+\frac{\delta(1-k)}{\delta(1-k)-1}\frac{A_{1}}{\Gamma_{1}}\tilde{y}^{\frac{1}{\delta(1-k)-1}}-\frac{wL}{r}= n_{1}B_{12}\tilde{y}^{n_{1}-1}+n_{2}B_{22}\tilde{y}^{n_{2}-1}-\frac{1-k}{k}\frac{A_{2}}{\Gamma_{2}}\tilde{y}^{-\frac{1}{k}};
	\end{equation*}	
	\item $\mathcal{C}^{1}$ condition at $z=\hat{z}$
	\begin{equation*}
	n_{1}B_{12}\hat{z}^{n_{1}-1}+n_{2}B_{22}\hat{z}^{n_{2}-1}-\frac{1-k}{k}\frac{A_{2}}{\Gamma_{2}}\hat{z}^{-\frac{1}{k}}+\frac{w\bar{L}-d}{r}+F+\eta=0;
	\end{equation*}
	\item $\mathcal{C}^{2}$ condition at $z=\hat{z}$
	\begin{equation*}
	n_{1}(n_{1}-1)B_{12}\hat{z}^{n_{1}-2}+n_{2}(n_{2}-1)B_{22}\hat{z}^{n_{2}-2}+\frac{1-k}{k^{2}}\frac{A_{2}}{\Gamma_{2}}\hat{z}^{-\frac{1+k}{k}}=0.
	\end{equation*}
\end{itemize}

\paragraph{\textbf{Case 7. $0<\hat{z}<\tilde{y}$ $\&$ $\bar{z}\ge\hat{z}$:}}
\noindent The prerequisite $\hat{z}<\tilde{y}$ makes the function $\tilde{u}(z)$ of the form $A_{1}z^{\frac{\delta(1-k)}{\delta(1-k)-1}}-w L z$. Then Condition $(V3)$ in (\ref{1.17}) has the solution
\begin{equation*}
v(z)=
B_{1}z^{n_{1}}+B_{2}z^{n_{2}}+\frac{A_{1}}{\Gamma_{1}}z^{\frac{\delta(1-k)}{\delta(1-k)-1}}+\frac{w(\bar{L}-L)-d}{r}z, \quad 0<z<\hat{z}.
\end{equation*}
We set $B_{1}=0$ for avoiding the explosion of the term $B_{1}z^{n_{1}}$ when $z$ goes to 0. Then a two-equation system is created to solve the value of parameters $B_{2}$ and $\hat{z}$ explicitly:
\begin{itemize}
	\item $\mathcal{C}^{1}$ condition at $z=\hat{z}$
	\begin{equation*}
	n_{2}B_{2}\hat{z}^{n_{2}-1}+\frac{\delta(1-k)}{\delta(1-k)-1}\frac{A_{1}}{\Gamma_{1}}\hat{z}^{\frac{1}{\delta(1-k)-1}}+\frac{w(\bar{L}-L)-d}{r}+F+\eta=0;
	\end{equation*}
	\item $\mathcal{C}^{2}$ condition at $z=\hat{z}$
	\begin{equation*}
	n_{2}(n_{2}-1)B_{2}\hat{z}^{n_{2}-2}+\frac{\delta(1-k)}{(\delta(1-k)-1)^{2}}\frac{A_{1}}{\Gamma_{1}}\hat{z}^{\frac{2-\delta(1-k)}{\delta(1-k)-1}}=0.
	\end{equation*}
\end{itemize}

\section{No Optimal Bankruptcy Problem}\label{nob}
To study the influence of introducing the bankruptcy option, we also solve a pure optimal control problem without optimal stopping, which is defined subsequently as
\begin{equation*}
V_{\scriptscriptstyle nob}(x)=\sup_{\{c_{\scriptscriptstyle nob}(t),\pi_{\scriptscriptstyle nob}(t),l_{\scriptscriptstyle nob}(t)\}_{\scriptscriptstyle t\ge 0}\in\mathcal{A}_{\scriptscriptstyle nob}(x)}\mathbb{E}\left[\int_{0}^{\infty}e^{-\gamma t}u(c_{\scriptscriptstyle nob}(t),l_{\scriptscriptstyle nob}(t))dt\right].
\end{equation*}
The subscript $\scriptstyle nob$ indicates that the considering variables and functions are concerned with the no bankruptcy option model. Moreover, the admissible control set $\mathcal{A}_{\scriptscriptstyle nob}(x)$ is almost consistent with the definition of $\mathcal{A}_{\scriptscriptstyle PB}(x)$ except the liquidity condition. $\mathcal{A}_{\scriptscriptstyle nob}(x)$ adopts ``$X_{\scriptscriptstyle nob}(t)\ge F+\eta$, a.s., $\forall t\!\ge\! 0$'' instead of ``$X(t)\!\ge \!0$, a.s., $\forall t\!\ge\! 0$''.
\noindent Then we provide the budget and liquidity constraints as:
\begin{equation*}
\begin{cases}
\mbox{Budget Constraint:}\qquad \mathbb{E}\left[\int_{0}^{\infty}H(t)(c_{\scriptscriptstyle nob}(t)+d+wl_{\scriptscriptstyle nob}(t)-w\bar{L})dt\right]\leq x,\\
\mbox{Liquidity Constraint:}\qquad \mathbb{E}\left[\left.\int_{t}^{\infty}\frac{H(s)}{H(t)}(c_{\scriptscriptstyle nob}(s)+d+wl_{\scriptscriptstyle nob}(s)-w\bar{L})ds\right|\mathcal{F}_{t}\right]\ge F+\eta.
\end{cases}
\end{equation*}
Following the same argument with the post-bankruptcy part, we can solve the optimal control problem with two different cases:
\begin{itemize}
	\item Case 1. $0<\tilde{y}\leq\hat{z}_{\scriptscriptstyle nob}$: the optimal consumption-portfolio-leisure strategy is 
	\begin{equation*}
	c_{\scriptscriptstyle nob}^{*}(t)=\begin{cases}
	L^{-\frac{(1-k)(1-\delta)}{\delta(1-k)-1}}(Z_{\scriptscriptstyle nob}^{*}(t))^{\frac{1}{\delta(1-k)-1}}, &0<Z_{\scriptscriptstyle nob}^{*}(t)<\tilde{y},\\
	\left(\frac{1-\delta}{\delta w}\right)^{\frac{(1-\delta)(1-k)}{k}}(Z_{\scriptscriptstyle nob}^{*}(t))^{-\frac{1}{k}}, & \tilde{y}\leq Z_{\scriptscriptstyle nob}^{*}(t)\leq\hat{z}_{\scriptscriptstyle nob},\\
	\end{cases}
	\end{equation*}
	\begin{equation*}
	l_{\scriptscriptstyle nob}^{*}(t)=\begin{cases}
	L, & 0<Z_{\scriptscriptstyle nob}^{*}(t)<\tilde{y},\\
	\left(\frac{1-\delta}{\delta w}\right)^{-\frac{\delta(1-k)-1}{k}}(Z_{\scriptscriptstyle nob}^{*}(t))^{-\frac{1}{k}}, & \tilde{y}\leq Z_{\scriptscriptstyle nob}^{*}(t)\leq\hat{z}_{\scriptscriptstyle nob},\\
	\end{cases}
	\end{equation*}
	\begin{equation*}
	\pi_{\scriptscriptstyle nob}^{*}(t)\!=\!\begin{cases}
	\!\frac{\theta}{\sigma}\!\bigg[\!n_{2}(n_{2}\!-\!1)B_{\scriptscriptstyle 21,nob}(Z_{\scriptscriptstyle nob}^{*}(t))^{n_{2}\!-\!1}\! & \\
	\qquad\qquad\qquad\qquad\qquad\quad+\frac{\delta(1-k)}{(\delta(1-k)-\!1)^{2}}\frac{A_{1}}{\Gamma_{1}}(Z_{\scriptscriptstyle nob}^{*}(t))^{\frac{1}{\delta(1-k)\!-\!1}}\bigg], 
	&0\!<\!Z_{\scriptscriptstyle nob}^{*}(t)\!<\!\tilde{y},\\
	\!\frac{\theta}{\sigma}\!\bigg[\!n_{1}\!(n_{1}\!-\!1)\!B_{\scriptscriptstyle 12,\!nob}\!(Z_{\scriptscriptstyle nob}^{*}(t))\!^{n_{1}\!-\!1}\!+\!n_{2}(n_{2}\!-\!1)\!B_{\scriptscriptstyle 22,\!nob}\!(Z_{\scriptscriptstyle nob}^{*}(t))^{n_{2}\!-\!1} & \\
	\qquad\qquad\qquad\qquad\qquad\qquad\qquad\quad\quad +\frac{1-k}{k^{2}}\frac{A_{2}}{\Gamma_{2}}(Z_{\scriptscriptstyle nob}^{*}(t))^{-\frac{1}{k}}\!\bigg], & \tilde{y}\!\leq \!Z_{\scriptscriptstyle nob}^{*}\!(t)\!\leq\!\hat{z}_{\scriptscriptstyle nob},\\
	\end{cases}
	\end{equation*}
	and the optimal wealth process is
	\begin{equation*}
	X_{\scriptscriptstyle nob}^{*}(t)\!=\!\begin{cases}
	-n_{2}B_{\scriptscriptstyle 21, nob}(Z_{\scriptscriptstyle nob}^{*}(t))^{n_{2}-1}\!-\frac{\delta(1-k)}{\delta(1-k)-1}\frac{A_{1}}{\Gamma_{1}}(Z_{\scriptscriptstyle nob}^{*}(t))^{\frac{1}{\delta(1-k)-1}}& \\
	\qquad\qquad\qquad\qquad\qquad\qquad\qquad\qquad\qquad-\frac{w(\bar{L}-L)-d}{r},& 0<Z_{\scriptscriptstyle nob}^{*}(t)<\tilde{y},\\
	-n_{1}B_{\scriptscriptstyle 12,nob}(Z_{\scriptscriptstyle nob}^{*}(t))^{n_{1}-1}-n_{2}B_{\scriptscriptstyle 22,nob}(Z_{\scriptscriptstyle nob}^{*}(t))^{n_{2}-1} & \\
	\qquad\qquad\qquad\qquad\qquad\quad+\frac{1-k}{k}\frac{A_{2}}{\Gamma_{2}}(Z_{\scriptscriptstyle nob}^{*}(t))^{-\frac{1}{k}}\!-\frac{w\bar{L}-d}{r},&\tilde{y}\leq Z_{\scriptscriptstyle nob}^{*}(t)\leq\hat{z}_{\scriptscriptstyle nob}.
	\end{cases}
	\end{equation*}
	
	\item Case 2. $0<\hat{z}_{\scriptscriptstyle nob}<\tilde{y}$: the optimal consumption-portfolio-leisure strategy defined on the interval $0<Z_{\scriptscriptstyle nob}^{*}(t)\leq\hat{z}_{\scriptscriptstyle nob}$ is
	\begin{equation*}
	c_{\scriptscriptstyle nob}^{*}(t)=L^{-\frac{(1-k)(1-\delta)}{\delta(1-k)-1}}(Z_{\scriptscriptstyle nob}^{*}(t))^{\frac{1}{\delta(1-k)-1}},\qquad l_{\scriptscriptstyle nob}^{*}(t)=L,
	\end{equation*}
	\begin{equation*}
	\pi_{\scriptscriptstyle nob}^{*}(t)=	\frac{\theta}{\sigma}\left[n_{2}(n_{2}-1)B_{\scriptscriptstyle 2,nob}(Z_{\scriptscriptstyle nob}^{*}(t))^{n_{2}-1}+\frac{\delta(1-k)}{(\delta(1-k)-1)^{2}}\frac{A_{1}}{\Gamma_{1}}(Z_{\scriptscriptstyle nob}^{*}(t))^{\frac{1}{\delta(1-k)\!-1}}\right],
	\end{equation*}
	and the optimal wealth process is
	\begin{equation*}
	X_{\scriptscriptstyle nob}^{*}(t)=-n_{2}B_{\scriptscriptstyle 2,nob}(Z_{\scriptscriptstyle nob}^{*}(t))^{n_{2}-1}-\frac{A_{1}}{\Gamma_{1}}\frac{\delta(1-k)}{\delta(1-k)-1}(Z_{\scriptscriptstyle nob}^{*}(t))^{\frac{1}{\delta(1-k)-1}}-\frac{w(\bar{L}-L)}{r}.
	\end{equation*}
\end{itemize}

\end{document}